\newcommand\R{\mathbb{R}}
\newcommand\C{\mathbb{C}}
\newcommand\N{\mathbb{N}}
\newcommand\B{\mathbb{B}}
\newcommand\1{{1\!\!1}}
\newcommand\KK{{\mathrm{K}}}
\newcommand\vd{\delta}
\newcommand\eps{\varepsilon}
\newcommand\llvert{\lvert\!\!\lvert}
\newcommand\rrvert{\rvert\!\!\rvert}
\newtheorem{theorem}{Theorem}[section]
\newtheorem{lemma}[theorem]{Lemma}
\newtheorem{proposition}[theorem]{Proposition}
\newtheorem{definition}[theorem]{Definition}
\newtheorem{remark}[theorem]{Remark}
\numberwithin{equation}{section}
\begin{document}

\author{{\bf Dmitri Finkelshtein}\\
    {\small Department of Mathematics, Swansea University,}\\
    {\small Singleton~Park,~Swansea~SA2~8PP,~U.K.;}\\
    {\small Institute of Mathematics, Kiev, Ukraine}\\
    {\small d.l.finkelshtein@swansea.ac.uk}
\and
    {\bf Yuri Kondratiev}\\
    {\small Fakult\"at f\"ur Mathematik, Universit\"at Bielefeld,
            33615~Bielefeld,~Germany}\\
    {\small kondrat@mathematik.uni-bielefeld.de}
\and
    {\bf Oleksandr Kutoviy}\\
    {\small Department of Mathematics, MIT,}\\
    {\small 77~Massachusetts~Avenue~2-155, Cambridge, MA, USA;}\\
    {\small Fakult\"at f\"ur Mathematik, Universit\"at Bielefeld,
            33615~Bielefeld,~Germany}\\
            {\small kutovyi@mit.edu}
\and
    {\bf Maria Jo\~{a}o Oliveira}\\
    {\small Universidade Aberta, 1269-001 Lisbon, Portugal;}\\
    {\small CMAF, University of Lisbon, 1649-003 Lisbon, Portugal}\\
    {\small oliveira@cii.fc.ul.pt}}

\title{Dynamical Widom--Rowlinson model and~its~mesoscopic~limit}

\date{}
\maketitle

\begin{abstract}
We consider the non-equilibrium dynamics for the Widom--Rowlin\-son model
(without hard-core) in the continuum. The Lebowitz--Penrose-type scaling of
the dynamics is studied and the system of the corresponding kinetic equations
is derived. In the space-homogeneous case, the equilibrium points of this
system are described. Their structure corresponds to the dynamical phase
transition in the model. The bifurcation of the system is shown.
\end{abstract}

\newpage

\section{Introduction}

Critical behavior of complex systems in the continuum is one of the central
problems in statistical physics. For systems in $\R^d$, $d>1$, consisting of
particles of the same type there is, up to our knowledge, only one rigorous
mathematical analysis of this problem, the so-called LMP
(Lebowitz--Mazel--Presutti) models with Kac potentials, see
\cite[Chapter 10]{Pre2009} and the references therein. The case of particles
of different types has been more extensively studied. The simplest model was
proposed by Widom and Rowlinson \cite{WR1970} for a potential with hard-core.
In this model, there is an interaction only between particles of different
types. For large activity, the existence of phase transition for the model
in \cite{WR1970} was shown by Ruelle \cite{Rue1971}. A natural
modification of this model for the case of three or more different particle
types is the Potts model in the continuum. Within this context, Lebowitz and
Lieb \cite{LeLi1972} extended Ruelle's result to the multi-types case and
soft-core potentials. For a large class of potentials (with or without
soft-core), Georgii and H\"{a}ggstr\"{o}m \cite{GH1996} established the phase
transition. Further activity in this area concerns a mean-field theory for the
Potts model in the continuum and, in particular, for the Widom--Rowlinson
model, without hard-core, see \cite{GMRZ2006} for the most general case (that
is, two or more different types) and \cite{dMMPV2008,dMMPV2009} (for three or
more different types).

All these works deal with Gibbs equilibrium states of continuous particle
systems. Another approach to study Gibbs measures goes back to Glauber and
Dobrushin and it consists in the analysis of the stochastic dynamics
associated with these measures. In the continuous case, an analogue of the
Glauber dynamics is a spatial birth-and-death process whose intensities
imply the invariance of the dynamics with respect to a proper Gibbs
measure (the so-called detailed balance conditions). For continuous particle
systems of only one type, the corresponding non-equilibrium dynamics was
recently intensively studied, see e.g.~\cite{FKK2010,FKO2011a} and the
references therein. In this work we consider the corresponding Glauber-type
dynamics in the continuum, but for two different particle types. Here we use
the statistical Markov evolution rather than the dynamics in the sense of
trajectories. In other words, we study the dynamics in terms of states. This
can be done using the language of correlation functions corresponding to the
states or the language of the corresponding generating functionals.

We construct this dynamics for the Widom--Rowlinson model and study its
mesoscopic behavior under the so-called Lebowitz--Penrose scaling (see
\cite{Pre2009} and the references therein). For this purpose, we exploit a
technique based on the Ovsjannikov theorem, see e.g.~\cite{FKO2011a} and the
references therein. This allows us to derive rigorously the system of kinetic
equations for the dynamics, which critical behavior reflects the phase
transition phenomenon in the original microscopic dynamics. This scheme to
derive the kinetic equations for Markov evolutions in the continuum was
proposed in \cite{FKK2010a} and goes back to an approach well-known for the
Hamiltonian dynamics, see \cite{Spo1980}. Another approach is based on
minimizing some energy functionals, see e.g.~\cite{CCELM20099,CCELM2005}.

In Section~\ref{Section2} we briefly recall some notions of the analysis on
one- and two-types configuration spaces. A more
detailed explanation can be found in e.g.~\cite{AKR1998a,KK2002} and
\cite{Fin2009,FKO2011c}, respectively. We introduce and study a generalization
of generating functionals for two-types spaces as well. In
Section~\ref{Section3} we consider the dynamical Widom--Rowlinson model. We
prove that the corresponding time evolution in terms of entire generating
functionals exist in a scale of Banach spaces, for a finite time interval
(Theorem~\ref{Theorem1}). Section~\ref{Section4} is devoted to the mesoscopic
scaling in the Lebowitz--Penrose sense. We prove that the rescaled
evolution of entire generating functionals converges strongly to the limiting
time evolution (Theorem~\ref{Theorem2}). The latter preserves exponential
functionals (Theorem~\ref{reducedtokinetic}), which corresponds to the
propagation of the chaos principle for correlation functions,
cf.~e.g.~\cite{FKK2011a}. This allows to derive a system of kinetic
equations~\eqref{kinetic}, which are non-linear and non-local (they include
convolutions of functions on $\R^d$, cf.~e.g.~\cite{FKK2010b}). We also prove
the existence and uniqueness of the solutions to the aforementioned system of
equations (Theorem~\ref{Theorem3}). In Section~\ref{Section5} we consider
the same system but in the space-homogeneous case. Even this simplest case
reflects the dynamical phase transition, which is expected to occur in the
original non-equilibrium dynamics. Namely, in Theorem~\ref{th-homog-kinetic}
we prove that there is a critical value below of which the system of kinetic
equations will have a unique stable equilibrium point. For higher
values, the system will have three equilibrium points: two of them are
stable (they correspond to the pure phases of the reversible Gibbs measure)
and the third one is unstable (it corresponds to the symmetric mixed phase).
It is worth noting that at the critical point the system has a unique mixed
equilibrium point (saddle-node). Therefore, there is a bifurcation in the
system of the kinetic equations corresponding to the dynamical
Widom--Rowlinson model without hard-core. We note that the existence of such a
critical value for the original (equilibrium) model was an open problem, see
e.g.~\cite[Remark~1.3]{GH1996}.

\section{General Framework}\label{Section2}

This section begins by briefly recalling the concepts and results of
combinatorial harmonic analysis on one- and two-types configuration spaces
needed throughout this work. For a detailed explanation see
e.g.~\cite{AKR1998a,Fin2009,FKO2011c,KK2002} and the references cited therein.

\subsection{One-component configuration spaces}\label{Subsection21}

The configuration space $\Gamma:=\Gamma_{\R}$ over $\R^d$, $d\in\N$, is
defined as the set of all locally finite subsets (configurations) of $\R^d$,
\[
\Gamma:=\left\{\gamma\subset\R^d : |\gamma\cap\Lambda| <\infty \hbox{ for
every compact } \Lambda\subset \R^d \right\} ,
\]
where $\left| \cdot \right|$ denotes the cardinality of a set. We will identify a configuration $\gamma \in \Gamma $ with the non-negative Radon measure
$\sum_{x\in \gamma }\delta_x$ on the Borel $\sigma$-algebra
$\mathcal{B}(\mathbb{R}^d)$, where $\delta_x$ is the Dirac measure with mass $1$
at $x$ and $\sum_{x\in \emptyset}\delta_x:=0$. This identification allows to
endow $\Gamma$ with the vague topology and the corresponding Borel $\sigma$-algebra
$\mathcal{B}(\Gamma)$.

Let $\rho>0$ be a locally integrable function on $\R^d$. The Poisson measure
$\pi_\sigma$ with intensity the Radon measure
\[
d\sigma(x)=\rho(x)dx
\]
is defined as the probability measure
on $(\Gamma, \mathcal{B}(\Gamma))$ with Laplace transform given by
\[
\int_\Gamma d\pi_\sigma(\gamma)\,\exp \biggl( \sum_{x\in \gamma }\varphi (x)\biggr)
=\exp \biggl(\int_{\R^d}dx\,\rho(x) \left( e^{\varphi (x)}-1\right)\biggr)
\]
for all smooth functions $\varphi$ on $\R^d$ with compact support. For the case $\rho\equiv1$, we will omit the index $\pi:=\pi_{dx}$.

For any $n\in\N_0:=\N\cup\{0\}$, let
\[
\Gamma^{(n)}:= \{ \gamma\in \Gamma: \vert \gamma\vert = n\},\ n\in \N,\quad \Gamma^{(0)} := \{\emptyset\}.
\]
Clearly, each $\Gamma^{(n)}$, $n\in\N$, can be identified with the
symmetrization of the set $ \{(x_1,...,x_n)\in (\R^d)^n: x_i\not=
x_j \hbox{ if } i\not= j\}$ under the permutation group over
$\{1,...,n\}$, which induces a natural (metrizable) topology on
$\Gamma^{(n)}$ and the corresponding Borel $\sigma$-algebra
$\mathcal{B}(\Gamma^{(n)})$. Moreover, for the product measure
$\sigma^{\otimes n}$ fixed on $(\R^d)^n$, this identification yields a
measure $\sigma^{(n)}$ on $(\Gamma^{(n)},\mathcal{B}(\Gamma^{(n)}))$. This leads to
the space of finite configurations
\[
\Gamma_0 := \bigsqcup_{n=0}^\infty \Gamma^{(n)}
\]
endowed with the topology of disjoint union of topological spaces and the
corresponding Borel $\sigma$-algebra $\mathcal{B}(\Gamma_0)$, and to the
so-called Lebesgue--Poisson measure on $(\Gamma_0,\mathcal{B}(\Gamma_0))$,
\[
\lambda_\sigma:=\sum_{n=0}^\infty \frac{1}{n!} \sigma^{(n)},\quad \sigma^{(0)}(\{\emptyset\}):=1.%\label{LPm-def}
\]
We set $\lambda:=\lambda_{dx}$.

\subsection{Two-component configuration spaces}\label{Subsection22}

The previous definitions can be naturally extended to $n$-component configuration
spaces. Having in mind our goals, we just present the extension for $n=2$.

Given two copies of the space $\Gamma$, denoted by $\Gamma^+$ and $\Gamma^-$,
let
\[
\Gamma^2:=\left\{(\gamma^+,\gamma^-)\in\Gamma^+\times\Gamma^-:
\gamma^+\cap\gamma^-=\emptyset\right\}.
\]

Similarly, given two copies of the space $\Gamma_0$, denoted by $\Gamma_0^+$ and
$\Gamma_0^-$, we consider the space
\[
\Gamma^2_0:=\left\{(\eta^+,\eta^-)\in\Gamma_0^+\times\Gamma_0^-:
\eta^+\cap\eta^-=\emptyset\right\}.
\]

We endow $\Gamma^2$ and $\Gamma_0^2$ with the topology induced by the
product of the topological spaces $\Gamma^+\times\Gamma^-$ and
$\Gamma_0^+\times\Gamma_0^-$, respectively, and with the corresponding
Borel $\sigma$-algebras, denoted by $\mathcal{B}(\Gamma^2)$ and
$\mathcal{B}(\Gamma^2_0)$.

We consider the space
$B_\mathrm{ls}(\Gamma_0^2)$ of all complex-valued
$\mathcal{B}(\Gamma^2_0)$-measurable functions $G$ with local support, i.e.,
$G\!\!\upharpoonright _{\Gamma_0^2\backslash\left(\Gamma_{\Lambda}^+\times\Gamma_{\Lambda}^-\right)}\equiv 0$ for some bounded set
$\Lambda\in\mathcal{B}(\R^d)$, where
$\Gamma_{\Lambda}^\pm:=\{\eta\in\Gamma^\pm:\eta\subset\Lambda\}$. Given a
$G\in B_\mathrm{ls}(\Gamma_0^2)$, the $\KK$-transform of $G$ is the mapping
$\KK G:\Gamma^2\to\C$ defined at each $(\gamma^+,\gamma^-)\in\Gamma^2$ by
\begin{equation}
(\KK G)(\gamma^+,\gamma^-):= \sum_{\substack{\eta^+\subset\gamma^+ \\
\vert\eta^+\vert < \infty}} \sum_{\substack{\eta^-\subset\gamma^-\\
\vert\eta^-\vert < \infty}} G(\eta^+,\eta^-). \label{M1}
\end{equation}
Note that for every $G\in B_\mathrm{ls}(\Gamma_0^2)$ the sum in \eqref{M1} has
only a finite number of summands different from zero and thus $\KK G$ is a
well-defined function on $\Gamma^2$. Moreover,
$\KK :B_\mathrm{ls}(\Gamma_0^2)\to\KK(B_\mathrm{ls}(\Gamma_0^2))$ is a
positivity preserving linear isomorphism whose inverse mapping is defined by
\[
\left(\KK ^{-1}F\right) (\eta^+,\eta^-):=
\sum_{\xi^+\subset\eta^+}\sum_{\xi^-\subset\eta^-}
(-1)^{|\eta^+\backslash\xi^+|+|\eta^-\backslash\xi^-|}F(\xi^+,\xi^-),\quad (\eta^+,\eta^-)\in\Gamma_0^2.%\label{inverseK2}
\]
%for $(\eta^+,\eta^-)\in\Gamma_0^2$.

The $\KK$-transform might be extended pointwisely to a wider class of functions. Among them we will distinguish
the so-called Lebesgue--Poisson exponentials $e_\lambda(f^+,f^-)$ defined for
complex-valued $\mathcal{B}(\R^d)$-meas\-urable functions $f^+,f^-$ by
\begin{equation}\label{ela2}
e_\lambda (f^+,f^-;\eta^+,\eta^-):=e_\lambda (f^+,\eta^+)e_\lambda (f^-,\eta^-),\quad (\eta^+,\eta^-)\in\Gamma_0^2,
\end{equation}
where
\begin{equation*}%\label{elapm}
e_\lambda (f^\pm,\eta^\pm):=\prod_{x\in \eta^\pm}f^\pm(x),\ \eta^\pm \in
\Gamma^\pm_0\!\setminus\!\{\emptyset\},\quad  e_\lambda(f^\pm,\emptyset ):=1.
\end{equation*}
Indeed, for any $f^+,f^-$ described as before, having in addition compact support, for
all $(\gamma^+,\gamma^-)\in \Gamma^2$
\begin{equation}
\left(\KK e_\lambda (f^+,f^-)\right) (\gamma^+,\gamma^-)=
\prod_{x\in \gamma^+}\bigl(1+f^+(x)\bigr)\prod_{y\in \gamma^-}\bigl(1+f^-(y)\bigr).\label{Kcog}
\end{equation}
The special role of functions \eqref{ela2} is partially due to the fact that the right-hand side of \eqref{Kcog} coincides with the integrand functions of
generating functionals (Subsection \ref{Subsection23} below).

Let now $\mathcal{M}_{\mathrm{fm}}^1(\Gamma^2)$ be the set of all
probability measures $\mu$ on $(\Gamma^2,\mathcal{B}(\Gamma^2))$ with
finite local moments of all orders, i.e., for all $n\in\N$ and all
bounded sets $\Lambda\in\mathcal{B}(\R^d)$
\[
\int_{\Gamma^2} d\mu(\gamma^+,\gamma^-)\, |\gamma^+\cap\Lambda|^n|\gamma^-\cap\Lambda|^n<\infty,%\label{Dima5}
\]
and let $B_{\mathrm{bs}}(\Gamma_0^2)$ be the set of all bounded functions
$G\in B_\mathrm{ls}(\Gamma_0^2)$ such that $G\!\!\upharpoonright _{\Gamma_0^2\backslash\left(\bigsqcup_{n=0}^{N^+}\Gamma_{\Lambda^+}^{(n)}\times\bigsqcup_{n=0}^{N^-}\Gamma_{\Lambda^-}^{(n)}\right)}\equiv 0$ for some $N^+,N^-\in\N_0$ and for some
bounded Borel sets $\Lambda^+,\Lambda^-\subset\R^d$. Here, for $k\in\N_0$ and
for bounded sets $\Lambda^\pm\in\mathcal{B}(\R^d)$,
$\Gamma_{\Lambda^\pm}^{(k)}:=\{\eta\in\Gamma^\pm_{\Lambda^\pm}:|\eta|=k\}$. Given a
$\mu\in\mathcal{M}_{\mathrm{fm}}^1(\Gamma^2)$, the so-called
correlation measure $\rho_\mu$ corresponding to $\mu$ is a measure
on $(\Gamma_0^2,\mathcal{B}(\Gamma_0^2))$ defined for all $G\in
B_{\mathrm{bs}}(\Gamma_0^2)$ by
\begin{equation}
\int_{\Gamma_0^2}d\rho_\mu(\eta^+,\eta^-)\,G(\eta^+,\eta^-)=\int_{\Gamma^2} d\mu(\gamma^+,\gamma^-)\,
\left(\KK G\right) (\gamma^+,\gamma^-).  \label{Eq2.16}
\end{equation}
Note that under these assumptions $\KK \left|G\right|$ is $\mu$-integrable,
and thus \eqref{Eq2.16} is well-defined. In terms of correlation measures, this
shows, in particular, that
$B_{\mathrm{bs}}(\Gamma_0^2)\subset L^1(\Gamma_0^2,\rho_\mu)$.\footnote{Throughout this work all $L^p$-spaces, $p\geq 1$, consist of complex-valued functions.}
Actually, $B_{\mathrm{bs}}(\Gamma_0^2)$ is dense in $L^1(\Gamma_0^2,\rho_\mu)$.
Moreover, still by \eqref{Eq2.16}, on $B_{\mathrm{bs}}(\Gamma_0^2)$ the
inequality
$\Vert \KK G\Vert_{L^1(\Gamma^2,\mu)}\leq \Vert G\Vert_{L^1(\Gamma_0^2,\rho_\mu)}$
holds, allowing an extension of the $\KK$-transform to a bounded operator
$\KK:L^1(\Gamma_0^2,\rho_\mu)\to L^1(\Gamma^2,\mu)$ in such a way that
equality \eqref{Eq2.16} still holds for any $G\in
L^1(\Gamma_0^2,\rho_\mu)$. For the extended operator, the explicit form
\eqref{M1} still holds, now $\mu$-a.e. In particular, for functions
$f^+,f^-$ such that $e_\lambda (f^+,f^-)\in L^1(\Gamma_0^2,\rho_\mu)$ equality
\eqref{Kcog} still holds, but only for
$\mu$-a.a.~$(\gamma^+,\gamma^-)\in\Gamma^2$.

Let us now consider two measures on $\R^d$, in general different,
$d\sigma^\pm=\rho^\pm dx$, both defined as above.
The Lebesgue--Poisson product measure $\lambda^2_{\sigma^+,\sigma^-}:=\lambda_{\sigma^+}\otimes\lambda_{\sigma^-}$ on
$(\Gamma_0^2,\mathcal{B}(\Gamma_0^2))$ is the correlation measure
corresponding to the Poisson product measure $\pi^2_{\sigma^+,\sigma^-}:=\pi_{\sigma^+}\otimes\pi_{\sigma^-}$ on
$(\Gamma^2,\mathcal{B}(\Gamma^2))$. Observe that a priori $\lambda^2_{\sigma^+,\sigma^-}$ is a measure defined on
$(\Gamma_0\times\Gamma_0,\mathcal{B}(\Gamma_0)\otimes\mathcal{B}(\Gamma_0))$
and $\pi^2_{\sigma^+,\sigma^-}$ is a measure defined on $(\Gamma\times\Gamma,\mathcal{B}(\Gamma)\otimes\mathcal{B}(\Gamma))$. It can
actually be shown that $\Gamma_0^2=(\Gamma_0\times\Gamma_0)\setminus\{(\eta,\xi): \eta\cap\xi\not=\emptyset\}$ has full $\lambda^2_{\sigma^+,\sigma^-}$-measure and
$\Gamma^2=(\Gamma\times\Gamma)\setminus\{(\gamma,\gamma'): \gamma\cap\gamma'\not=\emptyset\}$ has full $\pi^2_{\sigma^+,\sigma^-}$-measure, cf.~\cite{Fin2009,Kun1999}.

It can also be shown that $e_\lambda(f^+,f^-)\in L^p(\Gamma_0^2, \lambda^2_{\sigma^+,\sigma^-})$ whenever $f^\pm\in L^p(\R^d,\sigma^\pm)$ for some $p\geq 1$, and,
moreover,
\[
\Vert e_\lambda(f^+,f^-)\Vert^p_{L^p(\Gamma_0^2,\lambda^2_{\sigma^+,\sigma^-})}
=\exp(\Vert f^+\Vert^p_{L^p(\R^d,\sigma^+)}+\Vert f^-\Vert^p_{L^p(\R^d,\sigma^-)}).
\]
In particular, for $p=1$, one additionally has, for all
$f^\pm\in L^1(\R^d,\sigma^\pm)$,
\begin{multline}
\int_{\Gamma_0^2}d\lambda^2_{\sigma^+,\sigma^-}(\eta^+,\eta^-)\, e_\lambda(f^+,f^-;\eta^+,\eta^-)
\\= \exp\biggl(\int_{\R^d}dx\left(\rho^+(x)f^+(x)+\rho^-(x)f^-(x)\right)\biggr).\label{meanLP}
\end{multline}

In the sequel we set
\begin{equation*}%\label{pi2la2}
  \lambda^2:=\lambda^2_{dx,dx}, \qquad \pi^2:=\pi^2_{dx,dx}.
\end{equation*}

\subsection{Bogoliubov generating functionals}\label{Subsection23}

The notion of Bogoliubov generating functional corresponding to a probability
measure on $(\Gamma,\mathcal{B}(\Gamma))$ \cite{KKO2006} naturally extends
to probability measures defined on a multicomponent space
$(\Gamma^n,\mathcal{B}(\Gamma^n))$. For simplicity, we just
present the extension for $n=2$. Of course, a similar procedure is used for
$n>2$, but with a more cumbersome notation.

\begin{definition}\label{Definition1}
Given a probability measure $\mu$ on $(\Gamma^2, \mathcal{B} (\Gamma^2))$, the
Bogoliubov generating functional (shortly GF) $B_\mu$ corresponding to $\mu$
is the functional defined at each pair $\theta^+$, $\theta^-$ of
complex-valued $\mathcal{B}(\R^d)$-measurable functions by
\begin{equation*}
B_\mu(\theta^+,\theta^-) :=\int_{\Gamma^2} d\mu(\gamma^+,\gamma^-)\,
\prod_{x\in\gamma^+}\bigl(1+\theta^+(x)\bigr)\prod_{y\in\gamma^-}\bigl(1+\theta^-(y)\bigr),%\label{Dima2}
\end{equation*}
provided the right-hand side exists.
\end{definition}

Clearly, for an arbitrary probability measure $\mu$, $B_\mu$ is always defined at least at the pair $(0,0)$. However, the whole domain of $B_\mu$ depends on properties of the underlying measure $\mu$. For
instance, probability measures $\mu$ for which the GF is well-defined on
multiples of indicator functions $\1_\Lambda$ of bounded Borel sets
$\Lambda$, necessarily have finite local exponential moments, i.e.,
\begin{equation}\label{expint}
\int_{\Gamma^2} d\mu (\gamma^+,\gamma^-)\, e^{\alpha(|\gamma^+\cap\Lambda |+|\gamma^-\cap\Lambda |)}<\infty,\quad \mathrm{for\,\,all}\,\,\alpha>0.
\end{equation}
The converse is also true. In fact, for all $\alpha>0$ and for all
$\Lambda$ described as before we have that the left-hand side of \eqref{expint} is equal to
\[
\int_{\Gamma^2}d\mu (\gamma^+,\gamma^-)\,\prod_{x\in \gamma^+\cup\gamma^-}e^{\alpha\1_\Lambda (x)}\\
=B_\mu\bigl((e^\alpha-1)\1_\Lambda,(e^\alpha-1)\1_\Lambda\bigr) < \infty.
\]
According to the previous subsection, this implies that to such a measure
$\mu$ one may associate the correlation measure $\rho_\mu$, leading to a
description of the functional $B_\mu$ in terms of the measure $\rho_\mu$:
\begin{align*}
B_\mu(\theta^+,\theta^-)
&= \int_{\Gamma^2} d\mu(\gamma^+,\gamma^-)\,\left(\KK e_\lambda (\theta^+,\theta^-)\right) (\gamma^+\,\gamma^-)\\
&= \int_{\Gamma_0^2}d\rho_\mu(\eta^+,\eta^-)\, e_\lambda(\theta^+,\theta^-;\eta^+,\eta^-),
\end{align*}
or in terms of the so-called correlation function
\[
k_\mu:=\dfrac{d\rho_\mu}{d\lambda^2}
\]
corresponding to the measure $\mu$, provided $\rho_\mu$ is absolutely continuous
with respect to the product measure $\lambda^2$:
\begin{equation}
B_\mu(\theta^+,\theta^-)=\int_{\Gamma_0^2}d\lambda^2(\eta^+,\eta^-)\,
e_\lambda (\theta^+,\theta^-;\eta^+,\eta^-)k_\mu(\eta^+,\eta^-).\label{BF_via_cf}
\end{equation}

Throughout this work we will consider GF which are entire on the whole
$L^1(\R^d,dx)\times L^1(\R^d,dx)$ space with
the norm
\[
\|(\theta^+,\theta^-)\|_{L^1\times L^1}= \llvert\theta^+\rrvert_1+\llvert\theta^-\rrvert_1.
\]
Here and below we use the notation
\begin{equation*}%\label{L1norm}
 \llvert\theta\rrvert_1:=\|\theta\|_{L^1} , \qquad \theta\in L^1 := L^1(\R^d,dx).
\end{equation*}
We recall that a functional $A:L^1\times L^1\to\C$ is entire on
$L^1\times L^1$ whenever $A$ is locally bounded and for all
$\theta_0^\pm,\theta^\pm\in L^1$ the mapping
\[
\C^2\ni (z^+,z^-)\mapsto A(\theta_0^++ z^+\theta^+,\theta_0^-+ z^-\theta^-)\in\C
\]
is entire \cite{Kr82}, which is equivalent to entireness on $L^1$ of $A$ on each component.
Thus, at each pair $\theta_0^+,\theta_0^-\in L^1$, every entire functional $A$
on $L^1\times L^1$ has a representation in terms of its Taylor expansion,
\begin{multline*}
A(\theta_0^++ z^+\theta^+,\theta_0^-+ z^-\theta^-)\\=\sum_{n,m=0}^\infty \frac{(z^+)^n(z^-)^m}{n!m!}
d^{(n,m)}A(\theta_0^+,\theta_0^-;\underbrace{\theta^+,\ldots,\theta^+}_{n\ \text{times}},\underbrace{\theta^-,\ldots,\theta^-}_{m\ \text{times}}),
\end{multline*}
$z^\pm\in\C$, $\theta^\pm\in L^1$. Extending the kernel theorem
\cite[Theorem 5]{KKO2006} to the two-component case, each
differential $d^{(n,m)}A(\theta_0^+,\theta_0^-;\cdot)$ is then defined by a kernel
$\vd^{(n,m)}A(\theta_0^+,\theta_0^-;\cdot)\in L^\infty((\R^d)^n\times(\R^d)^m)$,
which is symmetric in the first $n$ coordinates and in the last $m$
coordinates. More precisely,
\begin{align}
&d^{(n,m)}A(\theta _0^+,\theta_0^-;\theta _1^+,\ldots,\theta_n^+,\theta^-_1,\ldots,\theta^-_m)\nonumber\\
&=\frac{\partial
^{n+m}}{\partial
z_1^+...\partial z_n^+\partial z_1^-...\partial z_m^-}
A\biggl( \theta _0^++\sum_{i=1}^nz_i^+\theta _i^+,\theta _0^-+\sum_{j=1}^mz_j^-\theta _j^- \biggr)
\biggr\vert_{z_1^+=...=z_n^+=z_1^-=...=z^-_ m=0} \nonumber\\%\label{taquase10}
&=\int_{(\R^d)^n\times(\R^d)^m}dx_1\ldots dx_ndy_1\ldots dy_m\,\nonumber\\
&\qquad\qquad\times\vd^{(n,m)}A(\theta_0^+,\theta_0^-;x_1,\ldots,x_n,y_1,\ldots,y_m)\prod_{i=1}^n\theta_i^+(x_i)
\prod_{j=1}^m\theta_j^-(y_j),\nonumber
\end{align}
for all $\theta _1^+,...,\theta _n^+,\theta^-_1,...,\theta^-_m\in L^1$, $n,m\in\N$.
Moreover, the operator norm of the bounded $(n+m)$-linear functional
(on $L^1\times L^1$) $d^{(n,m)}A(\theta_0^+,\theta_0^-;\cdot)$ is equal to
$\left\| \vd^{(n,m)}A(\theta_0^+,\theta_0^-;\cdot)\right\|_{L^\infty((\R^d)^n\times(\R^d)^m)}$
and for all $r>0$ one has
\begin{multline*}
\left\|\vd^{(n,m)}{A}(\theta_0^+,\theta_0^-;\cdot)\right\|_{L^\infty((\R^d)^n\times(\R^d)^m)}\\
\leq  n!m!
\left(\frac{e}{r}\right)^{n+m} \sup_{\llvert\theta^\pm\rrvert_1\leq r}
|A(\theta_0^++\theta^+,\theta_0^-+\theta^-)|.
\end{multline*}

For the cases where either $n=0$ or $m=0$, the entireness property on $L^1$ of each
pair of functionals $A(\cdot,\theta_0^-)$, $A(\theta_0^+,\cdot)$ implies
by a direct application of \cite[Theorem 5]{KKO2006} that the corresponding
differentials are defined by a symmetric kernel
$\vd^n{A}(\theta_0^+,\theta_0^-;\cdot,\emptyset)\in L^\infty((\R^d)^n)$,
$\vd^m{A}(\theta_0^+,\theta_0^-;\emptyset,\cdot)\in L^\infty((\R^d)^m)$, respectively,
and for each $r>0$ one has
\begin{align*}
&\left\| \vd A(\theta_0^+,\theta_0^-;\cdot,\emptyset)\right\|_{L^\infty} \leq \frac{1}{r} \sup_{\llvert\theta^+\rrvert_1 \leq r}
|A(\theta_0^++\theta^+,\theta_0^-)|,\\
&\left\| \vd A(\theta_0^+,\theta_0^-;\emptyset,\cdot)\right\|_{L^\infty} \leq \frac{1}{r} \sup_{\llvert\theta^-\rrvert_1 \leq r}
|A(\theta_0^+,\theta_0^-+\theta^-)|
\end{align*}
and, for $n,m\geq 2$,
\begin{align*}
\left\|\vd^n{A}(\theta_0^+,\theta_0^-;\cdot,\emptyset)\right\|_{L^\infty((\R^d)^n)} &\leq n!
\left(\frac{e}{r}\right)^n \sup_{\llvert\theta^+\rrvert_1 \leq r}
|A(\theta_0^++\theta^+,\theta_0^+)|,\\
\left\|\vd^m{A}(\theta_0^+,\theta_0^-;\emptyset,\cdot)\right\|_{L^\infty((\R^d)^m)} &\leq m!
\left(\frac{e}{r}\right)^m \sup_{\llvert\theta^-\rrvert_1 \leq r}
|A(\theta_0^+,\theta_0^-+\theta^-)|.
\end{align*}

All these considerations hold, in particular, for $A$ being an entire GF $B_\mu$
on $L^1\times L^1$ corresponding to some probability measure $\mu$ on
$\Gamma^2$ which is locally absolutely continuous with respect to $\pi^2$,
that is, for all disjoint bounded Borel sets $\Lambda^+,\Lambda^-\subset\R^d$
the image measure $\mu\circ p_{\Lambda^+,\Lambda^-}^{-1}$ of $\mu$ under the
projection $p_{\Lambda^+,\Lambda^-}(\gamma^+,\gamma^-):=(\gamma^+\cap\Lambda^+,\gamma^-\cap\Lambda^-)\in\Gamma_{\Lambda^+}\times\Gamma_{\Lambda^-}$ is absolutely
continuous with respect to the product of image measures
$(\pi\circ p_{\Lambda^+}^{-1})\times(\pi\circ p_{\Lambda^-}^{-1})$,
$p_{\Lambda^\pm}(\gamma^\pm):=\gamma^\pm\cap\Lambda^\pm$. In this case, the
correlation function $k_\mu$ exists and it is given for
$\lambda^2$-a.a.~$(\eta^+,\eta^-)\in\Gamma_0^2$ by
\begin{align*}
k_\mu(\eta^+,\eta^-)&=\vd^{(n,m)} B_\mu(0,0;\eta^+,\eta^-),\\
k_\mu(\eta^+,\emptyset)&=\vd^n B_\mu(0,0;\eta^+,\emptyset),\\
k_\mu(\emptyset,\eta^-)&=\vd^m B_\mu(0,0;\emptyset,\eta^-),
\end{align*}
for $|\eta^+|=n$, $|\eta^-|=m$, $n,m\in\N$ \cite[Proposition 9]{KKO2006}.
As a consequence, similarly to \cite[Proposition 11]{KKO2006} one finds
\begin{multline}
\delta^{(|\eta^+|,|\eta^-|)}B_\mu(\theta^+,\theta^-;\eta^+,\eta^-)\\=
\int_{\Gamma _0^2}d\lambda^2(\xi^+,\xi^-)\,k_\mu(\xi^+\cup \eta^+,\xi^-\cup\eta^-)
e_\lambda(\theta^+,\theta^-;\xi^+,\xi^-)\label{Equation3}
\end{multline}
for $\lambda^2$-a.a.~$(\eta^+,\eta^-)\in\Gamma_0^2$, which gives an
alternative description of the kernels
$\vd^{(n,m)}B_\mu(\theta_0^+,\theta_0^-;\cdot)$, $n,m\in\N_0$. Moreover,
the previous estimates for the norms of the kernels lead to the so-called
Ruelle generalized bound \cite[Proposition 16]{KKO2006}, that is, for any
$0\leq\eps\leq 1$ and any $r>0$ there is a constant $C\geq 0$ depending on $r$
such that
\[
k_\mu(\eta^+,\eta^-)\leq C\left(\lvert\eta^+\rvert!\lvert\eta^-\rvert!\right)^{1-\eps}
\left( \frac er\right) ^{\lvert\eta^+\rvert+\lvert\eta^-\rvert},\quad
\lambda^2 \mathrm{-a.a.\ }(\eta^+,\eta^-)\in\Gamma_0^2.
\]
Similarly to the one-component case \cite[Proposition 23]{KKO2006}, the latter
motivates for each $\alpha>0$ the definition of the Banach space $\mathcal{E}_\alpha$ of
all entire functionals $B$ on $L^1\times L^1$ such that
\begin{equation}\label{norminscale}
\left\| B\right\| _\alpha :=\sup_{\theta^+,\theta^- \in L^1}
\left( \left|B(\theta^+,\theta^-)\right| e^{-\frac{1}{\alpha} \left(\llvert \theta^+ \rrvert _1 +\llvert \theta^- \rrvert _1 \right)}\right)<\infty.
\end{equation}
Observe that this class of Banach spaces has the property that,
for each $\alpha_0>0$, the family
$\{\mathcal{E}_\alpha: 0<\alpha\leq\alpha_0\}$ is a scale of Banach spaces,
that is,
\[
\mathcal{E}_{\alpha''}\subseteq \mathcal{E}_{\alpha'},\quad
\|\cdot\|_{\alpha'}\leq \|\cdot\|_{\alpha''}
\]
for any pair $\alpha'$, $\alpha''$ such that $0<\alpha'< \alpha''\leq\alpha_0$.

Of course these considerations hold, more generally, for any entire
functional $B$ on $L^1\times L^1$ of the form
\[
B(\theta^+,\theta^-)=\int_{\Gamma_0^2}d\lambda^2(\eta^+,\eta^-)\,
e_\lambda(\theta^+,\theta^-;\eta^+,\eta^-)k(\eta^+,\eta^-),\quad k:\Gamma_0^2\to\left[0,+\infty\right).
\]

\section{The Widom--Rowlinson model}\label{Section3}

The dynamical Widom--Rowlinson model is an example of a birth-and-death model
of two different particle types, let us say $+$ and $-$, where, at each random
moment of time, $+$ and $-$ particles randomly disappear according to a death
rate identically equal to $m>0$, while new $\pm$ particles
randomly appear according to a birth rate which only depends on the
configuration of the whole $\mp$-system at that time. The influence of the
$\pm$-system of particles is none in this process. More precisely,
let $\phi:\R^d\to\R\cup\{+\infty\}$ be a pair potential, that is, a
$\mathcal{B}(\R^d)$-measurable function such that $\phi(-x)=\phi(x)\in \R$ for
all $x\in \R^d\setminus\{0\}$, which we will assume to be non-negative and
integrable. Given a configuration $(\gamma^+,\gamma^-)\in\Gamma^2$, the birth
rate of a new $+$ particle at a site $x\in\R^d\setminus(\gamma^+\cup\gamma^-)$
is given by $\exp(-E(x,\gamma^-))$, where $E(x,\gamma^-)$ is a relative energy
of interaction between a particle located at $x$ and the configuration
$\gamma^-$ defined by
\[
E(x,\gamma^-):=\sum_{y\in \gamma^-}\phi (x-y)\in\left[0,+\infty\right].
\]
Similarly, the birth rate of a new $-$ particle at a site
$y\in\R^d\setminus(\gamma^+\cup\gamma^-)$ is given by $\exp(-E(y,\gamma^+))$.

Informally, the behavior of such an infinite particle system is described by a pre-generator\footnote{Here and below, for simplicity of notation, we have just written $x$, $y$
instead of $\{x\}$, $\{y\}$, respectively.}
\begin{align*}
(LF)(\gamma^+,\gamma^-):=&\,m\sum_{x\in \gamma^+}\left(F(\gamma^+\setminus x,\gamma^-) -
F(\gamma^+,\gamma^-)\right)\\
&+m\sum_{y\in \gamma^-}\left(F(\gamma^+,\gamma^-\setminus y) - F(\gamma^+,\gamma^-)\right)\\
&+ z\int_{\R^d} dx\,e^{-E(x,\gamma^-)} \left(F(\gamma^+\cup x,\gamma^-) - F(\gamma^+,\gamma^-)\right)\\
&+z\int_{\R^d} dy\,e^{-E(y,\gamma^+)} \left(F(\gamma^+,\gamma^-\cup y) - F(\gamma^+,\gamma^-)\right),%\label{Vl41}
\end{align*}
where $z>0$ is an activity parameter. Of course, the previous expression will
be well-defined under proper conditions on the function $F$ \cite{FKO2011c}.

In applications, properties of the time evolution of an infinite particle
system, like the described one, in terms of states, that is, probability
measures on $\Gamma^2$, are a subject of interest. Informally, such a time
evolution is given by the so-called Fokker--Planck equation
\begin{equation}
\frac{d\mu_t}{dt}=L^*\mu_t, \quad
{\mu_t}_{|t=0}=\mu_0\label{FokkerPlanck},
\end{equation}
where $L^*$ is the dual operator of $L$. As explained in \cite{FKO2011c},
technically the use of definition \eqref{Eq2.16} allows an alternative
approach to the study of \eqref{FokkerPlanck} through the corresponding
correlation functions $k_t:=k_{\mu_t}$, $t\geq0$, provided they exist. This
leads to the Cauchy problem
\begin{equation*}
\frac \partial {\partial t}k_t=\widehat L^*k_t,\quad
{k_t}_{|t=0}=k_0,%\label{Vl31}
\end{equation*}
where $k_0$ is the correlation function corresponding to the initial
distribution $\mu_0$ of the system and $\widehat L^*$ is the dual operator of
$\widehat L:=\KK^{-1}L\KK$ in the sense
\begin{multline}
\int_{\Gamma_0^2}d\lambda^2(\eta^+,\eta^-)\,(\widehat LG)(\eta^+,\eta^-)
k(\eta^+,\eta^-)\\=\int_{\Gamma_0^2}d\lambda^2(\eta^+,\eta^-)\,G(\eta^+,\eta^-)
(\widehat L^*k)(\eta^+,\eta^-).\label{duality}
\end{multline}
To define $\widehat{L}$ and $\widehat{L}^*$ with a full rigor, see
\cite{FKO2011c}.

Now we would like to rewrite the dynamics \eqref{FokkerPlanck} in terms of the
GF $B_t$ corresponding to $\mu_t$. Through the representation
\eqref{BF_via_cf}, this can be done, informally, by
\begin{align}
\label{obtevol}\frac \partial {\partial t}B_t(\theta^+,\theta^-) &=\int_{\Gamma _0^2}d\lambda^2(\eta^+\eta^-)\,e_\lambda(\theta^+,\theta^-;\eta^+,\eta^-)\Bigl( \frac \partial{\partial t}k _t(\eta^+,\eta^-)\Bigr)\\
&=\int_{\Gamma _0^2}d\lambda^2(\eta^+,\eta^-)\, e_\lambda(\theta^+,\theta^-;\eta^+,\eta^-)(\widehat L^*k_t)(\eta^+,\eta^-)\nonumber\\
&=\int_{\Gamma _0^2}d\lambda^2(\eta^+,\eta^-)\,(\widehat
Le_\lambda(\theta^+,\theta^-))(\eta^+,\eta^-)k_t(\eta^+,\eta^-),\nonumber
\end{align}
for all $\theta^\pm\in L^1$. In other words, given the operator $\widetilde L$ defined at
\[
B(\theta^+,\theta^-):=\int_{\Gamma_0^2}d\lambda^2(\eta^+,\eta^-)\,
e_\lambda(\theta^+,\theta^-;\eta^+,\eta^-)k(\eta^+,\eta^-),
\]
for some $k:\Gamma_0^2\to\left[0,+\infty\right)$,
by
\begin{equation*}
(\widetilde LB)(\theta^+,\theta^-):=\int_{\Gamma_0^2}d\lambda^2(\eta^+,\eta^-)\,(\widehat Le_\lambda(\theta^+,\theta^-))(\eta^+,\eta^-)k(\eta^+,\eta^-),%\label{EqNov1}
\end{equation*}
one has that the GF $B_t$, $t\geq 0$ are a (pointwise) solution to the equation
\begin{equation*}
\frac{\partial B_t}{\partial t}=\widetilde LB_t.%\label{evolgf}
\end{equation*}

We will find now an explicit expression for the operator $\widetilde{L}$. The
fact that the expression is well-defined will follow from Proposition
\ref{Proposition2} below.

\begin{proposition}\label{Proposition1}
For all $\theta^\pm\in L^1$, we have
\begin{align*}
&(\widetilde LB)(\theta^+,\theta^-)\\
=&-\int_{\R^d}dx\,\theta^+(x)\bigl(m\delta B(\theta^+,\theta^-; x,\emptyset)
-zB(\theta^+,\theta^- e^{-\phi (x -\cdot )}+e^{-\phi (x-\cdot )}-1)\bigr)\\
&-\int_{\R^d}dy\,\theta^-(y)\bigl(m\delta B(\theta^+,\theta^-;\emptyset, y)
-zB(\theta^+ e^{-\phi (y -\cdot )}+e^{-\phi (y-\cdot )}-1,\theta^-)\bigr).
\end{align*}
\end{proposition}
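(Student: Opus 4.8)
The strategy is to compute $\widetilde L$ explicitly by tracking what the operator $\widehat L=\KK^{-1}L\KK$ does to Lebesgue--Poisson exponentials, and then to recognize the resulting integral over $\Gamma_0^2$ as a combination of GF's and their derivatives. Concretely, I would start from the defining identity
\[
(\widetilde LB)(\theta^+,\theta^-)=\int_{\Gamma_0^2}d\lambda^2(\eta^+,\eta^-)\,\bigl(\widehat Le_\lambda(\theta^+,\theta^-)\bigr)(\eta^+,\eta^-)\,k(\eta^+,\eta^-),
\]
so that the whole problem reduces to evaluating $\widehat L e_\lambda(\theta^+,\theta^-)$. Because $L$ splits into four terms (two death terms with rate $m$ and two birth terms with activity $z$), and $\KK^{-1}(\cdot)\KK$ is linear, I would treat each of the four pieces separately; by the $\pm$-symmetry of the model it suffices to do the two $+$-terms and then quote symmetry for the $-$-terms.

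For the death part, the standard one-component computation (see \cite{FKO2011c}) gives $\KK^{-1}D_+\KK$ acting on a function $G$ as $(\eta^+,\eta^-)\mapsto -m\,|\eta^+|\,G(\eta^+,\eta^-)$, so on $e_\lambda$ it produces $-m\,|\eta^+|\,e_\lambda(\theta^+,\theta^-;\eta^+,\eta^-)$; integrating against $k$ and using the identity
\[
\int_{\Gamma_0^2}d\lambda^2\,|\eta^+|\,e_\lambda(\theta^+,\theta^-;\eta^+,\eta^-)\,k(\eta^+,\eta^-)
=\int_{\R^d}dx\,\theta^+(x)\,\delta B(\theta^+,\theta^-;x,\emptyset)
\]
(which follows from \eqref{Equation3} with $n=1$, $m=0$, after writing $|\eta^+|=\sum_{x\in\eta^+}1$ and applying the Minlos-type splitting $\eta^+=\{x\}\cup\eta^+\setminus x$) yields exactly the $-m\,\theta^+\,\delta B$ term. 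For the birth part, the relevant formula is $\KK^{-1}B_+\KK$ applied to $G$ equals $z\sum_{x\in\eta^+}e_\lambda(e^{-\phi(x-\cdot)}-1,\eta^-)\,G(\eta^+\setminus x,\eta^-)$, the exponential factor being the $\KK^{-1}$-image of $e^{-E(x,\cdot)}$; applying this to $e_\lambda(\theta^+,\theta^-)$, using multiplicativity of $e_\lambda$ over the $-$-component and the pointwise identity $e_\lambda(\theta^-,\eta^-)e_\lambda(e^{-\phi(x-\cdot)}-1,\eta^-)=e_\lambda(\theta^-e^{-\phi(x-\cdot)}+e^{-\phi(x-\cdot)}-1,\eta^-)$, then again splitting off the point $x$ and integrating against $k$, gives $+z\int_{\R^d}dx\,\theta^+(x)\,B\bigl(\theta^+,\theta^-e^{-\phi(x-\cdot)}+e^{-\phi(x-\cdot)}-1\bigr)$. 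Summing the four contributions produces the claimed expression.

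The main obstacle is twofold and largely bookkeeping/analytic rather than conceptual: first, one must justify the termwise computation of $\KK^{-1}L\KK$ on $e_\lambda(\theta^+,\theta^-)$, which is not in $B_{\mathrm{bs}}(\Gamma_0^2)$, so the pointwise extension of the $\KK$-transform and the combinatorial identities have to be invoked carefully on this larger class; second, and more delicately, one must ensure that all the $\Gamma_0^2$-integrals and the interchanges of summation/integration (in particular pulling the $\sum_{x\in\eta^+}$ out as $\int_{\R^d}dx$ and shifting $\eta^+\mapsto\eta^+\cup x$) are absolutely convergent. This is exactly where the Ruelle-type bound on $k$ and the integrability and boundedness of $\phi$ (hence $0\le e^{-\phi}\le 1$, so $|e^{-\phi(x-\cdot)}-1|\le 1$) enter; the paper defers the rigorous well-definedness to Proposition~\ref{Proposition2}, so at this stage I would carry out the formal manipulation and simply note that each step is licensed once $k$ lies in the appropriate weighted space, leaving the quantitative estimates to the next proposition.
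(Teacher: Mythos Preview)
Your overall strategy and your treatment of the death terms match the paper's proof. The birth part, however, contains two genuine errors. First, the formula you write for $\KK^{-1}B_+\KK$ is wrong: the $+$-birth in $L$ is $z\int_{\R^d}dx\,e^{-E(x,\gamma^-)}\bigl(F(\gamma^+\cup x,\gamma^-)-F(\gamma^+,\gamma^-)\bigr)$, and after conjugation by $\KK$ (see \cite{FKO2011c}) it becomes
\[
z\sum_{\xi^-\subseteq\eta^-}\int_{\R^d}dx\,G(\eta^+\cup x,\xi^-)\,e^{-E(x,\xi^-)}\,e_\lambda\bigl(e^{-\phi(x-\cdot)}-1,\eta^-\setminus\xi^-\bigr),
\]
i.e.\ an integral over the \emph{new} point $x$, a sum over subsets $\xi^-\subseteq\eta^-$, and $G$ evaluated at $\eta^+\cup x$. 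There is no $\sum_{x\in\eta^+}$ and no $G(\eta^+\setminus x,\eta^-)$; that shape belongs to death, not birth. Second, the ``pointwise identity'' $e_\lambda(\theta^-,\eta^-)\,e_\lambda(e^{-\phi(x-\cdot)}-1,\eta^-)=e_\lambda(\theta^-e^{-\phi(x-\cdot)}+e^{-\phi(x-\cdot)}-1,\eta^-)$ is false: a product $e_\lambda(f,\eta)e_\lambda(g,\eta)$ equals $e_\lambda(fg,\eta)$, nothing more.

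What actually produces the argument $\theta^-e^{-\phi(x-\cdot)}+e^{-\phi(x-\cdot)}-1$ is the binomial identity $\sum_{\xi\subseteq\eta}e_\lambda(a,\xi)e_\lambda(b,\eta\setminus\xi)=e_\lambda(a+b,\eta)$, applied with $a=\theta^-e^{-\phi(x-\cdot)}$ (from $e_\lambda(\theta^-,\xi^-)e^{-E(x,\xi^-)}$) and $b=e^{-\phi(x-\cdot)}-1$; this is exactly \eqref{Equation8}. The subset sum over $\xi^-$ is essential and cannot be replaced by multiplicativity on the full $\eta^-$. With the correct $\widehat L$ and this identity, $\widehat L e_\lambda(\theta^+,\theta^-)$ already contains the $\int_{\R^d}dx\,\theta^+(x)\,e_\lambda(\theta^+,\theta^-e^{-\phi(x-\cdot)}+e^{-\phi(x-\cdot)}-1;\eta^+,\eta^-)$ you want, and integrating against $k$ gives the $z$-terms directly, with no additional Minlos-type splitting needed there.
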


\begin{proof}
As shown in \cite[Sections 3 and 5]{FKO2011c},
\begin{align*}
&(\widehat LG)(\eta^+,\eta^-)\nonumber\\%\label{Equation2}
=&-m\left(|\eta^+|+|\eta^-|\right)G(\eta^+,\eta^-)\nonumber\\
&+z\sum_{\xi^-\subseteq\eta^-}\int_{\R^d}dx\,G(\eta^+\cup x,\xi^-)e^{-E(x,\xi^-)}e_\lambda(e^{-\phi(x-\cdot)}-1,\eta^-\setminus\xi^-)\nonumber\\
&+z\sum_{\xi^+\subseteq\eta^+}\int_{\R^d}dy\,G(\xi^+,\eta^-\cup y)e^{-E(y,\xi^+)}e_\lambda(e^{-\phi(y-\cdot)}-1,\eta^+\setminus\xi^+),\nonumber
\end{align*}
and thus, for $G=e_\lambda(\theta^+,\theta^-)$, $\theta^\pm\in L^1$,
\begin{align*}
&(\widehat Le_\lambda(\theta^+,\theta^-))(\eta^+,\eta^-)
=-m\left(|\eta^+|+|\eta^-|\right)e_\lambda(\theta^+,\theta^-;\eta^+,\eta^-)\\
&\qquad\qquad+z\int_{\R^d}dx\,\theta^+(x)e_\lambda(\theta^+,\theta^-e^{-\phi(x-\cdot)}+e^{-\phi(x-\cdot)}-1;\eta^+,\eta^-)\\
&\qquad\qquad+z\int_{\R^d}dy\,\theta^-(y)e_\lambda(\theta^+e^{-\phi(y-\cdot)}+e^{-\phi(y-\cdot)}-1,\theta^-;\eta^+,\eta^-),
\end{align*}
where we have used the equality shown in \cite[Proposition 5.10]{KKO2002},
    \begin{multline}
        \sum_{\xi^\pm\subseteq\eta^\pm}e_\lambda(\theta^\pm,\xi^\pm)e^{-E(x,\xi^\pm)}
        e_\lambda(e^{-\phi(x-\cdot)}-1,\eta^\pm\setminus\xi^\pm)\\
        = e_\lambda(\theta^\pm e^{-\phi(x-\cdot)}+e^{-\phi(x-\cdot)}-1,\eta^\pm).\label{Equation8}
    \end{multline}
In this way,
\begin{align*}
&(\widetilde LB)(\theta^+,\theta^-)\\
=&\int_{\Gamma _0^2}d\lambda^2(\eta^+,\eta^-)\,(\widehat Le_\lambda(\theta^+,\theta^-))(\eta^+,\eta^-)k(\eta^+,\eta^-)\\
=&-m\int_{\Gamma _0^2}d\lambda^2(\eta^+,\eta^-)\,\left(|\eta^+|+|\eta^-|\right)e_\lambda(\theta^+,\theta^-;\eta^+,\eta^-)k(\eta^+,\eta^-)\\
&+z\int_{\R^d}dx\,\theta^+(x)B(\theta^+,\theta^-e^{-\phi(x-\cdot)}+e^{-\phi(x-\cdot)}-1)\\
&+z\int_{\R^d}dy\,\theta^-(y)B(\theta^+e^{-\phi(y-\cdot)}+e^{-\phi(y-\cdot)}-1,\theta^-)
\end{align*}
with
\begin{align*}
&\int_{\Gamma _0^2}d\lambda^2(\eta^+,\eta^-)\,\left(|\eta^+|+|\eta^-|\right)e_\lambda(\theta^+,\theta^-;\eta^+,\eta^-)k(\eta^+,\eta^-)\\
=&\int_{\Gamma _0}d\lambda(\eta^+)\,\sum_{x\in\eta^+}\theta^+(x)e_\lambda(\theta^+,\eta^+\setminus x)\int_{\Gamma _0}d\lambda(\eta^-)e_\lambda(\theta^-,\eta^-)k(\eta^+,\eta^-)\\
&+\int_{\Gamma _0}d\lambda(\eta^-)\,\sum_{y\in\eta^-}\theta^-(y)
e_\lambda(\theta^-,\eta^-\setminus y)\int_{\Gamma _0}d\lambda(\eta^+)e_\lambda(\theta^+,\eta^+)k(\eta^+,\eta^-),
\end{align*}
which, by \eqref{Equation3}, is equal to
\[
\int_{\R^d}\theta^+(x)\delta B(\theta^+,\theta^-; x,\emptyset)
+\int_{\R^d}\theta^-(y)\delta B(\theta^+,\theta^-;\emptyset,y).\qedhere
\]
\end{proof}

\begin{proposition}\label{Proposition2}
Let $0<\alpha<\alpha_0$ be given. If $B\in\mathcal{E}_{\alpha''}$ for some
$\alpha''\in \left(\alpha,\alpha_0\right]$, then
$\widetilde LB\in\mathcal{E}_{\alpha'}$ for all $\alpha'>0$ such that
$\alpha\leq\alpha'<\alpha''$, and we have
\[
\|\widetilde LB\|_{\alpha'}\leq
\frac{2\alpha_0}{\alpha''-\alpha'}
\Bigl(m +z\alpha_0e^{\frac{\llvert\phi\rrvert_1}{\alpha}-1}\Bigr)
\|B\|_{\alpha''}.
\]
\end{proposition}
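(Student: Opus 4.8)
\section*{Proof proposal}

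The plan is to read the bound off the explicit formula for $\widetilde L B$ given in Proposition~\ref{Proposition1}, estimating each of its four integral terms separately. The two main tools are the defining inequality $|B(\theta^+,\theta^-)|\le\|B\|_{\alpha''}\exp\bigl(\tfrac1{\alpha''}(\llvert\theta^+\rrvert_1+\llvert\theta^-\rrvert_1)\bigr)$ for $B\in\mathcal{E}_{\alpha''}$, and the Cauchy-type estimate for the first variational derivative recalled in Subsection~\ref{Subsection23}, $\|\vd B(\theta^+,\theta^-;\cdot,\emptyset)\|_{L^\infty(\R^d)}\le\tfrac1r\sup_{\llvert\psi\rrvert_1\le r}|B(\theta^++\psi,\theta^-)|$ for every $r>0$ (and its mirror image with $+$ and $-$ exchanged). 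As a by-product the same estimates show that every integral appearing in Proposition~\ref{Proposition1} converges absolutely and locally uniformly in $(\theta^+,\theta^-)$, so that the right-hand side of that formula is well defined.

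Before estimating, I would record the elementary inequalities used to control the shifted arguments inside $B$. Since $\phi\ge0$ one has $0\le e^{-\phi(x-\cdot)}\le1$ pointwise, hence $\llvert\theta^-e^{-\phi(x-\cdot)}\rrvert_1\le\llvert\theta^-\rrvert_1$; and since $0\le1-e^{-t}\le t$ for $t\ge0$, also $\llvert e^{-\phi(x-\cdot)}-1\rrvert_1\le\llvert\phi\rrvert_1$ for every $x$. Thus $\theta^-e^{-\phi(x-\cdot)}+e^{-\phi(x-\cdot)}-1\in L^1$ with norm $\le\llvert\theta^-\rrvert_1+\llvert\phi\rrvert_1$, uniformly in $x$ (and symmetrically with $+$ and $-$ swapped). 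Second, I would use the scalar inequality $t\,e^{-ct}\le\frac1{ec}$ for $t\ge0$, $c>0$: with $c=\tfrac1{\alpha'}-\tfrac1{\alpha''}=\tfrac{\alpha''-\alpha'}{\alpha'\alpha''}$ it converts a factor $\llvert\theta^+\rrvert_1\,e^{\frac1{\alpha''}\llvert\theta^+\rrvert_1}$ into $\tfrac{\alpha'\alpha''}{e(\alpha''-\alpha')}\,e^{\frac1{\alpha'}\llvert\theta^+\rrvert_1}$, while $e^{\frac1{\alpha''}\llvert\theta^-\rrvert_1}\le e^{\frac1{\alpha'}\llvert\theta^-\rrvert_1}$ trivially; this is the mechanism by which the order loss of $\widetilde L$ is paid for by the gap $\alpha''-\alpha'$.

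The estimate itself then proceeds termwise. For the two ``death'' terms I bound $\int_{\R^d}dx\,|\theta^+(x)|\,|\delta B(\theta^+,\theta^-;x,\emptyset)|\le\llvert\theta^+\rrvert_1\,\|\vd B(\theta^+,\theta^-;\cdot,\emptyset)\|_{L^\infty}$, apply the Cauchy estimate with the optimal choice $r=\alpha''$ (so that $\frac1r e^{r/\alpha''}=\frac e{\alpha''}$), then insert the $\mathcal{E}_{\alpha''}$ bound and the absorption inequality; this yields a contribution $\le\frac{\alpha'}{\alpha''-\alpha'}\,m\,\|B\|_{\alpha''}\,e^{\frac1{\alpha'}(\llvert\theta^+\rrvert_1+\llvert\theta^-\rrvert_1)}$, which is $\le\frac{\alpha_0}{\alpha''-\alpha'}\,m\,\|B\|_{\alpha''}\,e^{\frac1{\alpha'}(\cdots)}$ since $\alpha'\le\alpha_0$. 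For the two ``birth'' terms I bound, using the uniform $L^1$-estimate above, $\int_{\R^d}dx\,|\theta^+(x)|\,\bigl|B\bigl(\theta^+,\theta^-e^{-\phi(x-\cdot)}+e^{-\phi(x-\cdot)}-1\bigr)\bigr|\le\llvert\theta^+\rrvert_1\,\|B\|_{\alpha''}\,e^{\llvert\phi\rrvert_1/\alpha''}\,e^{\frac1{\alpha''}(\llvert\theta^+\rrvert_1+\llvert\theta^-\rrvert_1)}$, then apply the absorption inequality together with $\alpha'\alpha''\le\alpha_0^2$ and $\tfrac1{\alpha''}\le\tfrac1\alpha$, obtaining a contribution $\le\frac{\alpha_0}{\alpha''-\alpha'}\,z\alpha_0\,e^{\llvert\phi\rrvert_1/\alpha-1}\,\|B\|_{\alpha''}\,e^{\frac1{\alpha'}(\cdots)}$. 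Summing the four terms (two of each kind, hence the factor $2$), dividing by $e^{\frac1{\alpha'}(\llvert\theta^+\rrvert_1+\llvert\theta^-\rrvert_1)}$ and taking the supremum over $\theta^\pm\in L^1$ gives precisely the asserted inequality. To conclude that $\widetilde L B\in\mathcal{E}_{\alpha'}$, and not merely that it is a bounded map, one notes that $B$ and its first-order differentials are entire, so each integrand is holomorphic in the complex scaling parameters; combined with the locally uniform absolute convergence just obtained, a standard application of Fubini's and Morera's theorems shows $\widetilde L B$ is entire on $L^1\times L^1$.

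I do not anticipate a real obstacle: all the analytic content is already in Proposition~\ref{Proposition1}. The only delicate points are purely bookkeeping — making the two optimizations (the choice $r=\alpha''$ and the bound $t e^{-ct}\le\frac1{ec}$) so that the constant collapses, via $\alpha\le\alpha'<\alpha''\le\alpha_0$, to the clean form $\frac{2\alpha_0}{\alpha''-\alpha'}\bigl(m+z\alpha_0e^{\llvert\phi\rrvert_1/\alpha-1}\bigr)$ — together with justifying the interchange of the $dx$- and $dy$-integrations with complex differentiation needed for the entireness of $\widetilde L B$, which is exactly what the uniform estimates guarantee.
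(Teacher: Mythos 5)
Your argument is correct and is essentially the paper's own proof: the paper simply invokes Lemma~\ref{Lemma1} (for the two terms involving $\delta B$) and Lemma~\ref{Lemma2} with $\varphi=e^{-\phi}$, $\psi=e^{-\phi}-1$ (for the two birth terms), and those lemmas are proved by exactly the Cauchy-type derivative estimate and the absorption inequality $te^{-ct}\le\frac{1}{ec}$ that you carry out inline. The only difference is presentational — you re-derive the lemmas' content term by term instead of citing them — and your bookkeeping, including the final passage to the constant $\frac{2\alpha_0}{\alpha''-\alpha'}\bigl(m+z\alpha_0e^{\llvert\phi\rrvert_1/\alpha-1}\bigr)$ via $\alpha',\alpha''\le\alpha_0$ and $\alpha''>\alpha$, matches the paper's.
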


In order to prove this result as well as other forthcoming ones the next two
lemmata show to be useful. They extend to the two-component case lemmata 3.3
and 3.4 shown in \cite{FKO2011a}.

\begin{lemma}\label{Lemma1} Given an $\alpha>0$, for all
$B\in\mathcal{E}_\alpha$ let
\begin{align*}
&(L_0^+B)(\theta^+,\theta^-):=\int_{\R^d}dx\,\theta^+(x)\vd B(\theta^+,\theta^-; x,\emptyset),\\
&(L_0^-B)(\theta^+,\theta^-):=\int_{\R^d}dy\,\theta^-(y)\vd B(\theta^+,\theta^-; \emptyset,y),\quad
\theta^\pm\in L^1.
\end{align*}
Then, for all $\alpha'<\alpha$, we have $L_0^\pm B\in\mathcal{E}_{\alpha'}$ and,
moreover, the following estimate of norms holds:
\[
\|L_0^\pm B\|_{\alpha'}\leq \frac{\alpha'}{\alpha-\alpha'}\|B\|_{\alpha}.
\]
\end{lemma}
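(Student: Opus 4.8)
The plan is to estimate $(L_0^+B)(\theta^+,\theta^-)$ pointwise by controlling the kernel $\vd B(\theta^+,\theta^-;x,\emptyset)$ through the estimates recorded in Subsection~\ref{Subsection23}, and then to absorb the resulting linear factor into the exponential weight defining $\|\cdot\|_{\alpha'}$. First I would fix $B\in\mathcal{E}_\alpha$ and $\theta^\pm\in L^1$, and apply the bound for the first differential with base point $(\theta^+,\theta^-)$: for every $r>0$,
\[
\left|\vd B(\theta^+,\theta^-;x,\emptyset)\right|
\leq \left\|\vd B(\theta^+,\theta^-;\cdot,\emptyset)\right\|_{L^\infty}
\leq \frac1r\sup_{\llvert\eta^+\rrvert_1\leq r}\left|B(\theta^++\eta^+,\theta^-)\right|.
\]
Since $B\in\mathcal{E}_\alpha$, the supremum on the right is bounded by
$\|B\|_\alpha\,e^{\frac1\alpha(\llvert\theta^+\rrvert_1+r+\llvert\theta^-\rrvert_1)}$, so that
\[
\left|\vd B(\theta^+,\theta^-;x,\emptyset)\right|
\leq \frac{1}{r}\,e^{\frac r\alpha}\,\|B\|_\alpha\,e^{\frac1\alpha(\llvert\theta^+\rrvert_1+\llvert\theta^-\rrvert_1)} .
\]
Crucially, this bound is independent of $x$, so integrating against $\theta^+(x)\,dx$ just contributes a factor $\llvert\theta^+\rrvert_1$:
\[
\left|(L_0^+B)(\theta^+,\theta^-)\right|
\leq \frac{\llvert\theta^+\rrvert_1}{r}\,e^{\frac r\alpha}\,\|B\|_\alpha\,e^{\frac1\alpha(\llvert\theta^+\rrvert_1+\llvert\theta^-\rrvert_1)} .
\]

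Next I would multiply by the weight $e^{-\frac1{\alpha'}(\llvert\theta^+\rrvert_1+\llvert\theta^-\rrvert_1)}$ appearing in the definition of $\|L_0^+B\|_{\alpha'}$. Writing $t:=\llvert\theta^+\rrvert_1\geq0$, the $\theta^-$-part of the exponent is $(\frac1\alpha-\frac1{\alpha'})(\llvert\theta^-\rrvert_1)\leq 0$ and may be dropped; the $\theta^+$-part gives
\[
\left|(L_0^+B)(\theta^+,\theta^-)\right|e^{-\frac1{\alpha'}(\llvert\theta^+\rrvert_1+\llvert\theta^-\rrvert_1)}
\leq \frac{e^{\frac r\alpha}}{r}\,\|B\|_\alpha\;\sup_{t\geq0}\Bigl(t\,e^{-(\frac1{\alpha'}-\frac1\alpha)t}\Bigr).
\]
The elementary one-variable optimisation $\sup_{t\geq0}t e^{-ct}=\frac1{ce}$ (for $c=\frac1{\alpha'}-\frac1\alpha>0$) gives the factor $\frac{\alpha\alpha'}{e(\alpha-\alpha')}$. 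It remains to choose $r$ to make the prefactor $\frac1r e^{r/\alpha}$ as small as possible; this is minimised at $r=\alpha$, yielding $\frac er\cdot\frac1r\big|_{r=\alpha}\!\cdot$… more precisely $\frac1\alpha e^{1}=\frac e\alpha$. Combining, $\|L_0^+B\|_{\alpha'}\leq \frac e\alpha\cdot\frac{\alpha\alpha'}{e(\alpha-\alpha')}\|B\|_\alpha=\frac{\alpha'}{\alpha-\alpha'}\|B\|_\alpha$, which is exactly the claimed bound. The argument for $L_0^-B$ is identical with the roles of $\theta^+$ and $\theta^-$ interchanged, and entireness of $L_0^\pm B$ on $L^1\times L^1$ follows since these are built from the (entire) differentials of $B$ integrated against an $L^1$ function, hence locally bounded and separately entire in each component.

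I do not expect a genuine obstacle here: the only slightly delicate point is bookkeeping the two free parameters ($r$ in the kernel estimate and $t=\llvert\theta^+\rrvert_1$ in the weight optimisation) so that the two exponential gains combine to give precisely the constant $\alpha'/(\alpha-\alpha')$ rather than something larger; choosing $r=\alpha$ up front and then optimising in $t$ makes this transparent. One should also remark that the kernel estimate used is the $n=1$ case from Subsection~\ref{Subsection23} applied at the non-zero base point $(\theta^+,\theta^-)$ — valid because all those estimates are stated for arbitrary base point $(\theta_0^+,\theta_0^-)$ — so no new analytic input beyond the excerpt is needed.
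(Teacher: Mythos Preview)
Your proof is correct and follows essentially the same route as the paper's own argument: both use the $n=1$ kernel estimate from Subsection~\ref{Subsection23} at the base point $(\theta^+,\theta^-)$, bound the supremum via $B\in\mathcal{E}_\alpha$, then optimise separately in $t=\llvert\theta^+\rrvert_1$ (giving $\tfrac{\alpha\alpha'}{e(\alpha-\alpha')}$) and in $r$ (minimised at $r=\alpha$, giving $\tfrac{e}{\alpha}$) to obtain exactly $\tfrac{\alpha'}{\alpha-\alpha'}$. The only difference is presentational order, not substance.
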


\begin{proof} First we observe that, by Subsection \ref{Subsection23},
$L_0^\pm B$ are entire functionals on $L^1\times L^1$ and, moreover,
for all $r>0$ and all $\theta^\pm\in L^1$,
\begin{align*}
\left|(L_0^+B)(\theta^+,\theta^-)\right|\leq&\,\llvert\theta^+\rrvert_1 \left\|\vd
B(\theta^+,\theta^-;\cdot,\emptyset)\right\|_{L^\infty(\R^d)}\\
\leq&\,\frac{\llvert\theta^+\rrvert_1 }{r}\sup_{\llvert\theta_0^+\rrvert_1 \leq
r}\left|B(\theta^++\theta_0^+,\theta^-)\right|,
\end{align*}
where, for all $\theta_0^+\in L^1$ such that $\llvert\theta_0^+\rrvert_1 \leq r$,
\[
\left|B(\theta^++\theta_0^+,\theta^-)\right|\leq
\|B\|_\alpha e^{\frac{1}{\alpha}(\llvert\theta^+\rrvert_1 +\llvert\theta^-\rrvert_1 +r)}.
\]
Hence,
\begin{align*}
\|L_0^+B\|_{\alpha'}=&\sup_{\theta^\pm\in L^1}\left(e^{-\frac{1}{\alpha'}(\llvert\theta^+\rrvert_1 +\llvert\theta^-\rrvert_1 )}|(L_0^+ B)(\theta^+,\theta^-)|\right)\\
\leq&\,\frac{e^{\frac{r}{\alpha}}}{r}\sup_{\theta^\pm\in L^1}\left(e^{-\left(\frac{1}{\alpha'}-\frac{1}{\alpha}\right)(\llvert\theta^+\rrvert_1 +\llvert\theta^-\rrvert_1 )}\llvert\theta^+\rrvert_1 \right)\|B\|_\alpha,
\end{align*}
where the latter supremum is finite if and only if $\frac{1}{\alpha'}-\frac{1}{\alpha}>0$. In such
a situation, the use of the inequalities $xe^{-n(x+y)}\leq xe^{-nx}\leq\frac{1}{en}$, $x,y\geq0$, $n>0$ leads for
each $r>0$ to
\begin{equation*}
\|L_0^+B\|_{\alpha'}\leq\frac{e^{\frac{r}{\alpha}}}{r}\frac{\alpha\alpha'}{e(\alpha-\alpha')}\|B\|_\alpha.
\end{equation*}
Analogously to \cite[Lemma 3.3]{FKO2011a}, the required estimate of norms follows
by minimizing the expression
$\frac{e^{\frac{r}{\alpha}}}{r}\frac{\alpha\alpha'}{e(\alpha-\alpha')}$
in the parameter $r$. Similar arguments applied to $L_0^-B$ completes the
proof.
\end{proof}

\begin{lemma}\label{Lemma2} Let $\varphi,\psi:\R^d\times\R^d\to\R$ be such that,
for a.a.~$\omega\in\R^d$,
$\varphi(\omega,\cdot)\in L^\infty:=L^\infty(\R^d)$,
$\psi(\omega,\cdot)\in L^1$ and $\|\varphi(\omega,\cdot)\|_{L^\infty}\leq c_0$,
$\llvert\psi(\omega,\cdot)\rrvert_1 \leq c_1$ for some constants $c_0,c_1>0$ independent
of $\omega$. For each $\alpha>0$ and all $B\in\mathcal{E}_\alpha$, consider
\begin{align*}
&(L_1^+B)(\theta^+,\theta^-):=\int_{\R^d}dx\,\theta^+(x)B(\theta^+,\varphi(x, \cdot)\theta^-
+\psi(x,\cdot)),\\
&(L_1^-B)(\theta^+,\theta^-):=\int_{\R^d}dy\,\theta^-(y)B(\varphi(y, \cdot)\theta^+
+\psi(y,\cdot), \theta^-),\quad\theta^\pm\in L^1.
\end{align*}
Then, for all $\alpha'>0$ such that $c_0\alpha'<\alpha$, we have
$L_1^\pm B\in\mathcal{E}_{\alpha'}$
and
\begin{equation}
\|L_1^\pm B\|_{\alpha'}\leq \frac{\alpha\alpha'}{\alpha-\alpha'}e^{\frac{c_1}{\alpha}-1}\|B\|_{\alpha}.
\label{Equation1}
\end{equation}
\end{lemma}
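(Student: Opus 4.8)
The plan is to follow the pattern of Lemma \ref{Lemma1}: first check that $L_1^\pm B$ is entire, then estimate $|(L_1^+B)(\theta^+,\theta^-)|$ pointwise, and finally take the weighted supremum defining $\|\cdot\|_{\alpha'}$ and optimize. For entireness, I would note that the map $(\theta^+,\theta^-)\mapsto B(\theta^+,\varphi(x,\cdot)\theta^-+\psi(x,\cdot))$ is a composition of the entire functional $B$ with an affine map on $L^1\times L^1$ (legitimate since $\varphi(x,\cdot)\theta^-\in L^1$ with $\llvert\varphi(x,\cdot)\theta^-\rrvert_1\le c_0\llvert\theta^-\rrvert_1$ and $\psi(x,\cdot)\in L^1$), hence entire in $(\theta^+,\theta^-)$; integrating against $\theta^+(x)\,dx$ preserves this by the usual Morera/dominated-convergence argument, using the bound below to control the integrand, exactly as in Subsection \ref{Subsection23}.

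The key pointwise estimate is
\begin{align*}
\left|(L_1^+B)(\theta^+,\theta^-)\right|
&\le \int_{\R^d}dx\,|\theta^+(x)|\,\|B\|_\alpha\,
e^{\frac1\alpha\left(\llvert\theta^+\rrvert_1+\llvert\varphi(x,\cdot)\theta^-+\psi(x,\cdot)\rrvert_1\right)}\\
&\le \|B\|_\alpha\, e^{\frac1\alpha\left(\llvert\theta^+\rrvert_1+c_0\llvert\theta^-\rrvert_1+c_1\right)}
\int_{\R^d}dx\,|\theta^+(x)|\\
&= \|B\|_\alpha\, e^{\frac{c_1}{\alpha}}\,
\llvert\theta^+\rrvert_1\, e^{\frac1\alpha\left(\llvert\theta^+\rrvert_1+c_0\llvert\theta^-\rrvert_1\right)},
\end{align*}
where I used $\llvert\varphi(x,\cdot)\theta^-+\psi(x,\cdot)\rrvert_1\le c_0\llvert\theta^-\rrvert_1+c_1$. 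Then
\[
\|L_1^+B\|_{\alpha'}\le \|B\|_\alpha\, e^{\frac{c_1}{\alpha}}
\sup_{\theta^\pm\in L^1}\left(\llvert\theta^+\rrvert_1\,
e^{-\left(\frac{1}{\alpha'}-\frac1\alpha\right)\llvert\theta^+\rrvert_1}
e^{-\left(\frac{1}{\alpha'}-\frac{c_0}{\alpha}\right)\llvert\theta^-\rrvert_1}\right).
\]
The supremum over $\theta^-$ is $1$ provided $\frac{1}{\alpha'}-\frac{c_0}{\alpha}\ge 0$, i.e. $c_0\alpha'\le\alpha$; under the hypothesis $c_0\alpha'<\alpha$ this is satisfied. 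The supremum over $\theta^+$ of $t\,e^{-\left(\frac{1}{\alpha'}-\frac1\alpha\right)t}$, $t\ge0$, equals $\frac{1}{e}\left(\frac{1}{\alpha'}-\frac1\alpha\right)^{-1}=\frac{\alpha\alpha'}{e(\alpha-\alpha')}$, using $te^{-nt}\le\frac{1}{en}$ for $n>0$ (here $n=\frac{1}{\alpha'}-\frac1\alpha>0$, which also needs $\alpha'<\alpha$, a consequence of $c_0\alpha'<\alpha$ when $c_0\ge1$; if $c_0<1$ one may freely assume $\alpha'<\alpha$ since the estimate is monotone and the interesting regime has $\alpha'<\alpha$). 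This gives
\[
\|L_1^+B\|_{\alpha'}\le \frac{\alpha\alpha'}{\alpha-\alpha'}\,e^{\frac{c_1}{\alpha}-1}\,\|B\|_\alpha,
\]
which is \eqref{Equation1}. The same argument with the roles of $+$ and $-$ interchanged handles $L_1^-B$.

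The only genuinely delicate point is the bookkeeping around the condition $c_0\alpha'<\alpha$: one must ensure simultaneously that the $\theta^-$-supremum is finite (needs $\frac{1}{\alpha'}\ge\frac{c_0}{\alpha}$) and that the $\theta^+$-supremum is finite (needs $\frac{1}{\alpha'}>\frac1\alpha$, i.e. $\alpha'<\alpha$). When $c_0\ge 1$ the hypothesis $c_0\alpha'<\alpha$ forces both; when $c_0<1$ the hypothesis only forces the first, but then one simply restricts attention to $\alpha'<\alpha$ (which is the range in which the lemma will be applied, and outside which the right-hand side of \eqref{Equation1} is vacuous). Everything else is the routine optimization already carried out in Lemma \ref{Lemma1} and in \cite[Lemma 3.4]{FKO2011a}.
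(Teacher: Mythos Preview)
Your proof is correct and follows essentially the same approach as the paper: the same pointwise bound $|B(\theta^+,\varphi(x,\cdot)\theta^-+\psi(x,\cdot))|\le\|B\|_\alpha\,e^{(\llvert\theta^+\rrvert_1+c_0\llvert\theta^-\rrvert_1+c_1)/\alpha}$, the same reduction to $\sup_{t\ge0} t\,e^{-nt}=(en)^{-1}$, and the same appeal to Subsection~\ref{Subsection23} for entireness. Your remark about the extra requirement $\alpha'<\alpha$ (not forced by $c_0\alpha'<\alpha$ when $c_0<1$) is well taken; the paper simply imposes both conditions without comment, and indeed the right-hand side of \eqref{Equation1} is only meaningful for $\alpha'<\alpha$.
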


\begin{proof} As before, the entireness property of $L_1^+B$ and $L_1^-B$ on
$L^1\times L^1$ follows from Subsection \ref{Subsection23}. In this way, given
a $B\in\mathcal{E}_\alpha$, for all $\theta^\pm\in L^1$ one has
\[
|B(\theta^+,\varphi(x,\cdot)\theta^-+\psi(x,\cdot))|\leq \|B\|_{\alpha}\,
e^{\frac{1}{\alpha}\left(\llvert\theta^+\rrvert_1 +c_0\llvert\theta^-\rrvert_1 +c_1\right)},
\]
which implies
\begin{align*}
\|L_1^+B\|_{\alpha'}
&\leq\sup_{\theta^\pm\in L^1}\biggl(e^{-\frac{1}{\alpha'}(\llvert\theta^+\rrvert_1 +\llvert\theta^-\rrvert_1 )}\int_{\R^d}dx\,\left|\theta^+(x)B(\theta^+,\varphi(x,\cdot)\theta^-+\psi(x,\cdot))\right|\biggr)\\
&\leq e^{\frac{c_1}{\alpha}}\|B\|_{\alpha}\sup_{\theta^\pm\in L^1}
\left(e^{-\left(\frac{1}{\alpha'}-\frac{1}{\alpha}\right)\llvert\theta^+\rrvert_1 -\left(\frac{1}{\alpha'}-\frac{c_0}{\alpha}\right)\llvert\theta^-\rrvert_1 }\llvert\theta^+\rrvert_1 \right).
\end{align*}
Concerning the latter supremum, observe that it is finite provided
$\frac{1}{\alpha'}-\frac{1}{\alpha}>0$ and
$\frac{1}{\alpha'}-\frac{c_0}{\alpha}>0$. In this case, the use of the
inequality $xe^{-m_1x-m_2y}\leq xe^{-m_1x}$, $x,y\geq 0$, $m_1,m_2>0$ allows us
to proceed by arguments similar to those used in the previous lemma. A
similar proof yields the estimate of norms \eqref{Equation1} for $L_1^-B$.
\end{proof}

\begin{proof}[Proof of Proposition \ref{Proposition2}]
In Lemma \ref{Lemma2} replace $\varphi$ by $e^{-\phi}$ and $\psi$ by
$e^{-\phi}-1$. Due to the positiveness and integrability
properties of $\phi$ one has $e^{-\phi}\leq 1$ and
$|e^{-\phi}-1|=1-e^{-\phi}\leq \phi\in L^1$, ensuring the conditions
to apply Lemma \ref{Lemma2}. This combined with an application of Lemma
\ref{Lemma1} leads to the required estimate of norms.
\end{proof}

As a consequence of Proposition \ref{Proposition2}, one may state the next
existence and uniqueness result. Its proof follows as a particular
application of an Ovsjannikov-type result in a scale of Banach spaces
$\{\mathcal{E}_\alpha:0<\alpha\leq\alpha_0\}$, $\alpha_0>0$, defined in
Subsection \ref{Subsection23}. For convenience of the reader, this statement
is recalled in Appendix below (Theorem \ref{Th1}).

\begin{theorem}\label{Theorem1}
Given an $\alpha_0>0$, let
$B_0\in\mathcal{E}_{\alpha_0}$. For each $\alpha\in(0,\alpha_0)$ there is a
$T>0$, that is,
\[
T=\biggl(2e\alpha_0\Bigl(m+z\alpha_0e^{\frac{\llvert\phi\rrvert_1 }{\alpha}-1}\Bigr)\biggr)^{-1}(\alpha_0-\alpha),
\]
such that there is a unique solution $B_t$, $t\in[0,T)$, to the initial value
problem $\dfrac{\partial B_t}{\partial t}=\widetilde LB_t$, ${B_t}_{|t=0}= B_0$ in the
space $\mathcal{E}_\alpha$.
\end{theorem}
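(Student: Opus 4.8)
The plan is to deduce Theorem~\ref{Theorem1} from the abstract Ovsjannikov-type existence and uniqueness theorem for linear equations in a scale of Banach spaces (recalled in the Appendix as Theorem~\ref{Th1}), applied to the scale $\{\mathcal{E}_\alpha : 0<\alpha\leq\alpha_0\}$ introduced in Subsection~\ref{Subsection23}. The only hypothesis of that abstract theorem that needs checking is the continuity/boundedness of the linear operator $\widetilde L$ between the spaces of the scale together with the characteristic bound on the constant; and this is exactly the content of Proposition~\ref{Proposition2}.

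First I would fix $\alpha_0>0$ and $B_0\in\mathcal{E}_{\alpha_0}$, and recall from Proposition~\ref{Proposition2} that for any pair $\alpha'<\alpha''$ in $(0,\alpha_0]$ the operator $\widetilde L$ maps $\mathcal{E}_{\alpha''}$ into $\mathcal{E}_{\alpha'}$ with
\[
\|\widetilde L B\|_{\alpha'}\leq \frac{M}{\alpha''-\alpha'}\,\|B\|_{\alpha''},\qquad
M:=2\alpha_0\Bigl(m+z\alpha_0 e^{\frac{\llvert\phi\rrvert_1}{\alpha}-1}\Bigr).
\]
This is precisely the structural assumption (a bound of the form $\frac{\mathrm{const}}{\alpha''-\alpha'}$, uniform in the pair) required by the Ovsjannikov scheme. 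Note that $M$ depends only on $\alpha_0$, $m$, $z$, $\phi$ and the lower endpoint $\alpha$ of the target scale, but not on the particular $\alpha',\alpha''$; this uniformity is what makes the abstract theorem applicable.

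Then I would simply invoke Theorem~\ref{Th1}: with the above constant $M$, it yields, for every $\alpha\in(0,\alpha_0)$, a time
\[
T=\frac{\alpha_0-\alpha}{eM}=\Bigl(2e\alpha_0\bigl(m+z\alpha_0 e^{\frac{\llvert\phi\rrvert_1}{\alpha}-1}\bigr)\Bigr)^{-1}(\alpha_0-\alpha)
\]
and a unique function $[0,T)\ni t\mapsto B_t\in\mathcal{E}_\alpha$, continuously differentiable in $t$ with values in $\mathcal{E}_{\alpha'}$ for each $\alpha'<\alpha_0$ of the scale, solving $\partial_t B_t=\widetilde L B_t$ with $B_t|_{t=0}=B_0$. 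Matching the abstract bound with the displayed formula for $T$ is the only bookkeeping needed; uniqueness is part of the conclusion of Theorem~\ref{Th1}.

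The only genuine point requiring care — and thus the "main obstacle", though it is a modest one — is verifying that the abstract hypotheses are met in the exact form stated in Theorem~\ref{Th1}: namely that $\widetilde L$ is a \emph{bona fide} linear operator (linearity is clear from the integral representation in Proposition~\ref{Proposition1}, since $B\mapsto \delta B(\cdot;x,\emptyset)$ and $B\mapsto B(\theta^+,\theta^- e^{-\phi(x-\cdot)}+e^{-\phi(x-\cdot)}-1)$ are linear in $B$), that $B_0$ lies in the top space $\mathcal{E}_{\alpha_0}$, and that the scale constant in Proposition~\ref{Proposition2} is the one feeding into the formula for $T$. Once these identifications are made, the theorem follows immediately; there are no further estimates to perform.
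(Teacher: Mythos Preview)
Your proposal is correct and follows exactly the paper's own approach: the paper states Theorem~\ref{Theorem1} as an immediate consequence of Proposition~\ref{Proposition2} together with the abstract Ovsjannikov-type result Theorem~\ref{Th1}, and your write-up spells out precisely this identification, including the matching of the constant $M=2\alpha_0\bigl(m+z\alpha_0 e^{\llvert\phi\rrvert_1/\alpha-1}\bigr)$ with the formula $T=(eM)^{-1}(\alpha_0-\alpha)$.
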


\section{Lebowitz--Penrose-type scaling}\label{Section4}

In the lattice case, one of the basic questions in the theory of Ising models with long
range interactions is the investigation of the behavior of the system as the
range of the interaction increases to infinity, see e.g.~\cite{KL1999,Pre2009}.
In this section we extend this investigation to a continuous particle system,
namely, to the Widom--Rowlinson model.

By analogy with the lattice case, the starting point is the scale
transformation $\phi\mapsto\eps^d\phi(\eps\cdot)$,
$\eps >0$, of the operator $L$, that is\footnote{Here and below, for simplicity of notation, we have just written $\eps\gamma^\pm$, instead of $\{\eps x: x\in\gamma^\pm\}$.
In the sequel, we also will use the notation ${\eps^{-1}}{\eta^\pm}$ for the set $\{{\eps^{-1}}{x}: x\in\eta^\pm\}$.},
\begin{align*}
(L_\eps F)(\gamma^+,\gamma^-):=&\,m\sum_{x\in \gamma^+}\left(F(\gamma^+\setminus x,\gamma^-) -
F(\gamma^+,\gamma^-)\right)\\
&+m\sum_{y\in \gamma^-}\left(F(\gamma^+,\gamma^-\setminus y) - F(\gamma^+,\gamma^-)\right)\\
&+ z\int_{\R^d} dx\,e^{-\eps^d E(\eps x,\eps\gamma^-)} \left(F(\gamma^+\cup x,\gamma^-) - F(\gamma^+,\gamma^-)\right)\\
&+z\int_{\R^d} dy\,e^{-\eps^dE(\eps y,\eps\gamma^+)} \left(F(\gamma^+,\gamma^-\cup y) - F(\gamma^+,\gamma^-)\right).
\end{align*}
As explained before, in terms of correlation functions this yields an initial
value problem
\begin{equation}
\frac \partial {\partial t}k_t^{(\eps)}=\widehat L_\eps^*k^{(\eps)}_t,\quad
{k_t^{(\eps)}}_{|t=0}=k_0^{(\eps)},\label{Equation4}
\end{equation}
for a proper scaled $k_0^{(\eps)}$ initial correlation function, which
corresponds to a compressed initial particle system. More precisely, we
consider the following mapping
\[
(S_\eps k)(\eta^+,\eta^-):=k(\eps\eta^+,\eps\eta^-),\quad \eps>0,
\]
and we choose a singular initial correlation function $k_0^{(\eps)}$ such that
its renormalization $k^{(\eps)}_{0,\mathrm{ren}}:=S_{\eps^{-1}}k_0^{(\eps)}$
converges pointwisely as $\eps$ tends to zero to a function which is
independent of $\eps$. This leads then to a renormalized version of the
initial value problem \eqref{Equation4},
\begin{equation}\label{lalala}
\frac \partial {\partial t}k_{t,\mathrm{ren}}^{(\eps)}={\widehat L_{\eps,\mathrm{ren}}}^*k^{(\eps)}_{t,\mathrm{ren}},\quad
{k_{t,\mathrm{ren}}^{(\eps)}}_{|t=0}=k_{0,\mathrm{ren}}^{(\eps)},
\end{equation}
with ${\widehat L_{\eps,\mathrm{ren}}}^*=S_{\eps^{-1}}{\widehat L_\eps}^*S_{\eps}$,
cf.~\cite{FKK2010a}. Clearly,
\[
k^{(\eps)}_{t,\mathrm{ren}}(\eta^+,\eta^-)=(S_{\eps^{-1}} k_t^{(\eps)})(\eta^+,\eta^-)
=k_t^{(\eps)}\left({\eps^{-1}}{\eta^+},{\eps^{-1}}{\eta^-}\right),
\]
provided solutions to \eqref{Equation4} and to \eqref{lalala} exist.

In terms of GF, this scheme yields
\[
B_{t,\mathrm{ren}}^{(\eps)}(\theta^+,\theta^-):=\int_{\Gamma_0^2}d\lambda^2(\eta^+,\eta^-)\,e_\lambda(\theta^+,\theta^-;\eta^+,\eta^-)k_{t,\mathrm{ren}}^{(\eps)}(\eta^+,\eta^-)
\]
leading, as in \eqref{obtevol}, to the initial value problem
\begin{equation}
\frac{\partial}{\partial t}B_{t,\mathrm{ren}}^{(\eps)}
=\widetilde L_{\eps, \mathrm{ren}}B_{t,\mathrm{ren}}^{(\eps)},\quad
{B_{t,\mathrm{ren}}^{(\eps)}}_{|t=0}= B_{0,\mathrm{ren}}^{(\eps)}\label{V12}
\end{equation}
with
\[
(\widetilde L_{\eps,\mathrm{ren}}B)(\theta^+,\theta^-)=
\int_{\Gamma _0^2}d\lambda^2(\eta^+,\eta^-)\,(\widehat L_{\eps,\mathrm{ren}}e_\lambda(\theta^+,\theta^-))(\eta^+,\eta^-)k(\eta^+,\eta^-).
\]
Here, by a dual relation like the one in \eqref{duality},
$\widehat L_{\eps, \mathrm{ren}}=S_{\eps}^*\widehat L_\eps S_{\eps^{-1}}^*$ with
\begin{align}
(S_\eps^*G)(\eta^+,\eta^-)=&\,\eps^{-d(|\eta^+|+|\eta^-|)}
G\left({\eps^{-1}}{\eta^+},{\eps^{-1}}{\eta^-}\right),\label{Equation5}\\
(S_{\eps^{-1}}^*G)(\eta^+,\eta^-)=&\,\eps^{d(|\eta^+|+|\eta^-|)}
G(\eps\eta^+,\eps\eta^-).\label{Equation7}
\end{align}

In the sequel we fix the notation
\begin{equation}\label{psieps}
\psi_\eps(x)= \frac{e^{-\eps^d\phi (x)}-1}{\eps^d}, \qquad x\in\R^d,\  \eps>0.
\end{equation}

\begin{proposition}\label{Proposition3}
For all $\eps >0$ and all $\theta^\pm\in L^1$, we have
\begin{align*}
&(\widetilde L_{\eps,\mathrm{ren}}B)(\theta^+,\theta^-)\\
=&-\int_{\R^d}dx\,\theta^+(x)\left(m\delta B(\theta^+,\theta^-; x,\emptyset)
-zB\left(\theta^+,\theta^- e^{-\eps^d\phi (x -\cdot )}+\psi_\eps(x-\cdot)\right)\right)\\
&-\int_{\R^d}dy\,\theta^-(y)\left(m\delta B(\theta^+,\theta^-;\emptyset, y)
-zB\left(\theta^+ e^{-\eps^d\phi (y -\cdot )}+\psi_\eps(y-\cdot),\theta^-\right)\right).
\end{align*}
\end{proposition}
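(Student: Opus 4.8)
The plan is to run the argument of Proposition~\ref{Proposition1} with the scaled data and then to conjugate by the operators $S_\eps^*,S_{\eps^{-1}}^*$. Observe first that $L_\eps$ is exactly the operator $L$ with the pair potential $\phi$ replaced by $\phi_\eps:=\eps^d\phi(\eps\,\cdot)$, i.e.\ with the relative energy $E(x,\gamma^-)$ replaced by $\eps^dE(\eps x,\eps\gamma^-)=\sum_{y\in\gamma^-}\phi_\eps(x-y)$; moreover $\phi_\eps$ is again non-negative and integrable, with $\llvert\phi_\eps\rrvert_1=\llvert\phi\rrvert_1$. Since the $\KK$-transform does not involve $\phi$, the formula for $\widehat L_\eps G$ is obtained from that for $\widehat LG$ used in the proof of Proposition~\ref{Proposition1} simply by this substitution, so that specializing to $G=e_\lambda(\theta^+,\theta^-)$ and using \eqref{Equation8} with $\phi_\eps$ in place of $\phi$ I get, for all $\theta^\pm\in L^1$ and $\lambda^2$-a.a.\ $(\eta^+,\eta^-)\in\Gamma_0^2$,
\begin{align*}
&(\widehat L_\eps e_\lambda(\theta^+,\theta^-))(\eta^+,\eta^-)=-m\left(|\eta^+|+|\eta^-|\right)e_\lambda(\theta^+,\theta^-;\eta^+,\eta^-)\\
&\quad+z\int_{\R^d}dx\,\theta^+(x)\,e_\lambda\bigl(\theta^+,\theta^-e^{-\phi_\eps(x-\cdot)}+e^{-\phi_\eps(x-\cdot)}-1;\eta^+,\eta^-\bigr)\\
&\quad+z\int_{\R^d}dy\,\theta^-(y)\,e_\lambda\bigl(\theta^+e^{-\phi_\eps(y-\cdot)}+e^{-\phi_\eps(y-\cdot)}-1,\theta^-;\eta^+,\eta^-\bigr).
\end{align*}

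Next I would evaluate $\widehat L_{\eps,\mathrm{ren}}e_\lambda(\theta^+,\theta^-)=S_\eps^*\widehat L_\eps\bigl(S_{\eps^{-1}}^*e_\lambda(\theta^+,\theta^-)\bigr)$ in three steps. By \eqref{Equation7} and the multiplicativity of $e_\lambda$, $S_{\eps^{-1}}^*e_\lambda(\theta^+,\theta^-)=e_\lambda\bigl(\eps^d\theta^+(\eps\,\cdot),\eps^d\theta^-(\eps\,\cdot)\bigr)$, so the previous display applies with $\theta^\pm$ replaced by $\eps^d\theta^\pm(\eps\,\cdot)$. I then apply $S_\eps^*$ via \eqref{Equation5}: the factor $\eps^{-d(|\eta^+|+|\eta^-|)}$ splits as $\prod_{x\in\eta^+}\eps^{-d}\prod_{y\in\eta^-}\eps^{-d}$ over the products defining the Lebesgue--Poisson exponentials, and after evaluating the arguments at $(\eps^{-1}\eta^+,\eps^{-1}\eta^-)$ one uses the elementary identities $\eps^{-d}\bigl(\eps^d\theta(\eps\eps^{-1}v)\bigr)=\theta(v)$ and, recalling \eqref{psieps},
\[
\eps^{-d}\Bigl(e^{-\phi_\eps(x-\eps^{-1}v)}-1\Bigr)=\eps^{-d}\Bigl(e^{-\eps^d\phi(\eps x-v)}-1\Bigr)=\psi_\eps(\eps x-v).
\]
Finally the change of variables $x\mapsto\eps^{-1}x$, $y\mapsto\eps^{-1}y$ in the two birth integrals (whose Jacobian cancels the remaining factor $\eps^d$) collapses everything to
\begin{align*}
&(\widehat L_{\eps,\mathrm{ren}}e_\lambda(\theta^+,\theta^-))(\eta^+,\eta^-)=-m\left(|\eta^+|+|\eta^-|\right)e_\lambda(\theta^+,\theta^-;\eta^+,\eta^-)\\
&\quad+z\int_{\R^d}dx\,\theta^+(x)\,e_\lambda\bigl(\theta^+,\theta^-e^{-\eps^d\phi(x-\cdot)}+\psi_\eps(x-\cdot);\eta^+,\eta^-\bigr)\\
&\quad+z\int_{\R^d}dy\,\theta^-(y)\,e_\lambda\bigl(\theta^+e^{-\eps^d\phi(y-\cdot)}+\psi_\eps(y-\cdot),\theta^-;\eta^+,\eta^-\bigr).
\end{align*}

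To conclude I would integrate this identity against $k$ over $(\Gamma_0^2,\lambda^2)$, exactly as at the end of the proof of Proposition~\ref{Proposition1}: by definition of $\widetilde L_{\eps,\mathrm{ren}}$ the two $z$-terms become $z\int_{\R^d}dx\,\theta^+(x)B(\theta^+,\theta^-e^{-\eps^d\phi(x-\cdot)}+\psi_\eps(x-\cdot))$ and $z\int_{\R^d}dy\,\theta^-(y)B(\theta^+e^{-\eps^d\phi(y-\cdot)}+\psi_\eps(y-\cdot),\theta^-)$, while in the $-m$ term one factorizes $e_\lambda$, writes $|\eta^+|e_\lambda(\theta^+,\eta^+)=\sum_{x\in\eta^+}\theta^+(x)e_\lambda(\theta^+,\eta^+\setminus x)$ (and symmetrically in $\eta^-$) and applies \eqref{Equation3}, getting $-m\int_{\R^d}\theta^+(x)\delta B(\theta^+,\theta^-;x,\emptyset)\,dx-m\int_{\R^d}\theta^-(y)\delta B(\theta^+,\theta^-;\emptyset,y)\,dy$. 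Regrouping the four integrals yields the asserted formula. The only delicate point is the bookkeeping in the conjugation step: one must track precisely how the powers $\eps^{\pm d(|\eta^+|+|\eta^-|)}$ distribute over the $e_\lambda$-products and verify that it is exactly the combination $\eps^{-d}(e^{-\eps^d\phi}-1)=\psi_\eps$ --- and not $e^{-\eps^d\phi}-1$ --- that survives, which is where the renormalization genuinely enters. Finiteness of all the integrals (so that $\widetilde L_{\eps,\mathrm{ren}}B$ is well-defined) is routine and can be obtained by the estimates of Lemmata~\ref{Lemma1}--\ref{Lemma2} as in Proposition~\ref{Proposition2}, using $\llvert\phi_\eps\rrvert_1=\llvert\phi\rrvert_1$ and $|\psi_\eps|\le\phi\in L^1$.
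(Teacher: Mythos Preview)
Your proof is correct and follows essentially the same route as the paper's: compute $\widehat L_{\eps,\mathrm{ren}}e_\lambda(\theta^+,\theta^-)$ via the conjugation $S_\eps^*\widehat L_\eps S_{\eps^{-1}}^*$, track the powers of $\eps$ through a change of variables, and then integrate against $k$ exactly as in Proposition~\ref{Proposition1}. Your observation that $S_{\eps^{-1}}^*e_\lambda(\theta^+,\theta^-)=e_\lambda\bigl(\eps^d\theta^+(\eps\cdot),\eps^d\theta^-(\eps\cdot)\bigr)$ lets you feed this directly into the already-reduced formula for $\widehat L_\eps e_\lambda$, which is marginally more economical than the paper's version (which works from the general expression for $\widehat L_\eps G$ with subset sums and invokes \eqref{Equation8} only at the end), but the substance is identical.
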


\begin{proof} Similarly to the proof of Proposition \ref{Proposition1},
one obtains the following explicit form for
$\widehat L_\eps:=\KK^{-1}L_\eps\KK$,
\begin{align}
&(\widehat L_\eps G)(\eta^+,\eta^-)\label{Equation6}\\
=&-m\left(|\eta^+|+|\eta^-|\right)G(\eta^+,\eta^-)\nonumber\\
&+z\sum_{\xi^-\subseteq\eta^-}\int_{\R^d}dx\,G(\eta^+\cup x,\xi^-)e^{-\eps^dE(\eps x,\eps\xi^-)}e_\lambda(e^{-\eps^d\phi(\eps(x-\cdot))}-1,\eta^-\setminus\xi^-)\nonumber\\
&+z\sum_{\xi^+\subseteq\eta^+}\int_{\R^d}dy\,G(\xi^+,\eta^-\cup y)
e^{-\eps^d E(\eps y,\eps\xi^+)}e_\lambda(e^{-\eps^d\phi(\eps(y-\cdot))}-1,\eta^+\setminus\xi^+).\nonumber
\end{align}
Therefore, for any $\theta^\pm\in L^1$, it follows from
\eqref{Equation5} and \eqref{Equation6}
\[
(\widehat L_{\eps,\mathrm{ren}}e_\lambda(\theta^+,\theta^-))(\eta^+,\eta^-)\\
=\eps^{-d(|\eta^+|+|\eta^-|)}(\widehat L_\eps S_{\eps^{-1}}^*
e_\lambda(\theta^+,\theta^-))\left({\eps^{-1}}{\eta^+},{\eps^{-1}}{\eta^-}\right)
\]
with
\begin{align*}
&(\widehat L_\eps S_{\eps^{-1}}^*
e_\lambda(\theta^+,\theta^-))\left({\eps^{-1}}{\eta^+},{\eps^{-1}}{\eta^-}\right)\\
=&-m\left(\left|{\eps^{-1}}{\eta^+}\right|+\left|{\eps^{-1}}{\eta^-}\right|\right)(S_{\eps^{-1}}^*
e_\lambda(\theta^+,\theta^-))\left({\eps^{-1}}{\eta^+},{\eps^{-1}}{\eta^-}\right)\\
&+z\!\!\!\sum_{\xi^-\subseteq{\eps^{-1}}{\eta^-}}\int_{\R^d}dx\,
(S_{\eps^{-1}}^* e_\lambda(\theta^+,\theta^-))\left(({\eps^{-1}}{\eta^+})\cup x,\xi^-\right)\\&\qquad\qquad\qquad\times e^{-\eps^dE(\eps x,\eps\xi^-)}e_\lambda\left(e^{-\eps^d\phi(\eps(x-\cdot))}-1,({\eps^{-1}}{\eta^-})
\setminus\xi^-\right)\\
&+z\!\!\!\sum_{\xi^+\subseteq{\eps^{-1}}{\eta^+}}\int_{\R^d}dy\,(S_{\eps^{-1}}^* e_\lambda(\theta^+,\theta^-))\left(\xi^+,({\eps^{-1}}{\eta^-})\cup y\right)\\&\qquad\qquad\qquad\times
e^{-\eps^d E(\eps y,\eps\xi^+)}e_\lambda\left(e^{-\eps^d\phi(\eps(y-\cdot))}-1,({\eps^{-1}}{\eta^+})
\setminus\xi^+\right),
\end{align*}
where $|{\eps^{-1}}{\eta^\pm}|=|\eta^\pm|$. Moreover, for a generic
function $G$ one has
\[
\sum_{\xi\subseteq{\eps^{-1}}{\eta}}G(\xi)=\sum_{\xi\subseteq\eta}G\left({\eps^{-1}}{\xi}\right),
\]
allowing us to rewrite the latter equality as
\begin{align*}
&(\widehat L_\eps S_{\eps^{-1}}^*
e_\lambda(\theta^+,\theta^-))\left({\eps^{-1}}{\eta^+},{\eps^{-1}}{\eta^-}\right)\\
=&-m(|\eta^+|+|\eta^-|)(S_{\eps^{-1}}^*
e_\lambda(\theta^+,\theta^-))\left({\eps^{-1}}{\eta^+},{\eps^{-1}}{\eta^-}\right)\\
&+z\!\!\!\sum_{\xi^-\subseteq\eta^-}\int_{\R^d}dx\,
(S_{\eps^{-1}}^* e_\lambda(\theta^+,\theta^-))\left(({\eps^{-1}}{\eta^+})\cup x,{\eps^{-1}}{\xi^-}\right)\\&\qquad\qquad\qquad\times e^{-\eps^d
E(\eps x,\xi^-)}
e_\lambda\left(e^{-\eps^d\phi(\eps x-\cdot)}-1,\eta^-\setminus\xi^-\right)\\
&+z\!\!\!\sum_{\xi^+\subseteq\eta^+}\int_{\R^d}dy\,(S_{\eps^{-1}}^* e_\lambda(\theta^+,\theta^-))\left({\eps^{-1}}{\xi^+},({\eps^{-1}}{\eta^-})\cup y\right)\\&\qquad\qquad\qquad\times
e^{-\eps^d E(\eps y,\xi^+)}e_\lambda\left(e^{-\eps^d\phi(\eps y-\cdot)}-1,\eta^+\setminus\xi^+\right).
\end{align*}
As a result, since by \eqref{Equation7}
%\begin{align*}
%(S_{\eps^{-1}}^*e_\lambda(\theta^+,\theta^-))\left({\eps^{-1}}{\eta^+},{\eps^{-1}}{\eta^-}\right)=&\,\eps^{d(|\eta^+|+|\eta^-|)}
%e_\lambda(\theta^+,\theta^-;\eta^+,\eta^-),\\
%(S_{\eps^{-1}}^*e_\lambda(\theta^+,\theta^-))\left(({\eps^{-1}}{\xi^+})\cup x,{\eps^{-1}}{\eta^-}\right)=&\,\eps^{d(|\eta^+|+|\eta^-|+1)}
%e_\lambda(\theta^+,\theta^-;\xi^+\cup\eps x,\eta^-),\\
%(S_{\eps^{-1}}^*e_\lambda(\theta^+,\theta^-))\left({\eps^{-1}}{\eta^+},({\eps^{-1}}{\xi^-})\cup y\right)=&\,\eps^{d(|\eta^+|+|\eta^-|+1)}
%e_\lambda(\theta^+,\theta^-;\eta^+,\xi^-\cup\eps y),
%\end{align*}
\begin{align*}
(S_{\eps^{-1}}^*e_\lambda(\theta^+,\theta^-))\left({\eps^{-1}}{\eta^+},{\eps^{-1}}{\eta^-}\right)=&\,\eps^{d(|\eta^+|+|\eta^-|)}
e_\lambda(\theta^+,\theta^-;\eta^+,\eta^-),\\
(S_{\eps^{-1}}^*e_\lambda(\theta^+,\theta^-))\left(({\eps^{-1}}{\eta^+})\cup
x,{\eps^{-1}}{\xi^-}\right)=&\,\eps^{d(|\eta^+|+|\xi^-|+1)}
e_\lambda(\theta^+,\theta^-;\eta^+\cup\eps x,\xi^-),\\
(S_{\eps^{-1}}^*e_\lambda(\theta^+,\theta^-))\left({\eps^{-1}}{\xi^+},({\eps^{-1}}{\eta^-})\cup
y\right)=&\,\eps^{d(|\xi^+|+|\eta^-|+1)}
e_\lambda(\theta^+,\theta^-;\xi^+,\eta^-\cup\eps y),
\end{align*}
a change of variables, $\eps x\mapsto \omega_1$,
$\eps y\mapsto \omega_2$, followed by an application of equality
\eqref{Equation8} lead at the end to
\begin{align*}
&(\widehat L_{\eps,\mathrm{ren}}e_\lambda(\theta^+,\theta^-))(\eta^+,\eta^-)\\
=&-m(|\eta^+|+|\eta^-|)e_\lambda(\theta^+,\theta^-;\eta^+,\eta^-)\\
&+z\int_{\R^d}dx\,\theta^+(x)e_\lambda\left(\theta^+,\theta^- e^{-\eps^d\phi (x -\cdot )}+\psi_\eps(x-\cdot);\eta^+,\eta^-\right)\\
&+z\int_{\R^d}dy\,\theta^-(y)e_\lambda\left(\theta^+ e^{-\eps^d\phi (y -\cdot )}+\psi_\eps(y-\cdot),\theta^-;\eta^+,\eta^-\right),
\end{align*}
where $\psi_\eps$ is the function defined in \eqref{psieps}.

Similar arguments used to prove Proposition \ref{Proposition1} then yield
the required expression for the operator $\widetilde L_{\eps,\mathrm{ren}}$.
\end{proof}

\begin{remark}
The proof of Proposition \ref{Proposition3} yields an explicit
form for the operator $\widehat L_{\eps,\mathrm{ren}}$. One can show that the
mesoscopic scaling in the sense of Vlasov, cf.~\cite{FKK2010a,FKK2010b}, gives the same expression for the corresponding operator $\widehat L_{\eps,\mathrm{ren}}$.
\end{remark}

\begin{proposition}\label{Proposition4}
(i) If $B\in\mathcal{E}_{\alpha}$ for
some $\alpha >0$, then, for all $\theta^\pm\in L^1$,
$(\widetilde L_{\eps,\mathrm{ren}}B)(\theta^+,\theta^-)$ converges as $\eps$
tends to zero to
\begin{align*}
&(\widetilde L_{LP} B)(\theta^+,\theta^-)\\
:=&-\int_{\R^d}dx\,\theta^+(x)\left(m\delta B(\theta^+,\theta^-;x,\emptyset) -zB(\theta^+,\theta^--\phi(x-\cdot))\right)\\
&-\int_{\R^d}dy\,\theta^-(y)\left(m\delta B(\theta^+,\theta^-;\emptyset,y) -zB(\theta^+-\phi(y-\cdot),\theta^-)\right).
\end{align*}
(ii) Let $\alpha_0>\alpha>0$ be given. If $B\in\mathcal{E}_{\alpha''}$ for some
$\alpha''\in (\alpha, \alpha_0]$, then $\widetilde L_{\eps, \mathrm{ren}}B,\widetilde{L}_{LP} B \in\mathcal{E}_{\alpha'}$ for all $\alpha'>0$ such that
$\alpha\leq\alpha'<\alpha''$. Moreover,
\[
\|\widetilde{L}_{\#}B\|_{\alpha'}\leq
\frac{2\alpha_0}{\alpha''-\alpha'}
\Bigl(m+z\alpha_0e^{\frac{\llvert\phi\rrvert_1}{\alpha}-1}\Bigr)
\|B\|_{\alpha''},
\]
where $\widetilde{L}_{\#}$ denotes either $\widetilde{L}_{\eps, \mathrm{ren}}$ or
$\widetilde{L}_{LP}$.
\end{proposition}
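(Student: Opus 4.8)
The plan is to read both assertions off the explicit expression for $\widetilde L_{\eps,\mathrm{ren}}$ obtained in Proposition~\ref{Proposition3}, combined with Lemmata~\ref{Lemma1} and \ref{Lemma2}, in complete analogy with the proof of Proposition~\ref{Proposition2}. For part~(i) I would fix $\theta^\pm\in L^1$ and pass to the limit $\eps\to0$ term by term in that formula; the two summands carrying $m\vd B(\theta^+,\theta^-;x,\emptyset)$ and $m\vd B(\theta^+,\theta^-;\emptyset,y)$ are independent of $\eps$, so the only issue is the two integrals containing $B$ at the $\eps$-dependent arguments.

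The elementary facts I would rely on are that, for a.a.~$u\in\R^d$, $e^{-\eps^d\phi(u)}\to1$ and $\psi_\eps(u)\to-\phi(u)$ as $\eps\to0$, together with the two-sided bound $-\phi\le\psi_\eps\le0$, which follows from $1-t\le e^{-t}\le1$ for $t\ge0$; in particular $|\psi_\eps|\le\phi\in L^1$. From these, for each fixed $x$,
\[
\bigl\|\theta^-e^{-\eps^d\phi(x-\cdot)}+\psi_\eps(x-\cdot)-\bigl(\theta^--\phi(x-\cdot)\bigr)\bigr\|_{L^1}\longrightarrow 0\quad(\eps\to0),
\]
because the difference equals $\theta^-\bigl(e^{-\eps^d\phi(x-\cdot)}-1\bigr)+\bigl(\psi_\eps(x-\cdot)+\phi(x-\cdot)\bigr)$, whose two summands are dominated pointwise by $|\theta^-|$ and by $2\phi$, respectively, and tend to $0$ pointwise, so dominated convergence applies. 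Since $B$ is entire, hence continuous on $L^1\times L^1$, this gives $B(\theta^+,\theta^-e^{-\eps^d\phi(x-\cdot)}+\psi_\eps(x-\cdot))\to B(\theta^+,\theta^--\phi(x-\cdot))$ for every $x$. To exchange limit and $x$-integration I would apply dominated convergence once more: from $B\in\mathcal{E}_\alpha$ and the uniform bound $\llvert\theta^-e^{-\eps^d\phi(x-\cdot)}+\psi_\eps(x-\cdot)\rrvert_1\le\llvert\theta^-\rrvert_1+\llvert\phi\rrvert_1$, the integrand is controlled by $|\theta^+(x)|\,\|B\|_\alpha\,e^{\frac1\alpha(\llvert\theta^+\rrvert_1+\llvert\theta^-\rrvert_1+\llvert\phi\rrvert_1)}$, which is integrable in $x$; the same estimate also shows that the right-hand side $(\widetilde L_{LP}B)(\theta^+,\theta^-)$ is well defined. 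The $y$-integral is handled identically, which proves part~(i).

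For part~(ii) the argument is the one behind Proposition~\ref{Proposition2}. In Lemma~\ref{Lemma2} I would take $\varphi(x,\cdot)=e^{-\eps^d\phi(x-\cdot)}$, $\psi(x,\cdot)=\psi_\eps(x-\cdot)$ to treat $\widetilde L_{\eps,\mathrm{ren}}$, and $\varphi(x,\cdot)\equiv1$, $\psi(x,\cdot)=-\phi(x-\cdot)$ to treat $\widetilde L_{LP}$; in both cases $\|\varphi(x,\cdot)\|_{L^\infty}\le1=:c_0$ and, since $|\psi_\eps|\le\phi$, $\llvert\psi(x,\cdot)\rrvert_1\le\llvert\phi\rrvert_1=:c_1$, uniformly in $\eps$. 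Applying Lemma~\ref{Lemma2} to the $z$-terms and Lemma~\ref{Lemma1} to the $m\vd B$-terms (with $\alpha$ there replaced by $\alpha''$), and then using $\alpha'\le\alpha_0$, $\alpha'\alpha''\le\alpha_0^2$ and $e^{\llvert\phi\rrvert_1/\alpha''-1}\le e^{\llvert\phi\rrvert_1/\alpha-1}$ (valid since $\alpha<\alpha''$), the triangle inequality yields
\[
\|\widetilde L_\# B\|_{\alpha'}\le 2m\,\frac{\alpha_0}{\alpha''-\alpha'}\,\|B\|_{\alpha''}+2z\,\frac{\alpha_0^2}{\alpha''-\alpha'}\,e^{\frac{\llvert\phi\rrvert_1}{\alpha}-1}\|B\|_{\alpha''},
\]
which is exactly the claimed bound, uniformly in $\eps$ and, via the second choice of $\varphi,\psi$, also for $\widetilde L_{LP}$. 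The main obstacle is the double passage to the limit in part~(i); once the uniform domination $|\psi_\eps|\le\phi$ is noted, however, it reduces to two routine applications of dominated convergence together with continuity of $B$, while part~(ii) involves nothing beyond careful bookkeeping of the constants $\alpha'$, $\alpha''$, $\alpha_0$.
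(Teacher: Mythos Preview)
Your proposal is correct and follows essentially the same approach as the paper's proof: for part~(i) you use the $L^1$-convergence of the shifted arguments, continuity of the entire functional $B$, and dominated convergence with the same uniform bound $\|B\|_\alpha e^{\frac1\alpha(\llvert\theta^+\rrvert_1+\llvert\theta^-\rrvert_1+\llvert\phi\rrvert_1)}$; for part~(ii) you invoke Lemmata~\ref{Lemma1} and~\ref{Lemma2} with exactly the same choices of $\varphi$ and $\psi$ as the paper. The only cosmetic difference is that you spell out the constant bookkeeping in~(ii) and the dominated-convergence argument for the $L^1$-limit in~(i) a bit more explicitly.
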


\begin{proof} (i) Given a $\theta\in L^1$, observe that for
a.a.~$\omega\in\R^d$ one clearly has
\[
\lim_{\eps\searrow 0}\left(\theta e^{-\eps^d\phi (\omega -\cdot )}+\psi_\eps(\omega-\cdot)\right)=\theta -\phi(\omega-\cdot)\ \mbox{in}\ L^1.
\]
Hence, due to the continuity of the functionals $B(\theta^+,\cdot)$ and
$B(\cdot,\theta^-)$ in $L^1$ (both are even entire on $L^1$) the following
limits hold a.e.
\begin{equation}
\begin{aligned}
&\lim_{\eps\searrow 0}B\Bigl(\theta^+,\theta^- e^{-\eps^d\phi (x -\cdot )}+\psi_\eps(x-\cdot)\Bigr)=B\bigl(\theta^+,\theta^- -\phi(x-\cdot)\bigr),\\
&\lim_{\eps\searrow 0}B\Bigl(\theta^+ e^{-\eps^d\phi (y -\cdot )}+\psi_\eps(y-\cdot),\theta^-\Bigr)=B\bigl(\theta^+ -\phi(y-\cdot),\theta^-\bigr),
\end{aligned}\label{dop-eq}
\end{equation}
showing the pointwise convergence of the integrand functions which
appear in the definition of
$(\widetilde L_{\eps, \mathrm{ren}}B)(\theta^+,\theta^-)$ and
$(\widetilde L_{LP} B)(\theta^+,\theta^-)$. Moreover, since the absolute value
of both expressions appearing in the left-hand side of \eqref{dop-eq} are
bounded, for all $\eps>0$, by
\[
\|B\|_\alpha\exp\biggl(\frac{1}{\alpha}\left(\llvert\theta^+\rrvert_1 +\llvert\theta^-\rrvert_1 +\llvert\phi\rrvert_1\right)\biggr),
\]
an application of the Lebesgue dominated convergence theorem leads then to the required limit.

(ii) Both estimates of norms follow as a particular application of Lemmata
\ref{Lemma1} and \ref{Lemma2}. For the case of
$\widetilde{L}_{\eps, \mathrm{ren}}B$, by replacing in Lemma \ref{Lemma2}
$\varphi$ by $e^{-\eps^d\phi}$ and $\psi$ by $\psi_\eps$, defined in
\eqref{psieps}, for the case of $\widetilde{L}_{LP}B$, by replacing in Lemma
\ref{Lemma2} $\varphi$ by the function identically equal to 1 and $\psi$ by
$-\phi$. Due to the positiveness and integrability assumptions on $\phi$ the
proof follows similarly to the proof of Proposition \ref{Proposition2}.
\end{proof}

Proposition \ref{Proposition4} (ii) provides similar estimate of norms for
$\widetilde{L}_{\eps, \mathrm{ren}}$, $\eps>0$, and the limiting mapping
$\widetilde{L}_{LP}$, namely, $\|\widetilde{L}_{\eps, \mathrm{ren}}B\|_{\alpha'},\|\widetilde{L}_{LP}B\|_{\alpha'}\leq \frac{M}{\alpha''-\alpha'}\|B\|_{\alpha''}$, $0<\alpha\leq\alpha'<\alpha''\leq\alpha_0$, with
\[
M:=2\alpha_0\left(m+z\alpha_0e^{\frac{\llvert\phi\rrvert_1}{\alpha}-1}\right).
\]
Therefore, given any
$B_{0,LP},B_{0,\mathrm{ren}}^{(\eps)}\in\mathcal{E}_{\alpha_0}$, $\eps>0$, it
follows from Theorem \ref{Th1} that for each
$\alpha\in\left(0,\alpha_0\right)$ and $\delta=\frac{1}{eM}$ there is a unique solution
$B_{t,\mathrm{ren}}^{(\eps)}:\left[0,\delta(\alpha_0-\alpha)\right)\to\mathcal{E}_\alpha$,
$\eps>0$, to each initial value problem \eqref{V12} and a unique solution
$B_{t,LP}:\left[0,\delta(\alpha_0-\alpha)\right)\to\mathcal{E}_\alpha$
to the initial value problem
\begin{equation}
\frac{\partial}{\partial t}B_{t,LP}=\widetilde L_{LP}B_{t,LP},\quad
{B_{t,LP}}_{|t=0}=B_{0,LP}.\label{V19}
\end{equation}
That is, independent of the initial value problem under consideration, the
solutions obtained are defined on the same time-interval and with values in
the same Banach space. Therefore, it is natural to analyze under which
conditions the solutions to \eqref{V12} converge to the solution to
\eqref{V19}. These conditions are stated in Theorem \ref{Theorem2} below and
they follow from the next result and a particular application of
\cite[Theorem 4.3]{FKO2011a}, recalled in Appendix below (Theorem
\ref{Thconv}).

\begin{proposition}\label{Proposition5}
Assume that $0\leq\phi\in L^1\cap L^\infty$ and let $\alpha_0>\alpha>0$ be
given. Then, for all $B\in\mathcal{E}_{\alpha''}$,
$\alpha''\in (\alpha, \alpha_0]$, the following estimate holds
 \begin{equation*}
\|\widetilde L_{\eps, \mathrm{ren}}B-\widetilde L_{LP} B\|_{\alpha'}\leq
2\eps^d z\|\phi\|_{L^\infty}e^{\frac{\llvert\phi\rrvert_1}{\alpha}}\left(
  \frac{\alpha_0}{\alpha''-\alpha'}\llvert\phi\rrvert_1 +\frac{\alpha_0^3}{(\alpha''-\alpha')^2e}\right)\|B\|_{\alpha''},
\end{equation*}
for all $\alpha'$ such that $\alpha\leq\alpha'<\alpha''$ and all $\eps>0$.
\end{proposition}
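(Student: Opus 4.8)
The plan is to estimate the difference operator $\widetilde L_{\eps,\mathrm{ren}}B-\widetilde L_{LP}B$ directly from the explicit formulas in Propositions~\ref{Proposition3} and \ref{Proposition4}(i). Subtracting the two expressions, the terms involving $m\,\delta B(\theta^+,\theta^-;\cdot)$ cancel exactly, so
\begin{align*}
&(\widetilde L_{\eps,\mathrm{ren}}B-\widetilde L_{LP}B)(\theta^+,\theta^-)\\
=&\,z\int_{\R^d}dx\,\theta^+(x)\Bigl(B\bigl(\theta^+,\theta^-e^{-\eps^d\phi(x-\cdot)}+\psi_\eps(x-\cdot)\bigr)-B\bigl(\theta^+,\theta^--\phi(x-\cdot)\bigr)\Bigr)\\
&+z\int_{\R^d}dy\,\theta^-(y)\Bigl(B\bigl(\theta^+e^{-\eps^d\phi(y-\cdot)}+\psi_\eps(y-\cdot),\theta^-\bigr)-B\bigl(\theta^+-\phi(y-\cdot),\theta^-\bigr)\Bigr).
\end{align*}
First I would handle each integrand by writing the difference of the two values of $B$ along a straight-line interpolation in the second (resp.\ first) argument: for fixed $x$ set $g_s:=\theta^--\phi(x-\cdot)+s\,h_\eps(x-\cdot)$ where $h_\eps(\omega):=\theta^-(\omega)(e^{-\eps^d\phi(\omega)}-1)+\psi_\eps(\omega)+\phi(\omega)$ is the $L^1$-discrepancy between the $\eps$-argument and the limiting argument, so that $B(\theta^+,g_1)-B(\theta^+,g_0)=\int_0^1\frac{d}{ds}B(\theta^+,g_s)\,ds=\int_0^1 dB(\theta^+,g_s;\emptyset,h_\eps(x-\cdot))\,ds$, and then estimate the first differential by the kernel bound from Subsection~\ref{Subsection23}: $|dB(\theta^+,g_s;\emptyset,h)|\le\|h\|_{L^1}\,r^{-1}\sup_{\llvert\theta^-_0\rrvert_1\le r}|B(\theta^+,g_s+\theta^-_0)|$, optimizing in $r$ as in Lemma~\ref{Lemma1}.

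The two remaining ingredients are then a pointwise $L^1$-bound on the discrepancy $h_\eps$ and an $L^\infty$-bound on the intermediate argument $g_s$ to control the $\mathcal{E}_{\alpha''}$-norm of $B$ evaluated there. For the first, I would use $|e^{-t}-1+t|\le t^2/2$ for $t\ge 0$ together with $0\le\eps^d\phi\le\eps^d\|\phi\|_{L^\infty}$: this gives $|\psi_\eps(\omega)+\phi(\omega)|=\eps^{-d}|e^{-\eps^d\phi(\omega)}-1+\eps^d\phi(\omega)|\le\frac{\eps^d}{2}\phi(\omega)^2\le\frac{\eps^d}{2}\|\phi\|_{L^\infty}\phi(\omega)$, hence $\|\psi_\eps(x-\cdot)+\phi(x-\cdot)\|_{L^1}\le\frac{\eps^d}{2}\|\phi\|_{L^\infty}\llvert\phi\rrvert_1$; and $|\theta^-(\omega)(e^{-\eps^d\phi(\omega)}-1)|\le\eps^d\|\phi\|_{L^\infty}|\theta^-(\omega)|$, hence that part contributes $\le\eps^d\|\phi\|_{L^\infty}\llvert\theta^-\rrvert_1$ to $\|h_\eps(x-\cdot)\|_{L^1}$. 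For the $L^\infty$ control of $g_s$: since $0\le e^{-\eps^d\phi}\le1$ and $|\psi_\eps|\le\phi$, the argument $\theta^-e^{-\eps^d\phi(x-\cdot)}+\psi_\eps(x-\cdot)$ has $L^1$-norm $\le\llvert\theta^-\rrvert_1+\llvert\phi\rrvert_1$, likewise $\theta^--\phi(x-\cdot)$, so every $g_s$ has $\llvert g_s\rrvert_1\le\llvert\theta^-\rrvert_1+\llvert\phi\rrvert_1$, and therefore $|B(\theta^+,g_s+\theta^-_0)|\le\|B\|_{\alpha''}\exp\bigl(\frac{1}{\alpha''}(\llvert\theta^+\rrvert_1+\llvert\theta^-\rrvert_1+\llvert\phi\rrvert_1+r)\bigr)$, and one may further bound $\frac1{\alpha''}\le\frac1\alpha$ in the $\phi$-term to match the $e^{\llvert\phi\rrvert_1/\alpha}$ factor in the claimed estimate.

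Assembling: the discrepancy bound produces two contributions inside the bracket of the claim — the term $\llvert\theta^-\rrvert_1$ part, after the $\theta^+(x)$ integration and an extra optimization in $r$ over a factor $\llvert\theta^+\rrvert_1\llvert\theta^-\rrvert_1 e^{-(\frac1{\alpha'}-\frac1{\alpha''})(\cdots)}$, yields the $\frac{\alpha_0^3}{(\alpha''-\alpha')^2 e}$ summand (two factors $\frac{\alpha''\alpha'}{\alpha''-\alpha'}\le\frac{\alpha_0^2}{\alpha''-\alpha'}$ and one factor $\frac1{e}$ from $x e^{-nx}\le\frac1{en}$, plus one more $\alpha_0$-type factor), while the $\|\psi_\eps+\phi\|_{L^1}\le\frac{\eps^d}2\|\phi\|_{L^\infty}\llvert\phi\rrvert_1$ part is constant in the integration variable and gives the $\frac{\alpha_0}{\alpha''-\alpha'}\llvert\phi\rrvert_1$ summand after a single optimization of $\llvert\theta^+\rrvert_1 e^{-(\frac1{\alpha'}-\frac1{\alpha''})\llvert\theta^+\rrvert_1}$. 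The symmetric computation for the $\theta^-(y)$ integral doubles the bound, producing the overall factor $2$. The main obstacle is purely bookkeeping: keeping the two optimizations in $r$ (one in the first differential, one in the $\theta^+$-integration) separate and tracking which powers of $\alpha',\alpha''$ collapse to $\alpha_0$ so that the final constants line up exactly with $\frac{\alpha_0}{\alpha''-\alpha'}\llvert\phi\rrvert_1+\frac{\alpha_0^3}{(\alpha''-\alpha')^2 e}$; no genuinely new estimate beyond the kernel bound of Subsection~\ref{Subsection23} and the elementary inequality $|e^{-t}-1+t|\le t^2/2$ is needed.
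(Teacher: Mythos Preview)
Your plan is correct and follows essentially the same route as the paper: cancel the $m\,\delta B$ terms, interpolate linearly in the shifted argument, bound the first variational derivative via the kernel estimate of Subsection~\ref{Subsection23} (the paper chooses $r=\alpha''$, giving the factor $e/\alpha''$), and then optimize $\llvert\theta^+\rrvert_1\llvert\theta^-\rrvert_1 e^{-n(\cdot)}$ and $\llvert\theta^\pm\rrvert_1 e^{-n(\cdot)}$ exactly as you describe. One small slip: your function $h_\eps$ cannot be written as a single function evaluated at $x-\cdot$, since the $\theta^-$ factor is evaluated at $\cdot$ while $\phi,\psi_\eps$ are evaluated at $x-\cdot$; just define the discrepancy directly as $y\mapsto\theta^-(y)(e^{-\eps^d\phi(x-y)}-1)+\psi_\eps(x-y)+\phi(x-y)$ and your $L^1$ bounds go through unchanged.
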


\begin{proof} Since
\begin{align}
&\left|(\widetilde L_{\eps, \mathrm{ren}}B)(\theta^+,\theta^-)-(\widetilde L_{LP} B)(\theta^+,\theta^-)\right|\nonumber\\
\leq&\,z\int_{\R^d}\!dx\,\left|\theta^+(x)\right|\nonumber\\
&\quad \times\left|B\!\!\left(\theta^+\!,\theta^- e^{-\eps^d\phi (x -\cdot )}+\psi_\eps(x -\cdot
)\!\right)\!-\!B\left(\theta^+,\theta^-\! -\!\phi (x -\cdot)\right)\right|\label{Ola}\\
&+z\!\!\int_{\R^d}\!dy\,\left|\theta^-(y)\right|\nonumber\\
&\quad \times\left|B\!\!\left(\theta^+ e^{-\eps^d\phi (y -\cdot )}+\psi_\eps (y -\cdot
),\theta^-\!\!\right)\!-\!B\left(\theta^+\! -\!\phi (y -\cdot),\theta^-\right)\right|,\label{Equation9}
\end{align}
first we will estimate \eqref{Ola}. For this purpose, given any
$\theta^+,\theta_1^-,\theta_2^-\in L^1$, let us consider the function
$C_{\theta^+,\theta_1^-,\theta_2^-}(t)=B\left(\theta^+,t\theta_1^-+(1-t)\theta_2^-\right)$,
$t\in\left[0,1\right]$. Hence
\begin{align*}
\frac{\partial}{\partial
t} C_{\theta^+,\theta_1^-,\theta_2^-}(t)=&\,\frac{\partial}{\partial
s} C_{\theta^+,\theta_1^-,\theta_2^-}(t+s)\Bigr|_{s=0}
\\=&\,\frac{\partial}{\partial
s} B\left(\theta^+,\theta_2^-+t(\theta_1^--\theta_2^-)+s(\theta_1^--\theta_2^-)\right)\Bigr|_{s=0}
\\=&\,\int_{\mathbb{R}^d}dy\,(\theta_1^-(y)-\theta_2^-(y))\,\vd B(\theta^+,\theta_2^-+t(\theta_1^--\theta_2^-);\emptyset,y),
\end{align*}
which leads to
\begin{align*}
&\left|B(\theta^+,\theta_1^-)-B(\theta^+,\theta_2^-)\right|
=\left|C_{\theta^+,\theta_1^-,\theta_2^-}(1)-C_{\theta^+,\theta_1^-,\theta_2^-}(0)\right|\\
\leq&\max_{t\in[0,1]}\int_{\mathbb{R}^d}dy\,\left|\theta_1^-(y)-\theta_2^-
(y)\right| \left|\vd B(\theta^+,\theta_2^-+t(\theta_1^--\theta_2^-);\emptyset, y)\right|\\
\leq&\,\llvert\theta_1^--\theta_2^-\rrvert_1 \max_{t\in[0,1]}\|\vd B(\theta^+,\theta_2^-+t(\theta_1^--\theta_2^-);\emptyset,\cdot)\|_{L^\infty},
\end{align*}
where, by similar arguments used to prove Lemma \ref{Lemma1},
\begin{align*}
&\left\|\vd B(\theta^+,\theta_2^-+t(\theta_1^--\theta_2^-);\emptyset,\cdot)\right\|_{L^\infty}\\
\leq&\frac{e}{\alpha''}\exp\left(\frac{\llvert\theta^+\rrvert_1 +\llvert\theta_2^-+t(\theta_1^--\theta_2^-)\rrvert_1 }{\alpha''}\right)\|B\|_{\alpha''}.
\end{align*}
As a result
\begin{align*}
&\left|B(\theta^+,\theta_1^-)-B(\theta^+,\theta_2^-)\right|\\
\leq&\frac{e^{\frac{\llvert\theta^+\rrvert_1 }{\alpha''}+1}}{\alpha''}\llvert\theta_1^--\theta_2^-\rrvert_1 \|B\|_{\alpha''}\max_{t\in[0,1]}\exp\left(\frac{t\llvert\theta_1^-\rrvert_1 +(1-t)\llvert\theta_2^-\rrvert_1 }{\alpha''}\right),
\end{align*}
for all $\theta_1^-,\theta_2^-\in L^1$. In particular, this shows that
\begin{align*}
&\left|  B\left(\theta^+,\theta^- e^{-\eps^d\phi (x -\cdot
)}+\psi_\eps (x -\cdot
)\right)-B\left(\theta^+,\theta^- -\phi (x -\cdot)\right)\right|\\\leq&
\,\eps^d\frac{e^{\frac{\llvert\theta^+\rrvert_1 }{\alpha''}+1}}{\alpha''}\|\phi\|_{L^\infty}\left(\llvert\theta^-\rrvert_1 +
\llvert\phi\rrvert_1\right)\|B\|_{\alpha''}\\&\qquad\times
\max_{t\in[0,1]}\exp\left(\frac{1}{\alpha''}\left(t\left(\llvert\theta^-\rrvert_1 +
\llvert\phi\rrvert_1\right)+(1-t)\left(\llvert\theta^-\rrvert_1 +\llvert\phi\rrvert_1 \right)\right)\right)\\=&\,
\frac{\eps^d}{\alpha''}\|\phi\|_{L^\infty}
\left(\llvert\theta^-\rrvert_1 +\llvert\phi\rrvert_1\right)
\exp\left(\frac{1}{\alpha''}\left(\llvert\theta^+\rrvert_1 +\llvert\theta^-\rrvert_1 +\llvert\phi\rrvert_1\right)+1\right)\|B\|_{\alpha''},
\end{align*}
where we have used the inequalities
\begin{align*}
\llvert\theta^- e^{-\eps^d\phi (x -\cdot )}-\theta^-\rrvert_1 &\leq\eps^d\|\phi\|_{L^\infty}\,\llvert\theta^-\rrvert_1 ,\\
\llvert\psi_\eps (x -\cdot
)+\phi (x -\cdot
)\rrvert_1 &\leq\eps^d\|\phi\|_{L^\infty}\,\llvert\phi\rrvert_1,\\
\llvert\theta^- e^{-\eps^d\phi (x -\cdot )}+\psi_\eps (x -\cdot
)\rrvert_1 &\leq\llvert\theta^-\rrvert_1 +\llvert\phi\rrvert_1.
\end{align*}

Of course, a similar approach may also be used to estimate \eqref{Equation9}. In this case,
given any $\theta^+_1,\theta_2^+,\theta^-\in L^1$ and the function defined by
$C_{\theta^+_1,\theta_2^+,\theta^-}(t)=B\left(t\theta_1^++(1-t)\theta_2^+,\theta^-\right)$,
$t\in\left[0,1\right]$, similar arguments lead to
\begin{align*}
&\Bigl|  B\bigl(\theta^+ e^{-\eps^d\phi (y -\cdot
)}+\psi_\eps (y -\cdot
),\theta^-\bigr)-B\left(\theta^+ -\phi (y -\cdot),\theta^-\right)\Bigr|\\\leq&\,
\frac{\eps^d}{\alpha''}\|\phi\|_{L^\infty}
\left(\llvert\theta^+\rrvert_1 +\llvert\phi\rrvert_1\right)
\exp\left(\frac{1}{\alpha''}\left(\llvert\theta^+\rrvert_1 +\llvert\theta^-\rrvert_1 +\llvert\phi\rrvert_1\right)+1\right)\|B\|_{\alpha''}.
\end{align*}

As a result, from the estimates derived for \eqref{Ola} and for
\eqref{Equation9} one obtains
\begin{align*}
&\|\widetilde L_{\eps, \mathrm{ren}}B-\widetilde L_{LP} B\|_{\alpha'}\\
\leq&\,2\eps^dz\frac{e^{\frac{\llvert\phi\rrvert_1 }{\alpha''}+1}}{\alpha''}\|\phi\|_{L^\infty}\|B\|_{\alpha''}\\
&\qquad\times\sup_{\theta^\pm\in L^1}\left(\llvert\theta^+\rrvert_1 \llvert\theta^-\rrvert_1
\exp\left(-\Bigl(\frac{1}{\alpha'}-\frac{1}{\alpha''}\Bigr)(\llvert\theta^+\rrvert_1 +\llvert\theta^-\rrvert_1 )\right)\right)\\
&+\eps^dz\frac{e^{\frac{\llvert\phi\rrvert_1}{\alpha''}+1}}{\alpha''}\|\phi\|_{L^\infty}\llvert\phi\rrvert_1 \|B\|_{\alpha''}\\
&\qquad\times\left\{\sup_{\theta^\pm\in L^1}\left(\llvert\theta^+\rrvert_1 \exp\left(-\Bigl(\frac{1}{\alpha'}-\frac{1}{\alpha''}\Bigr)(\llvert\theta^+\rrvert_1 +\llvert\theta^-\rrvert_1 )\right)\right)\right.\\
&\left.\qquad\qquad+ \sup_{\theta^\pm\in L^1}\left(\llvert\theta^-\rrvert_1 \exp\left(-\Bigl(\frac{1}{\alpha'}-\frac{1}{\alpha''}\Bigr)(\llvert\theta^+\rrvert_1 +\llvert\theta^-\rrvert_1 )\right)\right)\right\},
\end{align*}
and the proof follows using the inequalities
$xye^{-n(x+y)}=(xe^{-nx})(ye^{-ny})\leq\frac{1}{e^2n^2}$ and
$xe^{-n(x+y)}\leq xe^{-nx}\leq\frac{1}{en}$ for $x,y\geq0$, $n>0$.
\end{proof}

\begin{theorem}\label{Theorem2}
Given an $0<\alpha<\alpha_0$, let
$B_{t,\mathrm{ren}}^{(\eps)}, B_{t,LP}$,
$t\in\left[0,T\right)$, be the local solutions in $\mathcal{E}_{\alpha}$ to
the initial value problems \eqref{V12}, \eqref{V19}
with $B_{0,\mathrm{ren}}^{(\eps)},B_{0,LP}\in\mathcal{E}_{\alpha_0}$. If
$0\leq\phi\in L^1\cap L^\infty$ and $\lim_{\eps\searrow 0}\|B_{0,\mathrm{ren}}^{(\eps)}-B_{0,LP}\|_{\alpha_0}=0$, then, for each $t\in\left[0,T\right)$,
\[
\lim_{\eps\searrow 0}\|B_{t,\mathrm{ren}}^{(\eps)}-B_{t,LP}\|_{\alpha}=0.
\]
\end{theorem}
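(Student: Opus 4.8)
The plan is to invoke the abstract Ovsjannikov-type convergence result (Theorem~\ref{Thconv} in the Appendix), which is tailored precisely to situations of this kind: two families of evolutions generated by operators acting in a scale of Banach spaces, where the generators are close to each other uniformly on the scale. All the structural hypotheses of that theorem have already been verified in the preceding propositions, so the proof should reduce to checking that the pieces fit together.

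First I would recall the setup. By Proposition~\ref{Proposition4}(ii), both $\widetilde L_{\eps,\mathrm{ren}}$ and $\widetilde L_{LP}$ satisfy the bound $\|\widetilde L_{\#}B\|_{\alpha'}\leq \frac{M}{\alpha''-\alpha'}\|B\|_{\alpha''}$ on the scale $\{\mathcal{E}_\alpha:0<\alpha\leq\alpha_0\}$, with $M=2\alpha_0\bigl(m+z\alpha_0e^{\frac{\llvert\phi\rrvert_1}{\alpha}-1}\bigr)$ independent of $\eps$. This is exactly the Ovsjannikov-type bound needed, and it is the same constant $M$ for both families, so the existence interval $[0,\delta(\alpha_0-\alpha))$ with $\delta=\frac{1}{eM}$ is common to all the solutions $B_{t,\mathrm{ren}}^{(\eps)}$ and to $B_{t,LP}$, as already observed in the text preceding \eqref{V19}. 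Thus $T$ can be taken to be any number strictly less than $\delta(\alpha_0-\alpha)$, and all solutions live in $\mathcal{E}_\alpha$ on $[0,T)$.

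Next I would feed into Theorem~\ref{Thconv} the quantitative closeness estimate from Proposition~\ref{Proposition5}: for $B\in\mathcal{E}_{\alpha''}$ and $\alpha\leq\alpha'<\alpha''\leq\alpha_0$,
\[
\|\widetilde L_{\eps,\mathrm{ren}}B-\widetilde L_{LP}B\|_{\alpha'}\leq
\eps^d\,\frac{N}{\alpha''-\alpha'}\Bigl(\llvert\phi\rrvert_1+\frac{\alpha_0^2}{(\alpha''-\alpha')e}\Bigr)\|B\|_{\alpha''}
\]
with $N=2z\|\phi\|_{L^\infty}e^{\frac{\llvert\phi\rrvert_1}{\alpha}}\alpha_0$; the key point is that the right-hand side is $O(\eps^d)$ uniformly, with a prefactor controlled by a negative power of $\alpha''-\alpha'$, which is precisely the structure the abstract theorem tolerates. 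Combined with the hypothesis $\lim_{\eps\searrow0}\|B_{0,\mathrm{ren}}^{(\eps)}-B_{0,LP}\|_{\alpha_0}=0$ on the initial data, Theorem~\ref{Thconv} yields $\lim_{\eps\searrow0}\|B_{t,\mathrm{ren}}^{(\eps)}-B_{t,LP}\|_{\alpha}=0$ for each $t\in[0,T)$, which is the assertion.

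The only real obstacle is bookkeeping: one must make sure the hypotheses of the abstract theorem are matched literally — in particular that the generator bound, the difference bound, and the initial-data convergence are stated in compatible form (same scale parameters $\alpha<\alpha_0$, same constant governing the existence time, the $\eps$-dependence entering only as a multiplicative $\eps^d\to0$ factor). Since Propositions~\ref{Proposition4} and \ref{Proposition5} were evidently set up with exactly these applications in mind, this matching is routine, and no genuinely new estimate is required; the proof is a direct citation of Theorem~\ref{Thconv}.

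\begin{proof}
This is a direct consequence of the Ovsjannikov-type convergence result in the scale of Banach spaces $\{\mathcal{E}_\alpha:0<\alpha\leq\alpha_0\}$ recalled in Theorem~\ref{Thconv}. Indeed, by Proposition~\ref{Proposition4}(ii) the operators $\widetilde L_{\eps,\mathrm{ren}}$, $\eps>0$, and $\widetilde L_{LP}$ all satisfy the bound
\[
\|\widetilde L_{\#}B\|_{\alpha'}\leq\frac{M}{\alpha''-\alpha'}\|B\|_{\alpha''},\qquad 0<\alpha\leq\alpha'<\alpha''\leq\alpha_0,
\]
with $M=2\alpha_0\bigl(m+z\alpha_0e^{\frac{\llvert\phi\rrvert_1}{\alpha}-1}\bigr)$ independent of $\eps$, where $\widetilde L_{\#}$ denotes either $\widetilde L_{\eps,\mathrm{ren}}$ or $\widetilde L_{LP}$. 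Hence, with $\delta=\frac{1}{eM}$, for each $\alpha\in(0,\alpha_0)$ all the solutions $B_{t,\mathrm{ren}}^{(\eps)}$ and the solution $B_{t,LP}$ are defined on the common interval $[0,\delta(\alpha_0-\alpha))\supseteq[0,T)$ with values in $\mathcal{E}_\alpha$, as already noted before \eqref{V19}. Moreover, by Proposition~\ref{Proposition5}, for $0\leq\phi\in L^1\cap L^\infty$ one has
\[
\|\widetilde L_{\eps,\mathrm{ren}}B-\widetilde L_{LP}B\|_{\alpha'}\leq\eps^d\,\frac{N_{\alpha',\alpha''}}{1}\,\|B\|_{\alpha''}
\]
for all $\alpha\leq\alpha'<\alpha''\leq\alpha_0$ and all $\eps>0$, where $N_{\alpha',\alpha''}=2z\|\phi\|_{L^\infty}e^{\frac{\llvert\phi\rrvert_1}{\alpha}}\bigl(\frac{\alpha_0}{\alpha''-\alpha'}\llvert\phi\rrvert_1+\frac{\alpha_0^3}{(\alpha''-\alpha')^2e}\bigr)$; in particular $\|\widetilde L_{\eps,\mathrm{ren}}-\widetilde L_{LP}\|$ tends to zero as $\eps\searrow0$ uniformly on the scale in the sense required by Theorem~\ref{Thconv}. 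Since, in addition, $\lim_{\eps\searrow0}\|B_{0,\mathrm{ren}}^{(\eps)}-B_{0,LP}\|_{\alpha_0}=0$, all hypotheses of Theorem~\ref{Thconv} are fulfilled, and it yields $\lim_{\eps\searrow0}\|B_{t,\mathrm{ren}}^{(\eps)}-B_{t,LP}\|_{\alpha}=0$ for each $t\in[0,T)$.
\end{proof}
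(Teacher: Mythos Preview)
Your proof is correct and follows exactly the same route as the paper: invoke Proposition~\ref{Proposition4}(ii) for the uniform generator bound, Proposition~\ref{Proposition5} for the difference estimate, and then apply the abstract convergence result Theorem~\ref{Thconv}. The paper's version is slightly more explicit in matching the hypotheses of Theorem~\ref{Thconv}, recording that the estimate of Proposition~\ref{Proposition5} fits the required form $\sum_{k=1}^p\frac{N_\eps}{(\alpha''-\alpha')^k}$ with $p=2$ and $N_\eps=2\eps^d z\|\phi\|_{L^\infty}e^{\llvert\phi\rrvert_1/\alpha}\max\{\alpha_0\llvert\phi\rrvert_1,\alpha_0^3/e\}$, but this is only the bookkeeping you yourself flagged.
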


\begin{proof}
This result follows as a consequence of Proposition \ref{Proposition5} and a
particular application of Theorem \ref{Thconv} for $p=2$ and
\[
N_\eps=2\eps^d z\|\phi\|_{L^\infty}e^{\frac{\llvert\phi\rrvert_1}{\alpha}}\max\left\{
\alpha_0\llvert\phi\rrvert_1,\frac{\alpha_0^3}{e}\right\}.\qedhere
\]
\end{proof}

A purpose of considering a mesoscopic limit of a given interacting particle
system is to derive a kinetic equation which in a closed form describes a
reduced system in such a way that it reflects some properties of the initial
one. To do this one should prove that the derived limiting time evolution
satisfies
the so-called chaos propagation principle. Namely, if one considers as an
initial distribution a Poisson product measure $\pi_{\rho^+_0dx,\rho^-_0dx}^2=\pi_{\rho_0^+ dx}\otimes\pi_{\rho_0^-dx}$, $\rho^\pm_0>0$, then, at each moment of
time $t>0$, the distribution must be Poissonian as well. Observe that due to
\eqref{BF_via_cf} and \eqref{meanLP}, the GF corresponding to a Poisson
product measure has an exponential form. This leads to the choice of an
initial GF in \eqref{V19}.

\begin{theorem}\label{reducedtokinetic}
If the initial condition $B_{0,LP}$ in \eqref{V19} is of the type
\[
B_{0,LP}(\theta^+,\theta^-)=\exp\left(\int_{\mathbb{R}^d}dx\,\rho^+_0(x)\theta^+(x)+
\int_{\mathbb{R}^d}dy\,\rho^-_0(y)\theta^-(y)\right),\quad\theta^\pm\in L^1
\]
for some $\rho_0^+,\rho_0^-\in L^\infty$ such that
$\|\rho_0^\pm\|_{L^\infty}\leq \frac{1}{\alpha_0}$, then the functional defined
for all $\theta^\pm\in L^1$ by
\begin{equation}\label{expsol}
B_{t,LP}(\theta^+,\theta^-)=\exp\left(\int_{\mathbb{R}^d}dx\,
\rho^+_t(x)\theta^+(x)+\int_{\mathbb{R}^d}dy\,\rho^-_t(y)\theta^-(y)\right),
\end{equation}
solves the initial value problem \eqref{V19} for $t\in\left[0,T\right)$,
provided $\rho^+_t,\rho^-_t$ are classical solutions to the system of equations
\begin{equation}\label{kinetic}
\begin{cases}
\dfrac{\partial}{\partial t}\rho_t^+=-m\rho_t^++ze^{-(\rho^-_t*\phi)},\\[6mm]
\dfrac{\partial}{\partial t}\rho_t^-=-m\rho_t^-+ze^{-(\rho^+_t*\phi)},
\end{cases} \quad t\in\left[0,T\right), x\in\R^d,
\end{equation}
such that, for each $t\in\left[0,T\right)$, $\rho^+_t,\rho^-_t\in L^\infty$ and
$\|\rho_t^\pm\|_{L^\infty}\leq \frac{1}{\alpha}$. Here $*$ denotes the usual
convolution of functions,
\[
(\rho_t^\pm * \phi)(x):=\int_{\R^d}\, dy\,  \phi(x-y)\rho_t^\pm(y), \quad x\in\R^d.
\]
\end{theorem}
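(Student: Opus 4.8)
The plan is to substitute the exponential ansatz \eqref{expsol} into the equation \eqref{V19} and to check, via the explicit form of $\widetilde L_{LP}$ from Proposition~\ref{Proposition4}(i), that it is satisfied precisely when $\rho_t^+$, $\rho_t^-$ solve the system \eqref{kinetic}. The computation rests on two elementary identities for an exponential functional $B(\theta^+,\theta^-)=\exp\bigl(\int_{\R^d}\rho^+\theta^+ + \int_{\R^d}\rho^-\theta^-\bigr)$ with $\rho^\pm\in L^\infty$: first, differentiation in the first, respectively second, argument yields the kernels $\vd B(\theta^+,\theta^-;x,\emptyset)=\rho^+(x)B(\theta^+,\theta^-)$ and $\vd B(\theta^+,\theta^-;\emptyset,y)=\rho^-(y)B(\theta^+,\theta^-)$; second, a shift of the second (respectively first) argument gives $B(\theta^+,\theta^--\phi(x-\cdot))=e^{-(\rho^-*\phi)(x)}B(\theta^+,\theta^-)$ and $B(\theta^+-\phi(y-\cdot),\theta^-)=e^{-(\rho^+*\phi)(y)}B(\theta^+,\theta^-)$, with the convolution as defined in the statement.

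Inserting these into the formula of Proposition~\ref{Proposition4}(i) with $B=B_{t,LP}$ one finds
\[
(\widetilde L_{LP}B_{t,LP})(\theta^+,\theta^-)=B_{t,LP}(\theta^+,\theta^-)\Bigl(\int_{\R^d}dx\,\theta^+(x)\bigl(ze^{-(\rho_t^-*\phi)(x)}-m\rho_t^+(x)\bigr)+\int_{\R^d}dy\,\theta^-(y)\bigl(ze^{-(\rho_t^+*\phi)(y)}-m\rho_t^-(y)\bigr)\Bigr),
\]
whereas differentiating \eqref{expsol} in $t$ (which is legitimate by dominated convergence, since the hypothesis that $\rho_t^\pm$ is a classical solution keeps $\partial_t\rho_t^\pm$ bounded in $L^\infty$) produces $\partial_t B_{t,LP}(\theta^+,\theta^-)=B_{t,LP}(\theta^+,\theta^-)\bigl(\int_{\R^d}\partial_t\rho_t^+\,\theta^+ + \int_{\R^d}\partial_t\rho_t^-\,\theta^-\bigr)$. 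Comparing the two, the identity $\partial_t B_{t,LP}=\widetilde L_{LP}B_{t,LP}$ holds for every $\theta^\pm\in L^1$ if and only if $\partial_t\rho_t^\pm=-m\rho_t^\pm+ze^{-(\rho_t^\mp*\phi)}$, i.e.\ \eqref{kinetic}; and at $t=0$ the ansatz equals $B_{0,LP}$ by construction.

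What then remains is to verify that the ansatz is a solution in the sense of the scale of Banach spaces. Membership follows from the bound $\|\rho_t^\pm\|_{L^\infty}\leq\frac1\alpha$, which gives $|B_{t,LP}(\theta^+,\theta^-)|\leq e^{\frac1\alpha(\llvert\theta^+\rrvert_1+\llvert\theta^-\rrvert_1)}$, hence $\|B_{t,LP}\|_\alpha\leq1$; in particular $B_{0,LP}\in\mathcal{E}_{\alpha_0}$ with $\|B_{0,LP}\|_{\alpha_0}\leq1$, so the initial datum has the regularity required for \eqref{V19}, and entireness on $L^1\times L^1$ is clear since $B_{t,LP}$ is the exponential of a bounded linear functional. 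For $t$-differentiability in the scale I would write $B_{t+h,LP}-B_{t,LP}=\int_t^{t+h}\partial_s B_{s,LP}\,ds$ pointwise and estimate the $\mathcal{E}_{\alpha'}$-norm, $\alpha'<\alpha$, of the difference quotient minus $\partial_t B_{t,LP}$; here one uses that $\partial_s\rho_s^\pm=-m\rho_s^\pm+ze^{-(\rho_s^\mp*\phi)}$ remains bounded in $L^\infty$ uniformly on compact $t$-subintervals (since $\|\rho_s^\pm\|_{L^\infty}\leq\frac1\alpha$ and $\|\rho_s^\mp*\phi\|_{L^\infty}\leq\frac{\llvert\phi\rrvert_1}{\alpha}$), so that the factor $\llvert\theta^+\rrvert_1+\llvert\theta^-\rrvert_1$ produced by $\partial_s B_{s,LP}$ is absorbed by the exponential weight gained in passing from $\alpha$ to $\alpha'$, exactly as in the proofs of Lemmata~\ref{Lemma1} and~\ref{Lemma2}. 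This makes $t\mapsto B_{t,LP}$ a continuous curve in $\mathcal{E}_\alpha$, differentiable into each $\mathcal{E}_{\alpha'}$ with $\alpha'<\alpha$, solving \eqref{V19} with the prescribed datum; by the uniqueness of the local solution of \eqref{V19} (a consequence of the Ovsjannikov-type Theorem~\ref{Th1}, as recorded before Theorem~\ref{Theorem2}) it is this solution.

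The only genuinely delicate point is this last step: $\partial_t B_{t,LP}$ lands only in the strictly smaller spaces $\mathcal{E}_{\alpha'}$, $\alpha'<\alpha$, so one must track carefully which space in the scale each object lives in and invoke the abstract framework of Theorem~\ref{Th1} exactly as in the proof of Theorem~\ref{Theorem1}. Everything else reduces to the two exponential identities above together with the substitution of \eqref{kinetic}.
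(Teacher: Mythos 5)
Your proposal is correct and follows essentially the same route as the paper's proof: compute the kernels $\vd B_{t,LP}(\theta^+,\theta^-;x,\emptyset)=\rho_t^+(x)B_{t,LP}(\theta^+,\theta^-)$ and the shifted values $B_{t,LP}(\theta^+,\theta^--\phi(x-\cdot))=e^{-(\rho_t^-*\phi)(x)}B_{t,LP}(\theta^+,\theta^-)$, substitute into the explicit form of $\widetilde L_{LP}$ from Proposition~\ref{Proposition4}(i), match with $\partial_t B_{t,LP}$ to recover \eqref{kinetic}, verify $B_{t,LP}\in\mathcal{E}_\alpha$ from the bound $\|\rho_t^\pm\|_{L^\infty}\leq\frac1\alpha$, and conclude by uniqueness of the solution to \eqref{V19}. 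Your additional care about the $t$-differentiability of the curve $t\mapsto B_{t,LP}$ within the scale $\{\mathcal{E}_{\alpha'}\}$ is a point the paper passes over silently, but it does not change the argument.
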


\begin{proof}
Let $B_{t,LP}$ be given by \eqref{expsol}. Then, for any
$\theta^\pm,\theta_1^+\in L^1$ one has
\[
\frac{\partial}{\partial z}
B_{t,LP}(\theta^++z\theta_1^+,\theta^-)\biggr|_{z=0}=B_{t,LP}(\theta^+,\theta^-)\int_{\mathbb{R}^d}dx
\rho_t^+(x)\theta_1^+(x),
\]
meaning $\vd B_{t,LP}(\theta^+,\theta^-;x,\emptyset)= B_{t,LP}(\theta^+,\theta^-) \rho_t^+(x)$. In a similar way one can show that
$\vd B_{t,LP}(\theta^+,\theta^-;\emptyset,y)= B_{t,LP}(\theta^+,\theta^-) \rho_t^-(y)$. Hence, for all $\theta^\pm\in L^1$,
\begin{align*}
&(\widetilde L_{LP} B_{t,LP})(\theta^+,\theta^-)\\
=&-B_{t,LP}(\theta^+,\theta^-)\int_{\R^d}dx\,\theta^+(x)\left(m\rho^+_t(x) - z
\exp\left(-(\rho_t^-*\phi)(x))\right)\right)\\
&-B_{t,LP}(\theta^+,\theta^-)\int_{\R^d}dy\,\theta^-(y)\left(m\rho^-_t(y) - z
\exp\left(-(\rho_t^+*\phi)(y))\right)\right).
\end{align*}
That is, if $\rho^\pm_t$ are classic solutions to \eqref{kinetic},
then the right-hand side of the latter equality is equal to
\[
B_{t,LP}(\theta^+,\theta^-)\frac{d}{dt}\left\{\int_{\mathbb{R}^d}dx\,
\rho_t^+(x)\theta^+(x) + \int_{\mathbb{R}^d}dy\,\rho_t^-(y)\theta^-(y)
\right\}= \frac{\partial}{\partial t} B_{t,LP}(\theta^+,\theta^-).
\]
This proves that $B_{t,LP}$, given by \eqref{expsol}, solves equation
\eqref{V19}. If, in addition, $\rho^\pm_t\in L^\infty$ with
$\|\rho_t^\pm\|_{L^\infty}\leq \frac{1}{\alpha}$, then one concludes from
\eqref{norminscale} that $B_{t,LP}\in\mathcal{E}_\alpha$ (the entireness of
$B_{t,LP}$ is clear by its definition \eqref{expsol}). The uniqueness of the
solution to \eqref{V19} completes the proof.
\end{proof}

Observe that the statement of Theorem \ref{reducedtokinetic} does not consider
any positiveness assumption on $\rho_t^\pm$. However, having in mind the
propagation of the chaos property, we are mostly interested in positive
solutions to the system \eqref{kinetic}. The next theorem states conditions
for the existence and uniqueness of such solutions.

\begin{theorem}\label{Theorem3}
Let $0\leq \rho_0^\pm\in L^\infty(\R^d)$ be given and let $c_0>0$ be such that
$\lVert \rho_0^\pm\rVert_{L^\infty}\leq c_0$. Set
$c=\max\bigl\{ c_0,\frac{z}{m}\bigr\}$. Then there exists a solution to
\eqref{kinetic} such that $0\leq \rho_t^\pm\in L^\infty$, $t>0$, and
\begin{equation}\label{eest}
\lVert \rho_t^\pm\rVert_{L^\infty}\leq c, \quad t>0.
\end{equation}
Such a solution is the unique non-negative solution to \eqref{kinetic} which
fulfills \eqref{eest}.
\end{theorem}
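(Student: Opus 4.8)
The plan is to recast \eqref{kinetic} in Duhamel (mild) form
\begin{equation*}
\rho_t^\pm = e^{-mt}\rho_0^\pm + z\int_0^t e^{-m(t-s)}\,e^{-(\rho_s^\mp*\phi)}\,ds =: \bigl(\Phi(\rho^+,\rho^-)\bigr)^\pm(t),
\end{equation*}
and to solve this integral equation by a Banach fixed-point argument in a space of non-negative, uniformly bounded $L^\infty$-valued paths. First I would note that a continuous mild solution is automatically a classical solution in the sense of Theorem~\ref{reducedtokinetic}: if $t\mapsto\rho_t^\mp$ is continuous with values in $L^\infty$, then so is $t\mapsto\rho_t^\mp*\phi$ (convolution with $\phi\in L^1$ is bounded on $L^\infty$, with $\|f*\phi\|_{L^\infty}\le\llvert\phi\rrvert_1\,\|f\|_{L^\infty}$), hence so is $t\mapsto e^{-(\rho_t^\mp*\phi)}$ (using $\|e^{-f}-e^{-g}\|_{L^\infty}\le\|f-g\|_{L^\infty}$ for $f,g\geq0$); thus the right-hand side of \eqref{kinetic} is continuous in $t$ and the integral equation forces differentiability.

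Fix $T>0$ and work in
\begin{equation*}
X_T:=\bigl\{(\rho^+,\rho^-)\in C\bigl([0,T],L^\infty\times L^\infty\bigr):\ 0\le\rho_t^\pm\le c\text{ a.e., }\forall\,t\in[0,T]\bigr\},
\end{equation*}
a closed subset of the Banach space $C([0,T],L^\infty\times L^\infty)$ equipped with $\|(\rho^+,\rho^-)\|_T:=\sup_{t\le T}\bigl(\|\rho_t^+\|_{L^\infty}+\|\rho_t^-\|_{L^\infty}\bigr)$, hence a complete metric space with metric $d_T$. The crux of the argument, and the reason global existence comes essentially for free, is that $\Phi$ maps $X_T$ into itself for \emph{every} $T>0$. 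Non-negativity of $\bigl(\Phi(\rho^+,\rho^-)\bigr)^\pm$ is immediate from the explicit formula, since $\rho_0^\pm\ge0$ and the integrand $e^{-(\rho_s^\mp*\phi)}$ is non-negative. For the upper bound, $\rho_s^\mp\ge0$ and $\phi\ge0$ give $\rho_s^\mp*\phi\ge0$, so $e^{-(\rho_s^\mp*\phi)}\le1$ and
\begin{equation*}
\bigl(\Phi(\rho^+,\rho^-)\bigr)^\pm(t)\le e^{-mt}c_0+z\int_0^t e^{-m(t-s)}\,ds=e^{-mt}c_0+\frac{z}{m}\bigl(1-e^{-mt}\bigr)\le c,
\end{equation*}
because $c_0\le c$ and $z/m\le c$. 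Continuity in $t$ of the image is routine (dominated convergence), so $\Phi(X_T)\subseteq X_T$.

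Next I would show $\Phi$ is a contraction on $X_T$ for $T$ small. Using $\|e^{-f}-e^{-g}\|_{L^\infty}\le\|f-g\|_{L^\infty}$ and the convolution bound above, for $(\rho^+,\rho^-),(\tilde\rho^+,\tilde\rho^-)\in X_T$,
\begin{equation*}
\bigl\|\bigl(\Phi(\rho)\bigr)^\pm(t)-\bigl(\Phi(\tilde\rho)\bigr)^\pm(t)\bigr\|_{L^\infty}\le z\llvert\phi\rrvert_1\int_0^t e^{-m(t-s)}\bigl\|\rho_s^\mp-\tilde\rho_s^\mp\bigr\|_{L^\infty}\,ds\le\frac{z\llvert\phi\rrvert_1}{m}\bigl(1-e^{-mT}\bigr)\,d_T(\rho,\tilde\rho),
\end{equation*}
and summing the $+$ and $-$ estimates yields $d_T(\Phi(\rho),\Phi(\tilde\rho))\le\frac{z\llvert\phi\rrvert_1}{m}\bigl(1-e^{-mT}\bigr)\,d_T(\rho,\tilde\rho)$. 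Choosing $T$ with $\frac{z\llvert\phi\rrvert_1}{m}(1-e^{-mT})<1$, the Banach fixed-point theorem gives a unique fixed point in $X_T$, i.e. a unique solution of \eqref{kinetic} on $[0,T]$ valued in $\{0\le\rho_t^\pm\le c\}$. Since $T$ depends only on $m,z,\llvert\phi\rrvert_1$, and restarting from the endpoint data $\rho_T^\pm$ (which satisfies $\|\rho_T^\pm\|_{L^\infty}\le c$, and $\max\{c,z/m\}=c$, so the box does not grow) reproduces exactly the same setup, one iterates over $[T,2T],[2T,3T],\dots$ to obtain a global solution with $0\le\rho_t^\pm\in L^\infty$ and $\|\rho_t^\pm\|_{L^\infty}\le c$ for all $t>0$, which is \eqref{eest}.

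Finally, uniqueness among non-negative solutions obeying \eqref{eest} follows from a Gronwall estimate that needs no smallness of $T$: if $\rho,\tilde\rho$ are two such solutions, then on any $[0,T]$, using $e^{-m(t-s)}\le1$,
\begin{equation*}
u(t):=\|\rho_t^+-\tilde\rho_t^+\|_{L^\infty}+\|\rho_t^--\tilde\rho_t^-\|_{L^\infty}\le z\llvert\phi\rrvert_1\int_0^t u(s)\,ds,
\end{equation*}
whence $u\equiv0$. I do not expect a genuine obstacle: this is a standard semilinear evolution equation in a Banach space with a nonlinearity that is globally Lipschitz on the invariant region, and the only step that really carries content is verifying that $\Phi$ preserves the box $\{0\le\rho^\pm\le c\}$ for all $T$ — this invariant-region property plays the role of an a priori bound and is what upgrades local to global existence. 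A secondary point requiring a line of justification is the mild-versus-classical equivalence noted at the outset, which rests solely on the continuity of the nonlinearity in $t$.
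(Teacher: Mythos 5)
Your proposal is correct and follows essentially the same route as the paper: the Duhamel reformulation, the invariance of the box $\{0\le\rho^\pm\le c\}$ under $\Phi$ (via $e^{-(v^\mp*\phi)}\le 1$ and $c=\max\{c_0,z/m\}$), a Banach fixed-point argument on a short time interval with Lipschitz constant controlled by $z\llvert\phi\rrvert_1$, and iteration over $[T,2T],[2T,3T],\dots$ to reach a global solution. The only (harmless) deviations are cosmetic: you derive uniqueness via a Gronwall inequality instead of re-invoking the contraction on each subinterval, and you make explicit the mild-versus-classical equivalence that the paper leaves implicit.
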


\begin{proof}
For $T>0$ fixed, let us consider the Banach space $L^\infty \times L^\infty$
with the norm
\[
\lVert (v^+,v^-)\rVert_\infty:=\lVert v^+\rVert_{L^\infty}+\lVert v^-\rVert_{L^\infty}
\]
and the Banach space of all $L^\infty \times L^\infty$-valued continuous
functions on $[0,T]$,
\begin{equation*}%\label{XTspace}
X_T:=C\bigl([0,T]\rightarrow L^\infty \times L^\infty\bigr),
\end{equation*}
with the norm defined for all $v\in X_T$,
$v:\ [0,T]\ni t\mapsto v_t=(v^+_t,v_t^-)\in L^\infty\times L^\infty$, by
\[
\lVert v\rVert_T:=\max_{t\in[0,T]}\lVert (v_t^+,v_t^-)\rVert_\infty.
\]
Let $X_T^+$ be the cone of all elements $v\in X_T$ such that, for all
$t\in[0,T]$, $v_t^\pm(x)\geq0$ for a.a.~$x\in\R^d$. For an arbitrary $c>0$, we
denote by $B_{T,c}^+$ the intersection of the cone $X_T^+$ with the closed ball
$B_{T,c}:=\{v\in X_T: \lVert v\rVert_T \leq 2c\}$.

Given a $0\leq \rho_0^\pm\in L^\infty(\R^d)$ such that
$\lVert \rho_0^\pm\rVert_{L^\infty}\leq c_0$ for some $c_0>0$, let $\Phi$ be
the mapping which assigns, for each $v=(v^+,v^-)\in B_{T,c}^+$, the solution
$u:=(u^+,u^-)$ to the system of linear non-homogeneous
equations
\begin{equation}\label{linearkinetic}
\begin{cases}
\dfrac{\partial}{\partial t}u_t^+(x)=-mu_t^+(x)+ze^{-(v^-_t*\phi)(x)},\\[6mm]
\dfrac{\partial}{\partial t}u_t^-(x)=-mu_t^-(x)+ze^{-(v^+_t*\phi)(x)},
\end{cases} \quad t\in\left[0,T\right], \ \textrm{a.a.}\ x\in\R^d,
\end{equation}
for the initial conditions ${u_t^\pm}_{|t=0}=\rho_0^\pm$. That is,
$u=\Phi v:=((\Phi v)^+,(\Phi v)^-)$. Actually, straightforwardly calculations
show that, for each $v=(v^+,v^-)\in B_{T,c}^+$, $\Phi v=((\Phi v)^+,(\Phi v)^-)$
is explicitly given for all $t\in [0,T]$ and a.a.~$x\in\R^d$ by
\begin{equation*}
 (\Phi v)_t^\pm(x)=e^{-mt}\rho_0^\pm(x)+z\int_0^t ds\, e^{-m(t-s)}e^{-(v^\mp_s*\phi)(x)}\geq 0.
\end{equation*}
Moreover, by the positiveness assumptions on $v^\pm$ and $\phi$, one finds
\[
\lVert (\Phi v)_t^\pm \rVert_{L^\infty} \leq c_0e^{-mt}+z\int_0^tds\, e^{-m(t-s)}=c_0e^{-mt}+\frac{z}{m}(1-e^{-mt})\leq c,\quad t\in\left[0,T\right],
\]
where $c:=\max\bigl\{ c_0,\frac{z}{m}\bigr\}$, showing that
$\Phi v \in B_{T,c}^+$ for all $v\in B_{T,c}^+$.

For all $v,w\in B_{T,c}^+$ and all $t\in\left[0,T\right]$ one has
\[
\lVert (\Phi v)_t - (\Phi w)_t\rVert_\infty =\lVert (\Phi v)_t^+ - (\Phi w)_t^+\rVert_{L^\infty}
+\lVert (\Phi v)^-_t - (\Phi w)^-_t\rVert_{L^\infty}
\]
with
\begin{align*}
\bigl\lvert (\Phi v)^\pm_t(x) - (\Phi w)^\pm_t(x)\bigr\rvert
&\leq  z\int_0^tds\, e^{-m(t-s)}\Bigl\lvert e^{-(v_s^\mp \ast \phi)(x)}-e^{-(w_s^\mp \ast \phi)(x)}\Bigr\rvert \\
&\leq z\llvert\phi\rrvert_1 \sup_{s\in[0,t]} \lVert v_s^\mp-w_s^\mp\rVert_{L^\infty}\frac{1-e^{-mt}}{m},
\end{align*}
where in the latter inequality we have used the inequalities
$|e^{-a}-e^{-b}|\leq |a-b|$, $a,b\geq 0$ and $\|f*g\|_{L^\infty}\leq \llvert f\rrvert_1\|g\|_{L^\infty}$, $f\in L^1$, $g\in L^\infty$. Therefore, for any
$t\in[0,T]$,
\[
\lVert (\Phi v)_t^\pm - (\Phi w)_t^\pm\rVert_{L^\infty}\leq z \llvert\phi\rrvert_1 T \sup_{s\in[0,T]} \lVert v_s^\mp-w_s^\mp\rVert_{L^\infty},
\]
and thus
\[
\lVert \Phi v -\Phi w\rVert_T \leq z \llvert\phi\rrvert_1 T \lVert  v -w\rVert_T.
\]
As a consequence, the mapping $\Phi$ is a contraction on the metric space
$B_{T,c}^+$ whenever $T<\frac{1}{z \llvert\phi\rrvert_1}$. In such a situation,
there is a unique fixed point $\rho=(\rho^+,\rho^-)\in B_{T,c}^+$, i.e.,
$\Phi \rho=\rho$, which leads to a unique solution to the system of equations
\eqref{kinetic} on the interval $[0,T]$.

Now let us consider \eqref{kinetic}, \eqref{linearkinetic} on the time
interval $[T,2T]$ with the initial condition given by $\rho_T$. By the
previous construction, $\lVert\rho_T^\pm\rVert_{L^\infty}\leq c$. One can then
repeat the above arguments in the same metric space $B_{T,c}^+$, because
$\max\bigl\{ c,\frac{z}{m}\bigr\}=c$ and, for any $t\in [T,2T]$,
\[
\int_T^t ds\, e^{-m(t-s)}=\frac{1-e^{-m(t-T)}}{m}\leq t-T\leq T.
\]
This argument iterated for the intervals $[2T,3T]$, $[3T,4T]$, etc, yields at
the end the complete proof of the required result.
\end{proof}

\section{Equilibrium: multi-phases and stability}\label{Section5}

In this section we realize the analysis of the system of kinetic equations
\eqref{kinetic} in the space-homogeneous case. More precisely, we consider the
stationary system corresponding to the space-homogeneous version of
\eqref{kinetic},
\begin{equation}
\begin{cases}
-m\rho _{t}^{+}+ze^{-\beta \rho _{t}^{-}}=0, \\
-m\rho _{t}^{-}+ze^{-\beta \rho _{t}^{+}}=0,
\end{cases}
\end{equation}%
where
\[
\beta:=\int_{\R^d} dx\, \phi (x) >0.
\]
Observe that for $r_{t}^{\pm }=\beta \rho _{t}^{\pm }$,
$a=\dfrac{z}{m}\beta >0$ one obtains the following system
\begin{equation}
\left\{
\begin{aligned}
ae^{-r^{-}}=&r^{+} \\
ae^{-r^{+}}=&r^{-}
\end{aligned}%
\right.   \label{stat-eqn-mod}
\end{equation}%
and thus
\[
r^{\pm} =a\exp \left( -a\exp \left( -r^{\pm}\right) \right).
\]

\begin{proposition}\label{propsda}
Given an $a>0$, let $f$ be the function defined on $\left[0,+\infty\right[$  by
\begin{equation*}%\label{f-def}
f\left( x\right) =a\exp \left( -a\exp \left( -x\right) \right)-x,\quad x\geq 0.
\end{equation*}
If $a\leq e$, then there is a unique positive root $x_{0}$ of $f$. Moreover,
$x_{0}=a\exp \left( -x_{0}\right) $. If $a>e$, then
there are three and only three positive roots $x_{1}<x_{2}<x_{3}$ of $f$.
Moreover, $x_{1}=a\exp
\left( -x_{3}\right) $, $x_{2}=a\exp \left( -x_{2}\right) $, $%
x_{3}=a\exp \left( -x_{1}\right) $ and%
\begin{align}
0 <&x_{1}<a\exp \left( -\frac{a}{e}\right),  \label{smallfirstroot} \\
a >&x_{3}>a\exp \left( -a\exp \left( -\frac{a}{e}\right) \right) .
\label{bigthirdroot}
\end{align}
\end{proposition}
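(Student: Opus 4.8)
The plan is to recognise the equation $x = a\exp(-a\exp(-x))$ as the fixed‑point equation for the second iterate of the strictly decreasing map $g(x) := a e^{-x}$ on $[0,\infty)$, so that $f(x) = g(g(x)) - x$. First I would record the elementary facts about $g$: it maps $[0,\infty)$ bijectively onto $(0,a]$, and the auxiliary function $h(x) := ae^{-x} - x$ is strictly decreasing with $h(0) = a > 0$ and $h(x)\to-\infty$ as $x\to\infty$, so it has a unique zero $x_*$, which satisfies $x_* = ae^{-x_*}$. Since $g(x_*) = x_*$ this gives $f(x_*) = 0$, so $x_*$ is always a root of $f$, and moreover $x_* \le 1 \iff a = x_*e^{x_*}\le e$.

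Next I would differentiate: $f'(x) = a^2 e^{-x}e^{-ae^{-x}} - 1$, which under the substitution $u = ae^{-x}\in(0,a]$ becomes $f'(x) = a\,u e^{-u} - 1$, with $u\mapsto ue^{-u}$ unimodal of maximum $e^{-1}$ at $u = 1$. If $a\le e$ this forces $f'\le 0$ everywhere, with strict inequality except possibly at $x=1$, so $f$ is strictly decreasing; together with $f(0) = ae^{-a} > 0$ and $f(x)\to -\infty$ this gives a unique positive root, necessarily $x_*$, which proves the subcritical assertion. If $a > e$, the equation $a u e^{-u} = 1$ has exactly two roots $0 < u_- < 1 < u_+$, so $f'>0$ exactly for $x\in(\alpha_1,\alpha_2)$ with $\alpha_1 = \ln(a/u_+) < \alpha_2 = \ln(a/u_-)$, and $f'<0$ on $[0,\alpha_1)\cup(\alpha_2,\infty)$; since $f'(0) = a^2e^{-a}-1 < 0$ (as $a^2e^{-a}<1$ for $a\ge e$) we have $\alpha_1>0$, so $f$ is strictly decreasing on $[0,\alpha_1]$, strictly increasing on $[\alpha_1,\alpha_2]$, and strictly decreasing on $[\alpha_2,\infty)$. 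The crucial computation is $f'(x_*)$: using $u = ae^{-x_*} = x_*$ and $a = x_*e^{x_*}$ gives $f'(x_*) = x_*^2 - 1 > 0$ (here $x_* > 1$), so $x_*\in(\alpha_1,\alpha_2)$ and hence $f(\alpha_1) < f(x_*) = 0 < f(\alpha_2)$. Together with $f(0)>0$, $f(x)\to-\infty$ and the three‑piece monotonicity, this yields exactly one root in each of $(0,\alpha_1)$, $(\alpha_1,\alpha_2)$, $(\alpha_2,\infty)$; labelling them $x_1 < x_2 < x_3$ we get $x_2 = x_*$, so there are exactly three positive roots and $x_2 = ae^{-x_2}$.

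For the cyclic relations I would note that $f(x) = 0$ means $\exp(-ae^{-x}) = x/a$, whence $f(ae^{-x}) = a\exp(-x) - a\exp(-x) = 0$; thus $x\mapsto ae^{-x}$ permutes the root set, and being strictly decreasing it is an order‑reversing bijection of $\{x_1<x_2<x_3\}$, forcing $x_3 = ae^{-x_1}$, $x_2 = ae^{-x_2}$, $x_1 = ae^{-x_3}$ (and $x_0 = ae^{-x_0}$ in the subcritical case). The trivial consequences $x_1 = ae^{-x_3}>0$ and $x_3 = ae^{-x_1} < a$ give half of \eqref{smallfirstroot}--\eqref{bigthirdroot}. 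For the other half it suffices to prove $x_3 > a/e$, since then $x_1 = ae^{-x_3} < ae^{-a/e}$ and $x_3 = ae^{-x_1} > a\exp(-ae^{-a/e})$. Because $f>0$ exactly on $[0,x_1)\cup(x_2,x_3)$, the inequality $x_3 > a/e$ follows once I check $f(a/e)>0$, which unwinds to $ae^{-a/e}<1$; and $\psi(a) := ae^{-a/e}$ satisfies $\psi(e)=1$ and $\psi'(a) = e^{-a/e}(1-a/e)<0$ for $a>e$, so $\psi(a)<1$ there. The main obstacle is the supercritical structural step — establishing that $f$ has precisely the three claimed monotone pieces and that $x_*$ lies strictly inside the increasing one, which I handle via the identity $f'(x_*) = x_*^2-1$; after that the cyclic relation $x\mapsto ae^{-x}$ does all the remaining work.
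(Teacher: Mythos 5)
Your argument is correct and, at the decisive point, genuinely simpler than the paper's. Both proofs share the outer skeleton: the map $x\mapsto ae^{-x}$ permutes the root set (the paper phrases this as ``$-\ln\frac{x}{a}$ is also a root''), the sign of $f'$ is controlled through $te^{-t}\leq e^{-1}$, and for $a>e$ the function $f$ has exactly one local minimum followed by one local maximum. Where the two proofs diverge is in showing that $f$ is negative at its local minimum and positive at its local maximum. The paper does this by hand: it writes $f(y_j)=e^{y_j}\left(\frac{1}{a}-y_je^{-y_j}\right)$, introduces the roots $t_1<1<t_2$ of $p(t)=\frac{1}{a}-te^{-t}$, reduces the claim to $t_1t_2<1$, then to $t_2<\ln a+\sqrt{\ln^2a-1}$, and finally to the positivity of $u(y)=e^y-y-\sqrt{y^2+1}$ --- roughly a page of elementary but delicate estimates. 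You instead observe that the fixed point $x_*$ of $g(x)=ae^{-x}$ is always a root of $f=g\circ g-\mathrm{id}$, and that $f'(x_*)=ax_*e^{-x_*}-1=x_*^2-1>0$ because $x_*>1$ exactly when $a>e$; hence $x_*$ lies strictly inside the unique interval on which $f'>0$, which forces $f$ to be negative at the local minimum and positive at the local maximum in one stroke. This single identity replaces the paper's entire chain of auxiliary inequalities. Your derivation of the cyclic relations (an order-reversing involution of a three-point chain fixes the middle point and swaps the outer two) and of the bounds \eqref{smallfirstroot}--\eqref{bigthirdroot} matches the paper's in substance.

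One small step should be said out loud. From ``$f>0$ exactly on $[0,x_1)\cup(x_2,x_3)$'' together with $f(a/e)>0$ you conclude $x_3>a/e$, but a priori $f(a/e)>0$ only places $a/e$ either in $[0,x_1)$ or in $(x_2,x_3)$; you must exclude the first alternative, e.g.\ by noting that $x_1=ae^{-x_3}<ae^{-1}=a/e$ since $x_3>x_2=x_*>1$. With that one line added the argument is complete. (The paper sidesteps this by checking $f(1)<0$, which gives $x_1<1<x_2$ and then $x_3=ae^{-x_1}>a/e$ directly.)
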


\begin{proof}
First of all, let us observe that if $f\left( x\right) =0$, then $a\exp \left(
-x\right) =-\ln \frac{x}{a}$, which means that $\ln \frac{x}{a}<0$ and thus
\begin{equation}
x<a.  \label{x_less_a}
\end{equation}
Furthermore, $-\ln \frac{x}{a}$ is also a root of $f$:
\begin{equation*}
f\left( -\ln \frac{x}{a}\right)  =a\exp \left( -a\exp \left( \ln
\frac{x}{a}\right)
\right) +\ln \frac{x}{a}
=a\exp \left( -x\right) +\ln \frac{x}{a}=0.
\end{equation*}

Let us consider
\[
f^{\prime }\left( x\right) =a^{2}\exp \left( -a\exp \left( -x\right)
\right) \exp \left( -x\right) -1.
\]
Using the well-known inequality $te^{-t}\leq e^{-1}$, $t\geq 0$, with
$t=ae^{-x}$, $x\geq 0$, we obtain
\[
f^{\prime }\left( x\right) \leq \frac{a}{e}-1.
\]

Therefore, if $a< e$, $f$ is a strictly decreasing function on $[0,+\infty )$.
For $a=e$, we have $f^\prime(x)\leq 0$ for all $x\geq 0$ with $f^\prime(x)=0$
only for $x=1$. Independently of the case under consideration, in addition,
one has $f\left( 0\right) =ae^{-a}>0$ and
$\lim_{x\rightarrow+\infty}f(x)=-\infty$, which implies that $f$ has only one
positive root $x_{0}$. Due to the initial considerations, then
$x_{0}=-\ln \frac{x_{0}}{a}$, that is, $x_{0}=a\exp\left( -x_{0}\right) $.

Let now $a>e$. Since $f^{\prime }\left( x\right) =0$ implies
\[
-a\exp \left( -x\right) -x=-\ln a^{2},
\]%
let us consider the following auxiliary function
\[
g\left( x\right) =x+a\exp \left( -x\right) -2\ln a,\quad x\geq 0,
\]
which allows to rewrite $f^{\prime }$ as
\[
f^{\prime }\left( x\right) =a^{2}\exp \left( -g\left( x\right) -2\ln
a\right) -1=\exp \left( -g\left( x\right) \right) -1.
\]%
Concerning the function $g^\prime$,
\[
g^{\prime }\left( x\right) =1-a\exp \left( -x\right),\quad x\geq 0,
\]
one has $g^{\prime }\left( x\right) =0$ only for $x=\ln a$. For
$x>\ln a$ we have $g^{\prime }\left( x\right)>0$, meaning that $g$ is
increasing on $[\ln a,+\infty)$. Since the sign of the $g^{\prime }$ is the
same in whole the interval $[0,\ln a)$, in particular, it coincides with the
sign of $g^{\prime }\left( 0\right) =1-a<0$. Thus, $g$ is strictly decreasing
on $[0,\ln a)$. As a result, on $[0,+\infty )$ the function $g$ has a unique
minimum. Moreover,  $\lim_{x\rightarrow +\infty }g\left( x\right) =+\infty $,
\[
g\left( \ln a\right) =\ln a+1-2\ln a=1-\ln a<0,
\]
and $g\left(0\right) =a-2\ln a>0$, which follows from the fact that for the
function $h\left( t\right)=t-2\ln t$ one has
$h^{\prime }\left( t\right) =1-\frac{2}{t}=\frac{t-2}{t}>0$, $t>e$,
and thus $h\left( a\right) >h\left( e\right) =e-2>0$. Consequently, $g$ has two
positive roots, say $y_1$, $y_2$, $0<y_{1}<y_{2}<+\infty $. In terms of the
function $f$, this implies that $f^{\prime }>0$ on
$\left( y_{1},y_{2}\right) $ (where $g<0$) and $f^{\prime }<0$ on
$[0,y_{1})\cup\left( y_{2},+\infty \right) $, meaning that $y_{1}$ is the point
of the minimum of the function $f$ and $y_{2}$ is the point of the maximum of
$f$.

The number of positive roots of $f$ depends on the sign of $f\left(y_{j}\right) $, $j=1,2$. Let us prove that
\begin{equation}
f\left( y_{1}\right) <0<f\left( y_{2}\right),  \label{main}
\end{equation}%
which then implies that the function $f$ has three and only three roots.

As $g\left( y_{j}\right)=0$, $j=1,2$, which implies that
$-a\exp \left( -y_{j}\right) =y_{j}-2\ln a$, one has
\[
f\left( y_{j}\right)  =a\exp \left( y_{j}-2\ln a\right) -y_{j}
=\frac{1}{a}e^{y_{j}}-y_{j} =e^{y_{j}}\left(
\frac{1}{a}-y_{j}e^{-y_{j}}\right).
\]
Let us consider the function $p\left( t\right) =\frac{1}{a}-te^{-t}$,
$t\geq 0$. Since $p^{\prime }\left( t\right) =\left( t-1\right) e^{-t}$, this
function has a minimum at the point $t=1$,
$p\left( 1\right) =\frac{1}{a}-\frac{1}{e}<0$. Moreover,
$p\left( 0\right) =\frac{1}{a}>0$ and
$\lim_{t\rightarrow +\infty }=\frac{1}{a}>0$. Therefore, this
function has two roots, $0<t_{1}<1<t_{2}<+\infty $, and $p<0$ on $\left( t\,_{1},t_{2}\right) $, $p>0$ on $[0,t_{1})\cup \left( t_{2},+\infty \right)$. Thus,
inequality \eqref{main} will follow from the inequality
\begin{equation}
t_{1}<y_{1}<t_{2}<y_{2}.\label{Janeiro}
\end{equation}
In order to show \eqref{Janeiro}, first we observe that $t_1<t_2$ are the only
roots of $p$. However, for $y_{1}<y_{2}$ one finds $p\left(a\exp \left( -y_{j}\right)\right)=\frac{1}{a}f^\prime(y_j)=0$, $j=1,2$, meaning that
\[
a\exp \left( -y_{2}\right) =t_{1}<t_{2}=a\exp \left( -y_{1}\right) .
\]
Hence, to prove the sequence of inequalities \eqref{Janeiro} is
equivalent to show
\[
e^{-t_{1}} >e^{-y_{1}}>e^{-t_{2}}>e^{-y_{2}}\Longleftrightarrow
e^{-t_{1}} >\frac{t_{2}}{a}>e^{-t_{2}}>\frac{t_{1}}{a}
\]
or
\[
\frac{1}{at_{1}} >\frac{t_{2}}{a}>\frac{1}{at_{2}}>\frac{t_{1}}{a}.
\]
Since $t_{2}>1$, the latter three inequalities hold if and only if
\begin{equation}
t_{1}t_{2}<1.  \label{rootsineq}
\end{equation}%
So we will prove \eqref{rootsineq}. Since $w(t)=te^{-t}$ is a increasing
function on $\left[0,1\right)$ and $t_2>1$ ($\frac{1}{t_2}<1$), observe that
to show \eqref{rootsineq} it is enough to prove
\begin{equation}
\frac{1}{t_{2}}\exp \left( -\frac{1}{t_{2}}\right) >\frac{1}{a}=t_{2}\exp
\left( -t_{2}\right),\label{Fevereiro}
\end{equation}
because due to the fact that $p(t_j)=0$, $j=1,2$, the right-hand side of
\eqref{Fevereiro} is also equal to $t_{1}\exp\left( -t_{1}\right)$. Concerning
\eqref{Fevereiro}, note also that it is equivalent to
\begin{eqnarray}
t_{2}^{2} <\exp \left( t_{2}-\frac{1}{t_{2}}\right)&\Longleftrightarrow&
\exp \left( 2t_{2}-2\ln a\right)  <\exp \left( t_{2}-\frac{1}{t_{2}}\right)\nonumber\\
&\Longleftrightarrow& t_{2}+\frac{1}{t_{2}} <2\ln a\nonumber\\
&\Longleftrightarrow&t_{2}^{2}-\left( 2\ln a\right) t_{2}+1 <0.\label{dopineq1}
\end{eqnarray}
Clearly, the solutions to the inequality $v(t)=t^2-(2\ln a)t+1<0$ are
$t\in(\ln a-\sqrt{\ln ^{2}a-1},\ln a+\sqrt{\ln ^{2}a-1})$. Since
$v(1)=2(1-\ln a)<0$ and $t_2>1$, inequality \eqref{dopineq1} holds if and only
if
\begin{equation}
t_{2} <\ln a+\sqrt{\ln ^{2}a-1}.\label{dopineq2}
\end{equation}
In addition, because $w(t)=te^{-t}$ is a decreasing function for $t>1$ and
$w(t_2)=\frac{1}{a}$, inequality \eqref{dopineq2} holds if and only if
\begin{eqnarray*}
&&w\left(\ln a+\sqrt{\ln ^{2}a-1}\right)<w(t_2)\\
&&\Longleftrightarrow\left( \ln a+\sqrt{\ln ^{2}a-1}\right) \exp \left( -\ln a-\sqrt{\ln ^{2}a-1}\right)  <\frac{1}{a},
\end{eqnarray*}
which is equivalent to
\begin{eqnarray*}
&&\left( \ln a+\sqrt{\ln ^{2}a-1}\right) \exp \left( -\sqrt{\ln ^{2}a-1}
\right)  <1\\
&&\Longleftrightarrow \ln a+\sqrt{\ln ^{2}a-1} <\exp \left( \sqrt{\ln ^{2}a-1}\right) .
\end{eqnarray*}
Set
\[
u\left( y\right) =e^{y}-y-\sqrt{y^{2}+1},\quad y\geq 0.
\]
We have
\begin{align*}
u^{\prime }\left( y\right)  =&e^{y}-1-\frac{y}{\sqrt{y^{2}+1}}, \\
u^{\prime \prime }\left( y\right)  =&e^{y}-\frac{\sqrt{y^{2}+1}-\frac{y^{2}%
}{\sqrt{y^{2}+1}}}{y^{2}+1} =e^{y}-\frac{1}{\left( y^{2}+1\right) ^{\frac{3}{2}}},
\end{align*}%
with $e^{y}\geq 1$ and $\frac{1}{\left( y^{2}+1\right) ^{\frac{3}{2}}}\leq 1$.
Therefore, $u^{\prime \prime }\geq 0$ and $u^{\prime \prime }\left(
y\right) =0$ only for $y=0$, meaning that $u^{\prime }$ is a increasing
function. Hence,
$u^{\prime }\left( y\right) \geq u^{\prime }\left( 0\right) =0$ for all
$y\geq 0$. We have $u^{\prime }\left( y\right) =0$ only for $y=0$. Therefore,
also $u$ is increasing, and thus $u\left( y\right) >u\left( 0\right) =0$ for
all $y>0$. In particular, for $y=\sqrt{\ln ^{2}a-1}>0$.

As result, for $a>e$ there are three and only three positive roots of $f$, say
$x_{1}<x_{2}<x_{3}$.\footnote{Of course, $x_1<y_1<x_2<y_2<x_3$.} By the
considerations at the beginning,
$-\ln\frac{x_{3}}{a}<-\ln\frac{x_{2}}{a}<-\ln \frac{x_{1}}{a}$ are also
positive roots of $f$. Hence,
\[
x_{1}=-\ln \frac{x_{3}}{a},~~x_{2}=-\ln \frac{x_{2}}{a},~~x_{3}=-\ln \frac{%
x_{1}}{a},
\]
that is,
\[
x_{3}=a\exp \left( -x_{1}\right) ,~~x_{2}=a\exp \left( -x_{2}\right)
,~~x_{1}=a\exp \left( -x_{3}\right) .
\]%
To prove \eqref{smallfirstroot} and \eqref{bigthirdroot} we recall that
$x_{1}<y_{1}<\ln a$, where the latter inequality follows from the fact that
the function $g$ is decreasing on $[0,\ln a]$ with $g(\ln a)<0=g(y_1)$.
Therefore,
\[
x_{3}=a\exp \left( -x_{1}\right) >a\exp \left( -\ln a\right) =1.
\]
Moreover, since $w(t)=te^{-t}$ is decreasing on $[1,+\infty[$ and thus
\[
f\left( 1\right)  =a\exp \left( -\frac{a}{e}\right)
-1=e\frac{a}{e}\exp
\left( -\frac{a}{e}\right) -1 <e\frac{1}{e}-1=0,
\]
one may conclude that $x_{1}<1<x_{2}$. Hence, $x_{3}=a\exp \left(
-x_{1}\right) >\frac{a}{e}$. Then, finally,
\[
0<x_{1}=a\exp \left( -x_{3}\right) <a\exp \left( -\frac{a}{e}\right)
\]
and, by \eqref{x_less_a},
\[
a>x_{3}=a\exp \left( -x_{1}\right) >a\exp \left( -a\exp \left( -\frac{a}{e}\right) \right) .
\]
The statement is fully proven.
\end{proof}

\begin{theorem}\label{th-homog-kinetic}
Consider the space-homogeneous version of the system of equations
\eqref{kinetic}
\begin{equation}\label{homog-kinetic}
\begin{cases}
\dfrac{d}{dt}\rho_t^+=-m\rho_t^++ze^{-\beta \rho^-_t},\\[6mm]
\dfrac{d}{d t}\rho_t^-=-m\rho_t^-+ze^{-\beta\rho^+_t},
\end{cases} \quad t\in\left[0,T\right),
\end{equation}
where $\beta:=\int_{\R^d} dx\, \phi (x)$, $a:=\dfrac{z}{m}\beta$. Let
$x_0,x_1,x_2,x_3$ be the positive roots given by Proposition~\ref{propsda}.
If $a \leq e$, then there is a unique equilibrium solution $\left( \frac{1}{\beta}x_{0},\frac{1}{\beta}x_{0}\right) $ to \eqref{homog-kinetic}. For $a<e$,
this solution is a stable node, while for $a=e$ it is a saddle-node
equilibrium point. If $a>e$, then there are three and only three equilibrium
solutions $\left( \frac{1}{\beta}x_{1},\frac{1}{\beta}x_{3}\right) $, $\left( \frac{1}{\beta}x_{2},\frac{1}{\beta}x_{2}\right) $, $\left(\frac{1}{\beta} x_{3},\frac{1}{\beta}x_{1}\right) $ to \eqref{homog-kinetic}. The second solution is
a saddle point and the other two solutions are stable nodes of \eqref{homog-kinetic}.
\end{theorem}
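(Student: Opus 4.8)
The plan is to read off the equilibria from the algebraic analysis already done in Proposition~\ref{propsda}, and then to extract every stability assertion from the $2\times2$ linearization of \eqref{homog-kinetic} around each of them.

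First I would note that $(\rho^+,\rho^-)$ is an equilibrium of \eqref{homog-kinetic} if and only if $m\rho^+=ze^{-\beta\rho^-}$ and $m\rho^-=ze^{-\beta\rho^+}$; with $r^\pm:=\beta\rho^\pm$ and $a:=\tfrac zm\beta$ this is exactly system~\eqref{stat-eqn-mod}. Hence each of $r^+,r^-$ is a positive root of the function $f$ of Proposition~\ref{propsda}, and moreover $r^-=ae^{-r^+}$. Since $r\mapsto ae^{-r}$ is strictly decreasing, hence injective, and (by the relations recorded in Proposition~\ref{propsda}) carries the set of positive roots of $f$ onto itself, it necessarily fixes $x_0$ (resp.\ $x_2$) and interchanges $x_1$ and $x_3$; this reversal of order is forced by monotonicity. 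Consequently the equilibria are exactly $(\tfrac1\beta x_0,\tfrac1\beta x_0)$ when $a\le e$, and exactly $(\tfrac1\beta x_1,\tfrac1\beta x_3)$, $(\tfrac1\beta x_2,\tfrac1\beta x_2)$, $(\tfrac1\beta x_3,\tfrac1\beta x_1)$ when $a>e$, which settles the counting part.

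Next I would linearize. The Jacobian of the right-hand side of \eqref{homog-kinetic} at $(\rho^+,\rho^-)$ is $\left(\begin{smallmatrix}-m & -z\beta e^{-\beta\rho^-}\\ -z\beta e^{-\beta\rho^+} & -m\end{smallmatrix}\right)$, and at an equilibrium, where $z\beta e^{-\beta\rho^-}=m\beta\rho^+=mr^+$ and $z\beta e^{-\beta\rho^+}=mr^-$, it equals $-m\left(\begin{smallmatrix}1 & r^+\\ r^- & 1\end{smallmatrix}\right)$, with eigenvalues
\[
\lambda_\pm=-m\bigl(1\pm\sqrt{r^+r^-}\bigr).
\]
Since $r^+,r^->0$ these are real, one of them is always negative, and the other is negative, zero, or positive according as $r^+r^-<1$, $=1$, or $>1$; so a hyperbolic equilibrium is a stable node precisely when $r^+r^-<1$ and a saddle precisely when $r^+r^->1$. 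It remains to compute $r^+r^-$ at each equilibrium. For the symmetric ones, $r^+r^-=x^2$ with $x=ae^{-x}$, i.e.\ $xe^x=a$; since $t\mapsto te^t$ is increasing and equals $e$ at $t=1$, we get $x_0<1$ (so $x_0^2<1$: stable node) when $a<e$, and $x_2>1$ (so $x_2^2>1$: saddle) when $a>e$. For the asymmetric ones, $r^+r^-=x_1x_3=(ae^{-x_3})(ae^{-x_1})=a^2e^{-(x_1+x_3)}$, so $x_1x_3<1$ is equivalent to $x_1+x_3>2\ln a$; and $x_1+x_3-2\ln a=g(x_3)$ for the auxiliary function $g(x)=x+ae^{-x}-2\ln a$ appearing in the proof of Proposition~\ref{propsda}, which is positive for all $x$ beyond its larger zero $y_2$, a range containing $x_3$. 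Hence $x_1x_3<1$ and both asymmetric equilibria are stable nodes.

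It remains to treat the critical value $a=e$. Here the unique equilibrium has $x_0=1$ (from $x_0e^{x_0}=e$), so $r^+r^-=1$ and the eigenvalues are $0$ and $-2m$: the point is non-hyperbolic and linearization is inconclusive. I would settle this by a center-manifold reduction: the zero eigenvalue corresponds to the symmetry-breaking direction $(1,-1)$, so after writing $\rho^\pm=\tfrac1\beta+\xi^\pm$, passing to $u=\xi^+-\xi^-$, $v=\xi^++\xi^-$, Taylor-expanding $e^{-\beta\xi^\mp}$, and solving for the local center manifold $v=h(u)=O(u^2)$, one reduces to a scalar equation $\dot u=c\,u^k+\cdots$ whose leading term determines the local phase portrait asserted in the theorem. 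This last step — carrying the normal-form computation to the order at which the behaviour is decided, together with the center-manifold correction $h$ — is the main obstacle; Steps~1--3 are essentially bookkeeping once Proposition~\ref{propsda} is available, the only mild subtlety being the observation that the required inequality $x_1x_3<1$ is precisely the sign statement $g(x_3)>0$ already proved there.
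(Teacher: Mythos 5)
Your identification and counting of the equilibria and your treatment of the two hyperbolic regimes are correct and essentially coincide with the paper's argument: your eigenvalue criterion $\lambda_\pm=-m\bigl(1\pm\sqrt{r^+r^-}\bigr)$, hence ``stable node iff $r^+r^-<1$, saddle iff $r^+r^->1$'', is exactly the paper's determinant-sign criterion, since at an equilibrium $D=m^2\bigl(1-a^2e^{-r^+-r^-}\bigr)=m^2(1-r^+r^-)$ with $\operatorname{tr}=-2m<0$ and $\operatorname{tr}^2-4D>0$. For the key inequality $x_1x_3<1$ you argue via $g(x_3)>0$ using $x_3>y_2$, whereas the paper uses the function $r(t)=ate^{-t}$ and the inequality $t_2<x_3$; both rest on the interlacing $x_1<y_1<x_2<y_2<x_3$ recorded in the proof of Proposition~\ref{propsda}, so this is only a cosmetic difference.

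The one place where you diverge --- and where your plan is both unfinished and, if finished, would not deliver what you expect --- is $a=e$. First, the paper performs no center-manifold reduction there: it observes $x_0=1$, hence $D(1,1)=0$ with $\operatorname{tr}<0$, and calls the point a saddle-node purely in the trace--determinant sense of \cite{HSD2004}; so to match the theorem as the paper proves it, your linear data (eigenvalues $0$ and $-2m$) already suffice, and the ``main obstacle'' you flag is not part of the intended argument. Second, and more importantly, if you do carry out the reduction you propose, the symmetry $(\rho^+,\rho^-)\mapsto(\rho^-,\rho^+)$, which acts as $u\mapsto-u$, $v\mapsto v$, forces the reduced vector field to be odd in $u$, so the quadratic coefficient in your ``$\dot u=cu^k+\cdots$'' vanishes identically. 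A short computation (with $r^\pm=1+\xi^\pm$, $u=\xi^+-\xi^-$, $v=\xi^++\xi^-$) gives $\dot u=-\tfrac m2uv+\tfrac m{24}u^3+\cdots$, $\dot v=-2mv+\tfrac m4(u^2+v^2)+\cdots$, hence $v=h(u)=\tfrac18u^2+O(u^4)$ and $\dot u=-\tfrac{m}{48}u^3+O(u^5)$ on the center manifold: the degenerate equilibrium is locally attracting, and the bifurcation at $a=e$ is a pitchfork rather than a saddle-node collision. So your Step~4, pushed to completion, would establish something finer than --- and, in strict normal-form terminology, at odds with --- the phrase ``saddle-node equilibrium point''. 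You should either stop at the linearization, as the paper does, or state explicitly that the cubic (not quadratic) term decides the local picture.
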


\begin{proof}
First of all note that properties of stationary points of \eqref%
{homog-kinetic} are the same as the corresponding properties for the system of
equations
\begin{equation}\label{equivsystem}
\begin{cases}
\dfrac{d}{dt}r_{t}^{+}=P\left( r_{t}^{+},r_{t}^{-}\right)  \\[3mm]
\dfrac{d}{dt}r_{t}^{-}=Q\left( r_{t}^{+},r_{t}^{-}\right)
\end{cases}
\end{equation}
where $r_{t}^{\pm }=\beta \rho _{t}^{\pm }$ and
\[
P\left( x,y\right) =-mx+mae^{-y},\qquad Q\left( x,y\right) =-my+mae^{-x}.
\]
Clearly, equilibrium points of \eqref{equivsystem} do not depend on $m$. They
solve \eqref{stat-eqn-mod} and can be obtained from Proposition~\ref{propsda}.

To study the character of the equilibrium points of \eqref{equivsystem}, let
us consider the following matrix
\[
A \left( x,y\right)  :=\left(
\begin{array}{cc}
\dfrac{\partial P}{\partial x}  & \dfrac{\partial P}{\partial y%
} \vphantom{\dfrac{a}{\dfrac{a}{a}}}\\
\dfrac{\partial Q}{\partial x}  & \dfrac{\partial Q}{\partial y%
}
\end{array}\right)= \left(
\begin{array}{cc}
-m & -mae^{-y} \\
-mae^{-x} & -m
\end{array}
\right)
\]
We have
\begin{align}
D(x,y)&:=\det A(x,y) =m^2-a^{2}m^2e^{-x}e^{-y},\label{detne}\\
T(x,y)&:=\mathrm{tr}\, A(x,y)=-2m<0,\notag
\end{align}
and
\[
T ^{2}\left( x,y\right) -4 D \left( x,y\right)
=4m^2a^{2}e^{-x}e^{-y}>0.
\]
Therefore, by e.g.~\cite{HSD2004}, the nature of an equilibrium point of
\eqref{equivsystem} (and thus of \eqref{homog-kinetic}) depends on the sign
of $D(x,y)$ at that point.

For $a< e$, one has from \eqref{detne} that $D(x_0,x_0)>0$ if and only if
$e^{x_0}>a$, which is equivalent to $x_0e^{x_0}>ax_0$ and to $a>ax_0$, $x_0<1$,
where we have used the equality $x_0e^{x_0}=a$ given by Proposition \ref{propsda}.
The latter inequality is true, since the function $h(x)=xe^x$ is strictly
increasing and the equation $xe^x=1$ has a unique solution, $x=1$. Therefore,
a solution to $xe^x=a< e$ should be strictly smaller than $1$. Hence, $
(x_0,x_0)$ is a stable node of \eqref{equivsystem}.

Similarly, for $a>e$, \eqref{detne} yields $D(x_2,x_2)<0$ if and only if
$e^{x_2}<a$, which holds because $x_2>1$ and $x_2e^{x_2}=a$, cf.~Proposition
\ref{propsda}. Hence, $(x_2,x_2)$ is a saddle point of \eqref{equivsystem}.

Still for the case $a>e$, $D(x_1,x_3)=D(x_3,x_1)>0$ if and only if
$e^{x_1+x_3}>a^2$. Since $x_1=ae^{-x_3}$ and $x_3=ae^{-x_1}$ (Proposition
\ref{propsda}), the latter inequality is equivalent to $x_1x_3<1$. To show
that $x_1x_3<1$, let us consider the function $r(t)=ate^{-t}$, which is
strictly decreasing for $t>1$. Since the equation $r(t)=1$ is equivalent to
$p(t)=0$, where $p$ is the function defined in the proof of Proposition
\ref{propsda}, the solutions to $r(t)=1$ are the roots of $p$, that is, $t_1$,
$t_2$. Therefore, it follows from the proof of Proposition \ref{propsda} that
$1<t_2<x_3$, leading to
\[
1=r(t_2)>r(x_3)=ax_3e^{-x_3}=x_1x_3.
\]
Hence, $(x_1,x_3)$ and $(x_3,x_1)$ are also stable nodes of \eqref{equivsystem}.

Finally, for $a=e$, one has $x_0=ae^{-x_0}=e^{1-x_0}$, and thus $x_0=1$.
Therefore, $D(x_0,x_0)=D(1,1)=0$ and one has a saddle-node equilibrium point.
\end{proof}

As a result, one has a bifurcation in the system \eqref{homog-kinetic}
depending on the value of $a=\frac{z}{m}\beta$.

In Appendix below, we present numerical solutions to \eqref{homog-kinetic} for
different values of $a$. Namely, we consider a set of initial values
$\rho_0^\pm$ from the interval $[0,2]$ with step $0.5$ and we draw the
corresponding graphs of, say, $\rho_t^+$ on the time interval $t\in[0,200]$.
Of course, the graphs of $\rho_t^-$ have the same shape. As one can see in
Figure~\ref{fig:test}, there is a unique stable solution for $a<e$ (that is,
$\frac{x_0}{\beta}$). For $a>e$, one has two stable solutions
($\frac{x_1}{\beta}$ and $\frac{x_3}{\beta}$). For $a=e$, stable
solutions do not exist at all. The corresponding phase plane pictures are
presented in Figure~\ref{fig:test2}.

\section*{Acknowledgments}
Financial support of DFG through CRC 701, Research Group ``Stochastic
Dynamics: Mathematical Theory and Applications'' at ZiF, and FCT through
PTDC/MAT/100983/2008, PTDC/MAT-STA/1284/2012 and PEst OE/MAT/UI0209/2013 are
gratefully acknowledged.

\appendix

\section{Appendix}

\begin{figure}[ht]
\begin{subfigure}{.5\linewidth}
\centering
\includegraphics[scale=.35]{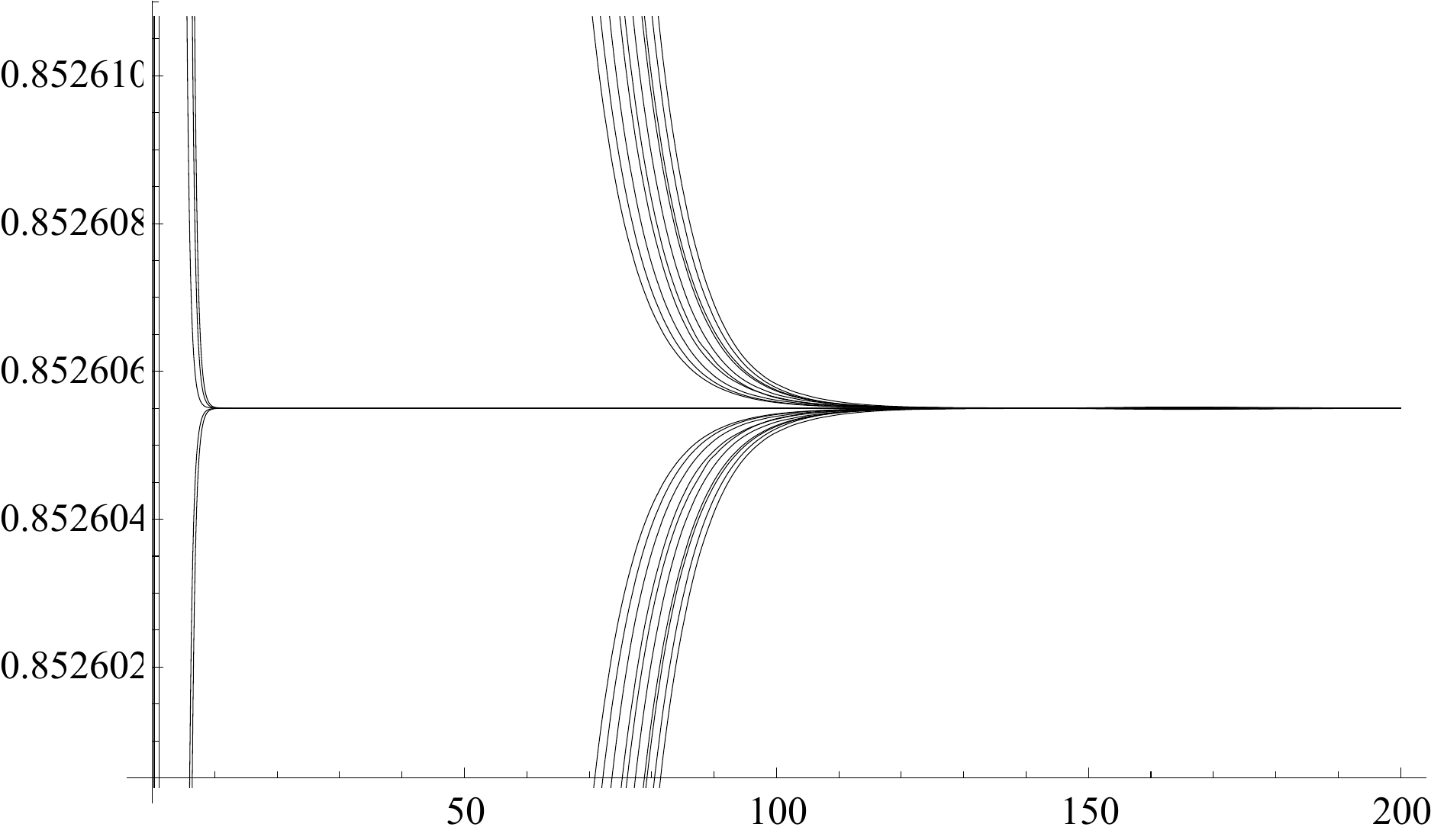}
\caption{$a=2$}
\label{fig:sub1}
\end{subfigure}%
\begin{subfigure}{.5\linewidth}
\centering
\includegraphics[scale=.35]{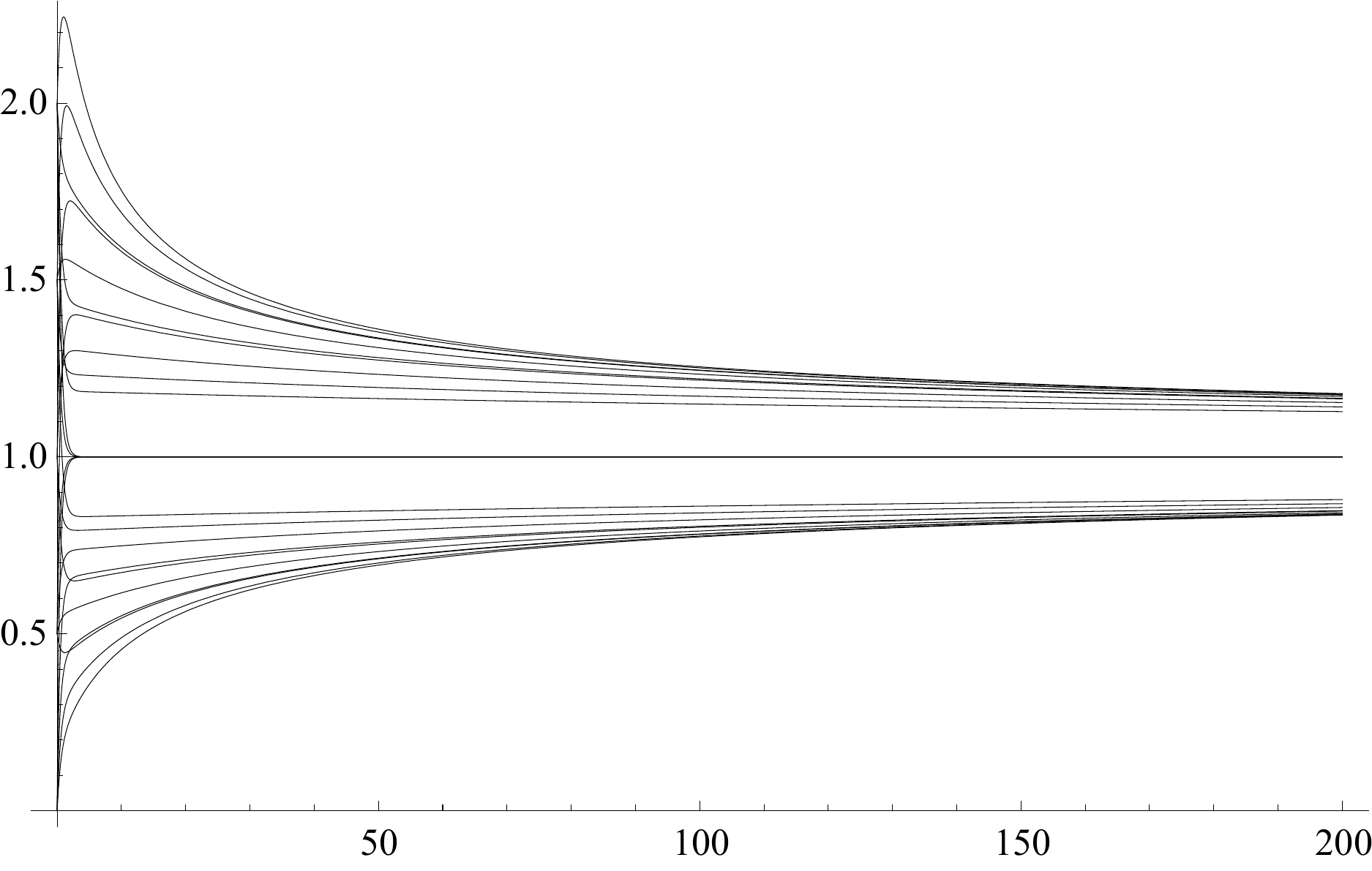}
\caption{$a=e$}
\label{fig:sub2}
\end{subfigure}\\[1ex]
\begin{subfigure}{\linewidth}
\centering
\includegraphics[scale=.35]{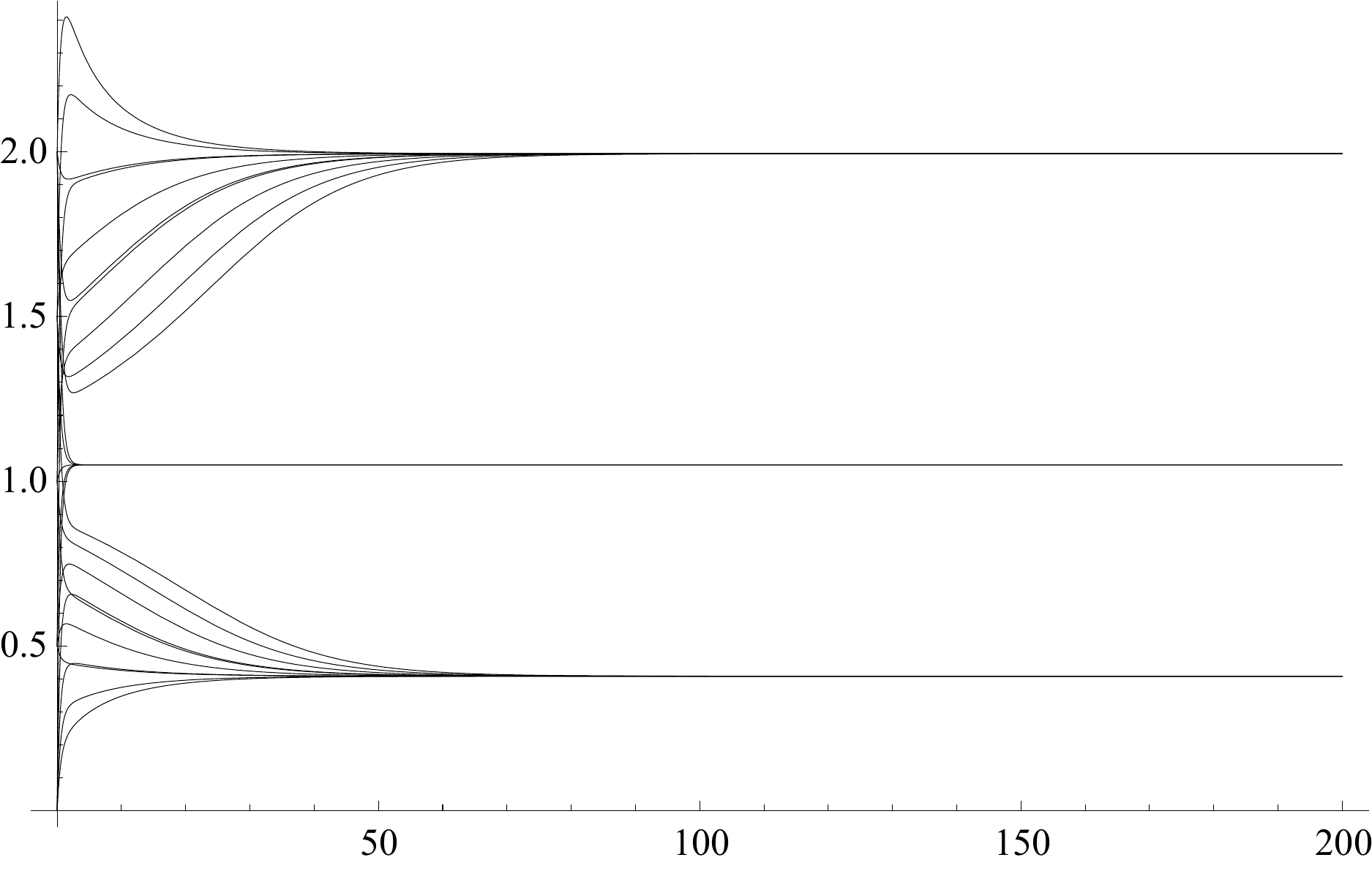}
\caption{$a=3$}
\label{fig:sub3}
\end{subfigure}
\caption{Graphs of $\rho_t^+$ for $\rho_0^\pm\in\{0,0.5,1,1.5,2\}$}
\label{fig:test}
\end{figure}

\begin{figure}[ht]
\begin{subfigure}{.5\linewidth}
\centering
\includegraphics[scale=.35]{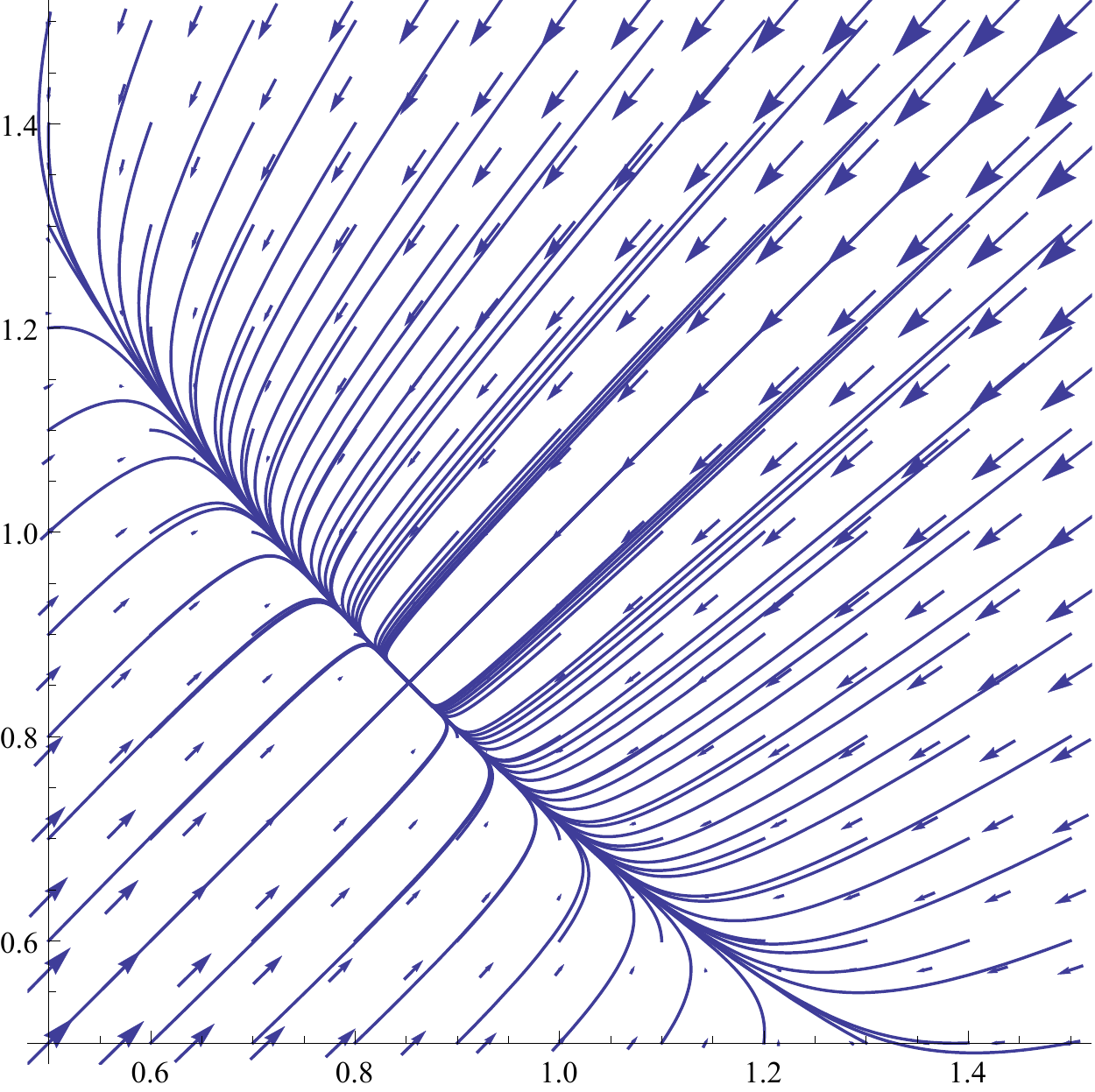}
\caption{$a=2$}
\label{fig2:sub1}
\end{subfigure}%
\begin{subfigure}{.5\linewidth}
\centering
\includegraphics[scale=.35]{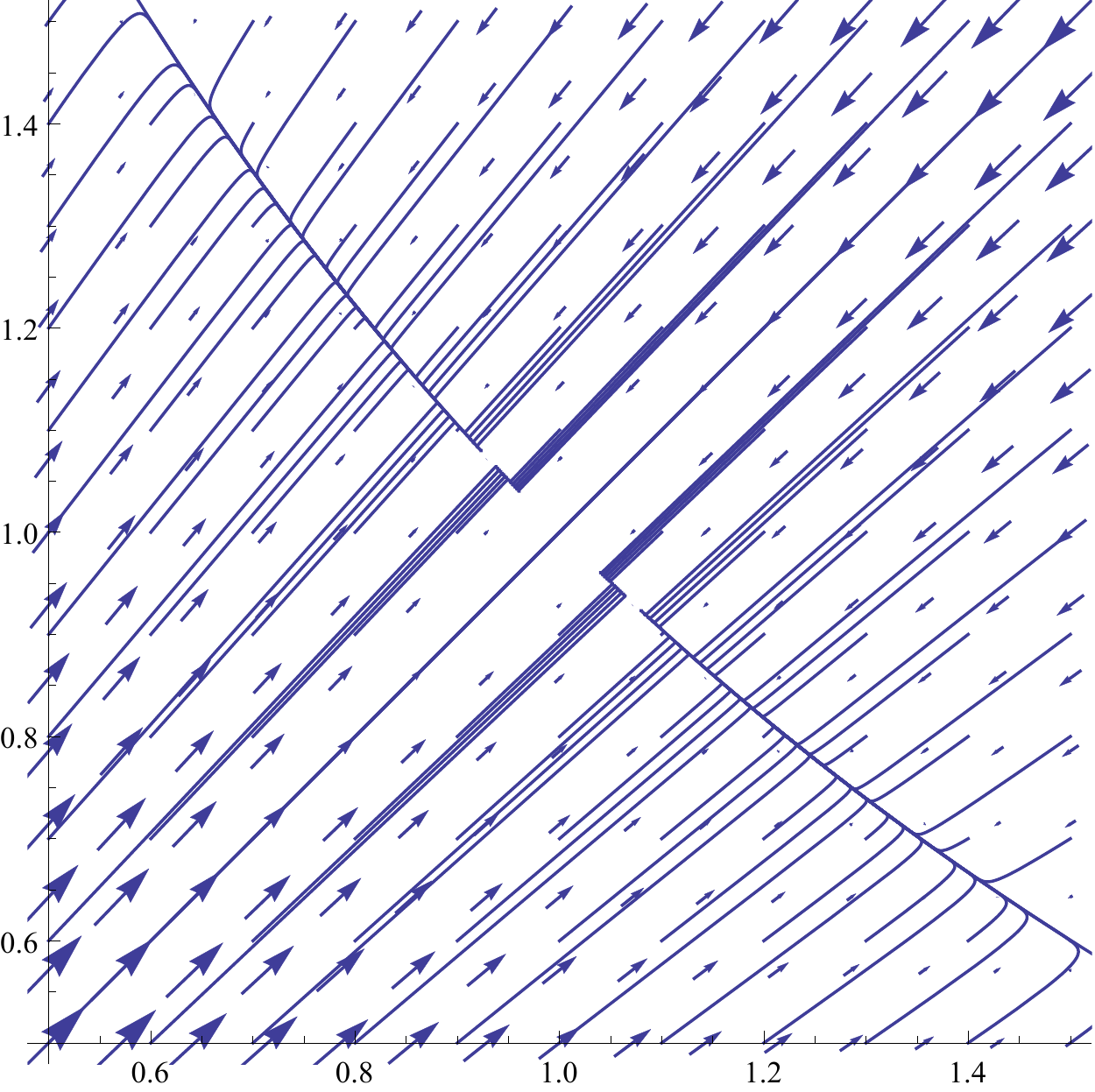}
\caption{$a=e$}
\label{fig2:sub2}
\end{subfigure}\\[1ex]
\begin{subfigure}{\linewidth}
\centering
\includegraphics[scale=.35]{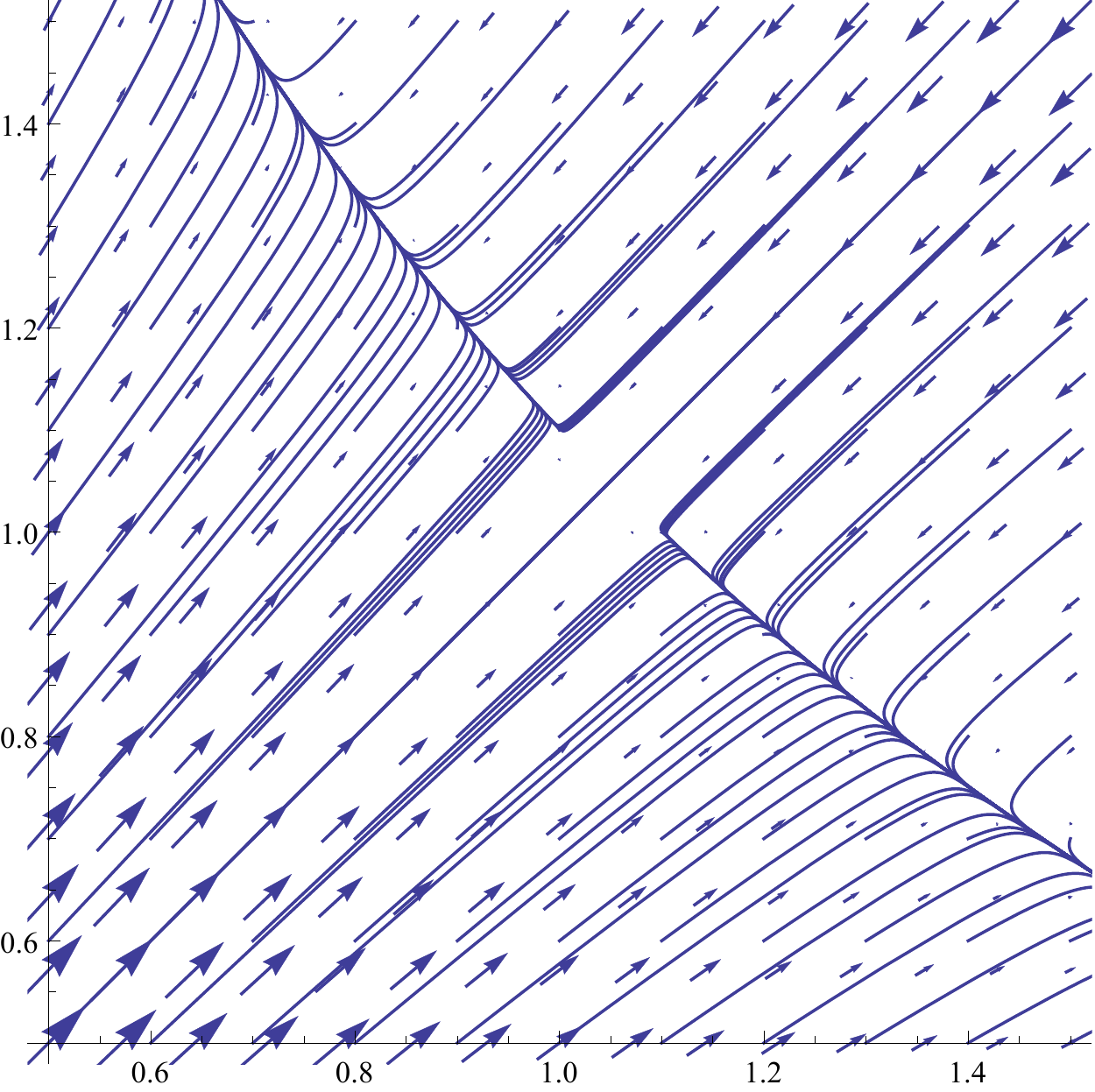}
\caption{$a=3$}
\label{fig2:sub3}
\end{subfigure}
\caption{Phase plane pictures}
\label{fig:test2}
\end{figure}

\begin{theorem}\label{Th1} On a scale of Banach spaces $\{\B_s: 0<s\leq s_0)$
consider the initial value problem
\begin{equation}
\frac{du(t)}{dt}=Au(t),\quad u(0)=u_0\in
\mathbb{B}_{s_0}\label{V1}
\end{equation}
where, for each $s\in(0,s_0)$ fixed and for each pair $s', s''$ such that
$s\leq s'<s''\leq s_0$, $A:\B_{s''}\to\B_{s'}$ is a linear mapping so that
there is an $M>0$ such that for all $u\in\B_{s''}$
\[
\|Au\|_{s'}\leq\frac{M}{s''-s'}\|u\|_{s''}.
\]
Here $M$ is independent of $s',s''$ and $u$, however it might depend
continuously on $s,s_0$.

Then, for each $s\in(0,s_0)$, there is a constant $\delta>0$ (i.e., $\delta=\frac{1}{eM}$)
such that there is a unique function
$u:\bigl[0,\delta(s_0-s)\bigr)\rightarrow\mathbb{B}_s$ which is continuously
differentiable on $\bigl(0,\delta(s_0-s)\bigr)$ in $\mathbb{B}_s$,
$Au\in\mathbb{B}_s$, and solves \eqref{V1} in the time-interval
$0\leq t<\delta(s_0-s)$.
\end{theorem}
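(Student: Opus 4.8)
The plan is to realize the solution as the limit of the Picard iterates for the integral equation $u(t)=u_0+\int_0^t Au(\tau)\,d\tau$, estimating those iterates simultaneously in every space $\B_{s'}$ with $s'<s_0$ and spreading the unavoidable loss of the scale parameter evenly over the iteration steps; the precise constant $\delta=1/(eM)$ will drop out of the elementary inequality $k^{k}\le e^{k}k!$. Concretely, I would set $u^{(0)}(t):=u_0$ and $u^{(n+1)}(t):=u_0+\int_0^t (Au^{(n)})(\tau)\,d\tau$, first checking that all iterates are well defined and continuous with values in $\bigcap_{s'<s_0}\B_{s'}$, since $u_0\in\B_{s_0}$ and each application of $A$ costs only an arbitrarily small piece of the scale. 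Writing $w_n:=u^{(n+1)}-u^{(n)}$, so that $w_0(t)=t\,Au_0$ and $w_n(t)=\int_0^t (Aw_{n-1})(\tau)\,d\tau$, the engine of the proof is the inductive bound
\[
\|w_n(t)\|_{s'}\le\|u_0\|_{s_0}\,\frac{(Mt)^{n+1}}{(s_0-s')^{n+1}}\,\frac{(n+1)^{n+1}}{(n+1)!},\qquad 0<s'<s_0,\ t\ge0,
\]
whose base case $n=0$ is just the hypothesis applied to the pair $(s',s_0)$.

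To pass from $n-1$ to $n$ I would, for an arbitrary intermediate index $s''\in(s',s_0)$, use $\|Aw_{n-1}(\tau)\|_{s'}\le\frac{M}{s''-s'}\|w_{n-1}(\tau)\|_{s''}$ together with the inductive hypothesis, integrate in $\tau$, and then optimize the resulting bound over $s''$: the maximum of $(s''-s')(s_0-s'')^{n}$ on $(s',s_0)$ equals $(s_0-s')^{n+1}n^{n}/(n+1)^{n+1}$, attained at $s''-s'=(s_0-s')/(n+1)$, and it is exactly this choice that cancels the $n^{n}$ produced by the hypothesis, upgrades it to $(n+1)^{n+1}$, and (via $\int_0^t\tau^{n}\,d\tau$) turns $n!$ into $(n+1)!$. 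Invoking $(n+1)^{n+1}\le e^{n+1}(n+1)!$ then gives $\|w_n(t)\|_{s'}\le\|u_0\|_{s_0}\bigl(eMt/(s_0-s')\bigr)^{n+1}$, so for each fixed $s'<s_0$ the series $\sum_n w_n(t)$ converges geometrically in $\B_{s'}$, uniformly on compact subsets of $[0,\delta(s_0-s'))$ with $\delta=1/(eM)$. Its sum $u(t):=u_0+\sum_n w_n(t)$ is a continuous $\B_{s'}$-valued function on that interval; specializing to $s'=s$ already produces the continuous map $u:[0,\delta(s_0-s))\to\B_s$ required.

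Next I would upgrade this to a bona fide solution of \eqref{V1}. Fix $t_0\in(0,\delta(s_0-s))$; since $\delta(s_0-s')\uparrow\delta(s_0-s)$ as $s'\downarrow s$, I can choose $s'\in(s,s_0)$ with $t_0<\delta(s_0-s')$ and then some $s''\in(s,s')$. On a neighbourhood of $t_0$ the iterates $u^{(n)}$ converge to $u$ uniformly in $\B_{s'}$, hence $Au^{(n)}\to Au$ uniformly in $\B_{s''}$ by the scale bound, and letting $n\to\infty$ in $u^{(n+1)}(t)=u_0+\int_0^t Au^{(n)}(\tau)\,d\tau$ yields $u(t)=u_0+\int_0^t Au(\tau)\,d\tau$ in $\B_{s''}\subseteq\B_s$. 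Since $\tau\mapsto Au(\tau)$ is continuous there, the fundamental theorem of calculus makes $u$ continuously differentiable near $t_0$ in $\B_s$ with $\dot u(t)=Au(t)\in\B_{s''}\subseteq\B_s$; as $t_0$ was arbitrary and $u(0)=u_0$, this is the sought solution. For uniqueness I would compare any solution with the Picard limit by the same estimate: writing the difference as $\int_0^t A(\cdot)\,d\tau$ and running the identical induction (with $\sup_{\tau\le t}\|v-u_0\|_{s'}$ as the starting term) forces $\|v(t)-u^{(n)}(t)\|_{s}\to0$, hence $v\equiv u$ on $[0,\delta(s_0-s'))$; this is carried out in the natural class in which a solution, like the constructed one, lies in $\B_{s'}$ for every $s'<s_0$, and a finite covering of the interval, restarted from the already-identified value of the solution, extends the conclusion to all of $[0,\delta(s_0-s))$.

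The step I expect to be the real obstacle is the scale bookkeeping. One must manage the budget $s_0-s'$ so tightly that the optimization over the intermediate index produces \emph{exactly} the combinatorial factor which, after Stirling, makes the majorant series geometric with radius $\delta(s_0-s')$; and at the same time one must ensure the limit function lands in spaces on which $A$ genuinely acts, so that the passage from the integral identity to $\dot u=Au$ is legitimate rather than merely formal and the uniqueness comparison can be closed. The remaining points — measurability and continuity of the Bochner integrands, interchange of limit and integral, the telescoping $u^{(n)}=u_0+\sum_{k<n}w_k$ — are routine.
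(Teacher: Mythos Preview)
The paper does not actually prove Theorem~\ref{Th1}: it is merely \emph{recalled} in the Appendix as a known Ovsjannikov-type result (see the sentence preceding Theorem~\ref{Theorem1}, which refers to \cite{FKO2011a}), so there is no proof in the paper to compare against. That said, your proposal is the standard and correct argument for this theorem: Picard iteration combined with the optimal splitting of the scale budget $s_0-s'$ into $n+1$ equal pieces, which produces the factor $n^n/(n+1)^{n+1}$ and, after the elementary bound $k^k\le e^k k!$, the geometric majorant with radius $\delta(s_0-s')=\tfrac{1}{eM}(s_0-s')$. The only place to be a bit more careful is the uniqueness paragraph: make explicit that any solution in $\mathcal{E}_s$ on $[0,\delta(s_0-s))$ automatically lies in $\mathcal{E}_{s'}$ for some $s'>s$ on each compact subinterval (by the scale inclusion $\mathcal{E}_{s_0}\subset\mathcal{E}_{s'}$ for the initial value and the differential equation itself), so that the inductive comparison with the Picard iterates can indeed be launched from a space strictly larger than $\mathcal{E}_s$.
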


\begin{theorem}\label{Thconv} On a scale of Banach spaces
$\{\B_s: 0<s\leq s_0)$ consider a family of initial value problems
\begin{equation}
\frac{du_\eps(t)}{dt}=A_{\eps}u_\eps(t),\quad u_\eps(0)=u_{\eps}\in
\mathbb{B}_{s_0},\quad \eps\geq0,\label{V1eps}
\end{equation}
where, for each $s\in(0,s_0)$ fixed and for each pair $s', s''$ such that
$s\leq s'<s''\leq s_0$, $A_\eps:\B_{s''}\to\B_{s'}$ is a linear mapping so that
there is an $M>0$ such that for all $u\in\B_{s''}$
\[
\|A_{\eps}u\|_{s'}\leq\frac{M}{s''-s'}\|u\|_{s''}.
\]
Here $M$ is independent of $\eps, s',s''$ and $u$, however it might depend
continuously on $s,s_0$. Assume that there is a $p\in\N$ and for each
$\eps>0$ there is an $N_\eps>0$ such that for each pair $s', s''$,
$s\leq s'<s''\leq s_0$, and all $u\in\B_{s''}$
\[
\|A_{\eps}u-A_0u\|_{s'}\leq \sum_{k=1}^p\frac{N_\eps}{(s''-s')^k}\|u\|_{s''}.
\]
In addition, assume that $\lim_{\eps\rightarrow0}N_\eps=0$ and
$\lim_{\eps\rightarrow0}\|u_{\eps}(0)-u_{0}(0)\|_{s_0}=0$.

Then, for each $s\in(0,s_0)$, there is a constant $\delta>0$ (i.e., $\delta=\frac{1}{eM}$)
such that there is a unique solution
$u_\eps:\left[0,\delta(s_0-s)\right)\to\B_s$, $\eps\geq0$, to each initial
value problem (\ref{V1eps}) and for all $t\in\left[0,\delta(s_0-s)\right)$ we
have
\[
\lim_{\eps\rightarrow0}\|u_{\eps}(t)-u_{0}(t)\|_{s}=0.
\]
\end{theorem}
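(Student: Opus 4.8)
The plan is to separate the two assertions. Existence and uniqueness of each $u_\eps$, $\eps\geq0$, on $[0,\delta(s_0-s))$ with $\delta=\frac1{eM}$ is nothing but Theorem \ref{Th1} applied to $A_\eps$, since all the $A_\eps$ share the single constant $M$; so the substance lies entirely in the convergence $u_\eps(t)\to u_0(t)$ in $\B_s$. The guiding idea is to avoid a Duhamel representation altogether (which would force one to control $u_0(\tau)$ at scales that only survive for short times), and instead compare the two explicit Ovsyannikov series directly.

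For the convergence I would use the series representation produced by the Picard scheme of Theorem \ref{Th1}: for an autonomous linear generator the iterates $u^{(m+1)}(t)=u_\eps(0)+\int_0^tA_\eps u^{(m)}(\tau)\,d\tau$ sum to
\[
u_\eps(t)=\sum_{n=0}^\infty\frac{t^n}{n!}\,A_\eps^{\,n}u_\eps(0),
\]
where $A_\eps^{\,n}u_\eps(0)$ is read through the scale (each application of $A_\eps$ descends one slice), and the series converges in $\B_s$ for $t<\delta(s_0-s)$ by partitioning the gap $s_0-s$ into $n$ equal pieces and using $n^n/n!\leq e^n$; by uniqueness this series is the solution. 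Subtracting the two series and inserting $A_\eps^{\,n}u_0(0)$ gives
\[
u_\eps(t)-u_0(t)=\sum_{n=0}^\infty\frac{t^n}{n!}A_\eps^{\,n}\bigl(u_\eps(0)-u_0(0)\bigr)+\sum_{n=0}^\infty\frac{t^n}{n!}\bigl(A_\eps^{\,n}-A_0^{\,n}\bigr)u_0(0).
\]
The first sum is dominated by the same majorant as the Ovsyannikov series, hence bounded by $C\,\|u_\eps(0)-u_0(0)\|_{s_0}$, which tends to $0$ by hypothesis.

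The crux is the second sum, which I would handle by the telescoping identity
\[
A_\eps^{\,n}-A_0^{\,n}=\sum_{j=0}^{n-1}A_\eps^{\,n-1-j}(A_\eps-A_0)A_0^{\,j},
\]
so that every summand carries exactly one factor $A_\eps-A_0$, and therefore one factor $N_\eps$ via the second hypothesis. To bound a single term $\|A_\eps^{\,n-1-j}(A_\eps-A_0)A_0^{\,j}u_0(0)\|_s$ I would again slice the gap $s_0-s$ into $n$ pieces, charging the bound $M/(\text{slice})$ to each of the $n-1$ ordinary applications of $A_\eps$ or $A_0$ and the more singular bound $\sum_{k=1}^pN_\eps/(\text{slice})^k$ to the one occurrence of $A_\eps-A_0$. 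With equal slices of width $(s_0-s)/n$ this produces a factor $(Mn/(s_0-s))^{n-1}$, a factor $N_\eps\sum_{k\leq p}(n/(s_0-s))^k$, and, after summing over the $n$ choices of $j$ and combining with $t^n/n!$ through $n^n/n!\leq e^n$, a geometric series of ratio $eMt/(s_0-s)$. Since $t<\delta(s_0-s)$ this ratio is $<1$; the extra polynomial weight $n^p$ coming from the higher powers $k\leq p$ of the singular slice does not affect convergence. Hence the second sum is bounded by $C'N_\eps\|u_0(0)\|_{s_0}$, which tends to $0$ since $N_\eps\to0$, and combining the two bounds yields $\|u_\eps(t)-u_0(t)\|_s\to0$ for every $t<\delta(s_0-s)$.

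The step I expect to be the main obstacle is this slicing estimate for the telescoped terms: one must verify that the finite scale budget $s_0-s$ can be distributed over the $n$ factors so that the geometric ratio stays below $1$ on the full interval $[0,\delta(s_0-s))$ despite the source factor $A_\eps-A_0$ being more singular in the scale (order up to $p$) than the operators themselves. The decisive points are that $M$ is a single constant independent of $\eps$ and of the scale pair, and that the excess singularity is only polynomial, contributing a harmless $n^p$ to an otherwise convergent geometric series; optimizing the slice widths (as in the minimization over $r$ in Lemma \ref{Lemma1}) merely improves the constant $C'$ and is not needed for convergence.
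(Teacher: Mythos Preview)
The paper does not actually prove this theorem: Theorem~\ref{Thconv} is merely stated in the Appendix as a result recalled from \cite[Theorem~4.3]{FKO2011a}, so there is no in-paper argument to compare your proposal against.

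That said, your approach is correct and is essentially the standard one for perturbative Ovsjannikov-type results. The existence/uniqueness part is indeed immediate from Theorem~\ref{Th1} with the uniform constant $M$. For the convergence, representing each solution by the absolutely convergent series $u_\eps(t)=\sum_{n\geq0}\frac{t^n}{n!}A_\eps^{\,n}u_\eps(0)$, splitting the difference via $A_\eps^{\,n}u_0(0)$, and telescoping $A_\eps^{\,n}-A_0^{\,n}=\sum_{j=0}^{n-1}A_\eps^{\,n-1-j}(A_\eps-A_0)A_0^{\,j}$ is exactly the right structure. Your equal-slice bookkeeping is accurate: with slices of width $(s_0-s)/n$ the $n-1$ ordinary factors contribute $(Mn/(s_0-s))^{n-1}$, the single $A_\eps-A_0$ contributes $N_\eps\sum_{k\leq p}(n/(s_0-s))^k$, the sum over $j$ gives a factor $n$, and after using $n^n/n!\leq e^n$ one is left with $N_\eps$ times $\sum_n n^{k}\bigl(eMt/(s_0-s)\bigr)^n$, which converges since the ratio is strictly below $1$ on $[0,\delta(s_0-s))$. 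The only point worth making explicit is that the resulting constants $C,C'$ depend on $t$ through this geometric series and blow up as $t\uparrow\delta(s_0-s)$; this is compatible with the conclusion, which is pointwise in $t$.
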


%GATHER{D:/_TeX/MyArticles/actual.bib}
%\bibliographystyle{is-abbrv}
%\bibliography{D:/_TeX/MyArticles/actual}

\def\cprime{$'$}

\end{document}